\documentclass{article}
\usepackage{fullpage}
\usepackage{setspace}
\usepackage[margin=1in]{geometry}
\usepackage{amsmath,amssymb,amsthm,mathtools}
\usepackage{booktabs,tabularx,array}
\usepackage{enumitem}
\usepackage{microtype}
\usepackage{placeins}
\usepackage{xcolor}
\usepackage[mathlines]{lineno}

\usepackage{tikz}
\usetikzlibrary{arrows.meta,positioning}
\tikzset{vf/.style={circle,draw=darkblue,minimum size=6mm,inner sep=1pt,font=\small}, ve/.style={-{Latex[length=2mm]},thick,draw=darkblue}, vd/.style={ve,dashed}, note/.style={font=\small,align=center}, boxnote/.style={draw=darkblue!40,fill=softblue,rounded corners=2pt,inner sep=8pt,align=left,font=\small}}

\usepackage{float}
\usepackage{needspace}
\usepackage{url}
\usepackage[hidelinks]{hyperref}
\usepackage[nameinlink,noabbrev]{cleveref}

\newtheorem{theorem}{Theorem}[section]
\newtheorem{lemma}[theorem]{Lemma}
\newtheorem{proposition}[theorem]{Proposition}
\newtheorem{corollary}[theorem]{Corollary}

\theoremstyle{definition}
\newtheorem{definition}[theorem]{Definition}
\newtheorem{algorithmdef}[theorem]{Algorithm}
\theoremstyle{remark}

\newcommand{\Anc}{\operatorname{Anc}}
\newcommand{\Desc}{\operatorname{Desc}}
\newcommand{\rankv}{\operatorname{rank}}
\newcommand{\sizev}{\operatorname{size}}
\newcommand{\sid}{\operatorname{id}}

\newcommand{\gcl}{\operatorname{cl}_{\Gamma}}

\newcommand{\calF}{\mathcal F}

\newcommand{\bits}{\{0,1\}}

\definecolor{darkblue}{RGB}{25,65,115}
\definecolor{softblue}{RGB}{240,246,252}
\definecolor{darkgreen}{RGB}{27,102,60}
\definecolor{softgreen}{RGB}{239,249,242}

\setlist{topsep=3pt,itemsep=2pt,parsep=1pt}

\renewenvironment{abstract}
  {\begin{center}\bfseries\abstractname\end{center}\begin{quote}\normalsize}
  {\end{quote}}
\hypersetup{pdfauthor={Sankardeep Chakraborty and Seungbum Jo},
  pdftitle={Optimal Labeling Schemes for Gamma-Free Matrices}}
\title{Optimal Labeling Schemes for $\Gamma$-Free Matrices}
\author{%
Sankardeep Chakraborty\\[3pt]
\normalsize The University of Tokyo\\
\normalsize Tokyo, Japan\\
\small\href{mailto:sankardeep.chakraborty@gmail.com}{\texttt{sankardeep.chakraborty@gmail.com}}
\and
Seungbum Jo\\[3pt]
\normalsize Chungnam National University\\
\normalsize Daejeon, Republic of Korea\\
\small\href{mailto:sbjo@cnu.ac.kr}{\texttt{sbjo@cnu.ac.kr}}
}
\date{}

\begin{document}
\hypersetup{pageanchor=false}
\maketitle
\thispagestyle{empty}
\vspace{-2.2em}

\begin{abstract}
We study labeling schemes for $p \times q$ $\Gamma$-free binary matrices,
which do not contain a submatrix of the form
$\Gamma = \begin{pmatrix} 1 & 1 \\ 1 & 0 \end{pmatrix}$.
We give a labeling scheme that uses $O(\log^2 N)$-bit labels
for rows and columns, where $N=p+q$, such that any matrix entry can be determined from the
corresponding two labels in constant time in the word RAM
model.
The worst-case label size matches the counting lower bound
up to constant factors.
We also show that the entrywise least $\Gamma$-free
supermatrix of an arbitrary $p\times q$ binary matrix,
together with its labels, can be constructed in $O(pq)$ time.

Our construction represents a $\Gamma$-free matrix by a sequence
of rooted forests, where each column corresponds to a node and
the $1$-entries of each row form a root path in its corresponding
forest.

For $n$-vertex strongly chordal and chordal bipartite graphs,
our result gives asymptotically optimal $O(\log^2 n)$-bit
adjacency labeling schemes with constant-time queries.
These results answer both representation questions posed by
Spinrad in 1995.
\end{abstract}

\clearpage
\setcounter{page}{1}
\hypersetup{pageanchor=true}

\section{Introduction}
\label{sec:introduction}
A binary matrix $M$ is said to be \textit{$\Gamma$-free} if it does not contain a $2 \times 2$ submatrix of the form $\Gamma = \begin{pmatrix} 1 & 1 \\ 1 & 0 \end{pmatrix}$,
where the submatrix is obtained by selecting two rows and two columns
without changing their order.
$\Gamma$-free matrices are closely related to several well-studied
graph classes. In particular, a binary matrix is totally balanced if and only if
its rows and columns can be permuted independently to make it
$\Gamma$-free~\cite{AnsteeFarber84}. This characterization applies to the biadjacency matrices of chordal bipartite graphs~\cite{GolumbicGoss78,AnsteeFarber84}.
Moreover, a graph is strongly chordal if and only if its closed-neighborhood matrix can be made $\Gamma$-free by applying the same vertex ordering to its rows and columns
~\cite{Farber83,HellHernandezHuangLin21}. 

An \textit{adjacency labeling scheme} assigns a binary label to each vertex
of a graph such that the adjacency of two vertices can be determined
using only their labels~\cite{KannanNaorRudich92}. The number of labeled $n$-vertex graphs in a class gives an information-theoretic lower bound on the label size,
but a matching upper bound does not necessarily exist.
For example, Hatami and Hatami~\cite{HatamiHatami22}
showed that some hereditary graph classes (closed under taking induced subgraphs) with $2^{O(n\log n)}$ labeled graphs require $n^{\Omega(1)}$-bit labels.
More recently, Bonnet, Duron, Sylvester, Zamaraev, and Zhukovskii~\cite{BonnetDuronSylvesterZamaraevZhukovskii24} showed that some monotone graph classes (closed under taking subgraphs) with $2^{O(n\log^2 n)}$ labeled graphs require
$\Omega(\log^3 n)$-bit labels. We refer to \Cref{sec:prior} for a more detailed discussion
of previous results on adjacency labeling schemes.

Spinrad~\cite{Spinrad95} showed that the number of
$m\times m$ $\Gamma$-free matrices is
$2^{\Theta(m\log^2 m)}$ and gave an asymptotically
optimal space representation for these matrices.
However, this representation does not provide efficient access
to individual matrix entries, nor does it directly yield
an adjacency labeling scheme.
Spinrad~\cite[Section~6]{Spinrad95} also asked whether
strongly chordal and chordal bipartite graphs admit
$O(n\log^2 n)$-bit representations supporting efficient
adjacency queries and $O(\log^2 n)$-bit adjacency labeling schemes.
Dahlhaus~\cite{Dahlhaus97} studied representations
of chordal and strongly chordal graphs and proposed an
additional data structure supporting polylogarithmic-time
adjacency queries.
Otachi, Okamoto, and Yamazaki~\cite{OtachiOkamotoYamazaki07}
showed that every $P_6$-free chordal bipartite graph
is a unit grid intersection graph and discussed
the connection between grid intersection representations
and adjacency labeling for chordal bipartite graphs.
However, neither work resolved Spinrad's
adjacency-labeling question for the corresponding graph class.

\paragraph{Our results.}
Our main result is an asymptotically optimal labeling scheme
for $\Gamma$-free matrices, stated in the following theorem.

\begin{theorem}\label{thm:main}
Every $p \times q$ $\Gamma$-free matrix admits row and column
labels of $O(\log^2 N)$ bits, where $N=p+q$, such that any matrix
entry can be determined from the corresponding two labels
in $O(1)$ time in the word RAM model.
The label size is asymptotically optimal in the worst case.
\end{theorem}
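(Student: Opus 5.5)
The plan has three parts. First I extract a forest structure from the $\Gamma$-free condition. Then I encode the evolving forests by divide and conquer over the rows, so that each column label holds $O(\log N)$ fields of $O(\log N)$ bits each. Finally I get the lower bound from Spinrad's count.

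\textbf{Step 1 (structure).} Write $S_r$ for the set of $1$-columns of row $r$. The $\Gamma$-free condition says the following. If rows $i<r$ both contain column $j$, then every column $j'>j$ in $S_i$ also lies in $S_r$. For each row $r$ and each column $j$, let $T_r(j)$ be the union of $S_i\cap(j,q]$ over the rows $i\le r$ with $j\in S_i$. Define the forest $F_r$ on the columns by letting the parent of $j$ be $\min T_r(j)$, with $j$ a root if $T_r(j)=\emptyset$. I will prove by induction on $r$ that $T_r(j)$ is exactly the set of proper ancestors of $j$ in $F_r$. The key observation is that if $j'=\min T_r(j)$, then by the $\Gamma$-free condition $T_r(j)\setminus\{j'\}\subseteq T_r(j')$, so the sets $T_r(\cdot)$ chain into root paths. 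Consequently, if $c_r=\min S_r$, then $S_r$ is exactly the root path of $c_r$ in $F_r$. Hence $M[r][j]=1$ if and only if $j$ is an ancestor of $c_r$ in $F_r$. Going from $F_{r-1}$ to $F_r$ only re-hangs the nodes on the path of $c_r$, and each node moves to a smaller parent. So the forests evolve monotonically.

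\textbf{Step 2 (labels).} I build a balanced binary recursion tree over the row indices $1,\dots,p$, of depth $\lceil\log p\rceil$. For each node $B$ of the recursion tree, corresponding to a row interval $[a,b]$, I consider the forest $F_{a-1}$ at the start of the interval, contracted to the columns touched by rows in $[a,b]$. The ancestor relation in $F_r$, for $r\in[a,b]$, is then determined by two pieces: the ancestor relation in $F_{a-1}$ (an interval, i.e.\ pre/post-order, label of $O(\log N)$ bits), and the changes made inside $[a,b]$, which are handled recursively in the two children.

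A column's label stores, for each of the $O(\log N)$ levels, its interval label in the forest of the unique node at that level whose interval affects it, together with a pointer-free rank. This gives $O(\log^2N)$ bits in total, stored as fixed-width fields. A row's label stores the level $d$ of the leaf-most node where its query is resolved, and the interval label of $c_r$ at that level. A query reads field $d$ of the column label, which is $O(1)$ word operations using shifts and masks, and performs one interval-containment test.

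The main obstacle is Step 2. I must show that the re-hangings inside a row block can be charged so that each column needs only one interval label per level, rather than one per modification. My first attempt will be a heavy-path-style charging: a column's parent changes only toward smaller indices. I will combine this with the balanced splitting, so that within a block the relevant forest is determined by the block's start forest plus the rows of one half. If this fails, I will instead split on columns, or on both coordinates by $N=p+q$.

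\textbf{Step 3 (optimality).} Spinrad's bound gives $2^{\Theta(m\log^2 m)}$ $m\times m$ $\Gamma$-free matrices. A labeling scheme with labels of at most $L$ bits determines the matrix from its $2m$ labels together with the fixed decoder, so $2m(L+1)\ge\Omega(m\log^2 m)$. Hence $L=\Omega(\log^2 N)$ in the worst case.
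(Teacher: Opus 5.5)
\textbf{Verdict.} Your Step~2, which is the whole upper bound, does not work as described. You flag it as the main obstacle, but the proposed fix is only an intention.

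\textbf{Where Step~2 fails.} The premise that each column has ``the unique node at that level whose interval affects it'' is false, because a column can lie in the supports of rows from every block. Take the paper's $\Gamma$-free example $S_i=\{1\}\cup\{q-i+1,\ldots,q\}$. Column $1$ lies in every row, and its parent changes in every row. So it is touched by all $2^d$ blocks at level $d$. Either your uniqueness premise fails, or the column must carry one interval label per touching block, which is $\Theta(p)$ fields at the bottom level.

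The row side has the same problem. Whether $j\in\Anc_{F_r}(c_r)$ is decided by when the relation between $j$ and $c_r$ was created, and that varies with $j$. So a row cannot commit to a single level $d$ ``where its query is resolved''. Moreover, the ancestor relation of $F_r$ is a transitive closure that mixes relations of $F_{a-1}$ with edges created inside the block. One interval-containment test in one contracted forest cannot evaluate that.

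\textbf{The missing idea.} You need a per-column quantity that changes only $O(\log q)$ times over the whole history, even though the column's parent and ancestor set may change $\Theta(q)$ times. The paper gets this in four steps.
\begin{enumerate}
\item The rank $\lfloor\log|\Desc_{F_t}(x)|\rfloor$ is nondecreasing in $t$, because ancestor-monotone updates only enlarge descendant sets. So it changes at most $\log q$ times.
\item The rank-stability theorem: two nodes that keep rank $r$ across $F\preceq F'$ share a solid path in $F$ if and only if they share one in $F'$. If they were incomparable in $F$, their disjoint descendant sets of size at least $2^r$ would give the upper one at least $2^{r+1}$ descendants once they become comparable.
\item This allows persistent solid-path identifiers that a node changes only when its rank rises. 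A column then stores $O(\log q)$ records (time, rank, identifier). A row stores, for each rank, the identifier of the solid path its root path meets and the lowest node of that intersection.
\item A query does a predecessor search on the row's time in the column's records, using a fusion tree packed into the label for $O(1)$ time. It then makes one identifier comparison and one index comparison.
\end{enumerate}
``Heavy-path-style charging'' gestures in this direction, but without something playing the role of the stability theorem you do not get $O(\log^2 N)$-bit labels.

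\textbf{Steps~1 and~3.} Step~3 matches the paper's counting argument. Your forest in Step~1 is exactly the paper's canonical forest, but the claim that $T_r(j)$ equals the set of proper ancestors of $j$ is false. For $q=3$, $S_1=\{1,2\}$, $S_2=\{2,3\}$ (which is $\Gamma$-free), $F_2$ is the path $1\to2\to3$, while $T_2(1)=\{2\}$.

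What your chain property actually gives is weaker but enough. The ancestor relation of $F_r$ is the transitive closure of $j\mapsto T_r(j)$. By $\Gamma$-freeness, $S_r$ is closed under that relation. Together these still yield $S_r=\Anc_{F_r}(c_r)$.
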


Our construction processes the matrix row by row,
updating a forest after each row.
The support of each nonempty row forms a root path
in the resulting forest.
However, an earlier row's support need not remain
a root path in the final forest.
The main challenge is to encode the forest history
within the row and column labels, using only
$O(\log^2 N)$ bits per label.

The labeling scheme also gives asymptotically optimal
$O(\log^2 n)$-bit adjacency labels for $n$-vertex
strongly chordal and chordal bipartite graphs,
supporting constant-time adjacency queries.
Notably, these natural graph classes have superfactorial
but subquadratic speed $2^{\Theta(n\log^2 n)}$,
for which matching labeling bounds do not follow
from counting alone.
Consequently, we answer both representation questions
posed by Spinrad~\cite{Spinrad95}.

\section{Overview of Techniques}\label{sec:overview}

In this section, we outline the main ideas behind our labeling scheme.
We first describe how a $\Gamma$-free matrix can be represented by
a sequence of rooted forests, and then explain how a rank-based
path decomposition allows us to encode the forest history
using short labels.

\subsection{Forest representation of $\Gamma$-free matrices}
\label{sec:intro-structure}

We process the rows of a $\Gamma$-free matrix from top to bottom,
maintaining a rooted forest in which each column corresponds to a node.
For a rooted forest $F$, let $\Anc_F(x)$ and $\Desc_F(x)$ denote
the set of ancestors and descendants of a node $x$, respectively,
both including $x$.
Two nodes are said to be \textit{comparable} if one is an ancestor of the other.
We consider ordered forests, in which each parent
has a larger index than its children.
For two forests $F$ and $F'$ on the same nodes, we write
$F \preceq F'$ if every ancestor relation in $F$ is preserved in $F'$.

Let $F_i$ denote the forest after processing the first $i$ rows,
where $F_0$ consists of isolated nodes.
For each row $i$, we define its \emph{support} as
$S_i=\{x:M(i,x)=1\}$.
For each nonempty row $i$, let $s_i=\min S_i$.
We aim to construct the forests so that
\[
S_i = \Anc_{F_i}(s_i),
\]
and they form an \textit{ancestor-monotone sequence}
\[
F_0 \preceq F_1 \preceq \cdots \preceq F_p.
\]
The following theorem shows that such a representation exists
if and only if the matrix is $\Gamma$-free.

\begin{theorem}\label{thm:intro-characterization}
A binary matrix is $\Gamma$-free if and only if it admits
an ancestor-monotone sequence of ordered forests satisfying
the above properties.
\end{theorem}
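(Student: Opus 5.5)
Both directions go through the ancestor-monotone sequence $F_0 \preceq F_1 \preceq \cdots \preceq F_p$ of ordered forests with $S_i=\Anc_{F_i}(s_i)$ for every nonempty row $i$.

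\textbf{Sufficiency.} Suppose such a sequence exists, and suppose $\Gamma$ occurs in rows $i<j$ and columns $x<y$. Then $M(i,x)=M(i,y)=M(j,x)=1$ and $M(j,y)=0$.
\begin{enumerate}
\item Since $x,y\in S_i=\Anc_{F_i}(s_i)$, the nodes $x$ and $y$ are comparable in $F_i$. Parents have larger index and $x<y$, so $y$ is an ancestor of $x$ in $F_i$.
\item By $F_i\preceq F_j$, $y$ is still an ancestor of $x$ in $F_j$.
\item Since $x\in S_j=\Anc_{F_j}(s_j)$, $x$ is an ancestor of $s_j$. Hence $y\in\Anc_{F_j}(s_j)=S_j$, contradicting $M(j,y)=0$.
\end{enumerate}

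\textbf{Necessity.} Let $M$ be $\Gamma$-free. Build the forests greedily. For a nonempty row $i$, form $F_i$ from $F_{i-1}$ as follows: list $S_i=\{x_1<x_2<\cdots<x_k\}$; for each $t<k$, let $r_t$ be the root of the $F_{i-1}$-tree containing $x_t$; if $r_t\neq x_{t+1}$ and $r_t$ is not already above $x_{t+1}$, attach $r_t$ as a child of $x_{t+1}$. Empty rows leave the forest unchanged. Attaching a root below another node only adds ancestor relations, so $F_{i-1}\preceq F_i$ holds automatically. What must be checked is:
\begin{enumerate}
\item[(a)] the result is an ordered forest, i.e.\ $r_t<x_{t+1}$;
\item[(b)] $\Anc_{F_i}(x_1)$ is exactly $S_i$.
\end{enumerate}

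\textbf{Key invariant.} The obstacle is an invariant relating $F_{i-1}$ to the earlier rows. I would prove by induction that every ancestor relation $u\preceq v$ (with $u<v$) in $F_{i-1}$ is witnessed by some earlier row $j<i$ with $u,v\in S_j$. Moreover, each $F_{i-1}$-tree is exactly the set of columns connected through supports of earlier rows, and its root is the largest such column. Granting this, $\Gamma$-freeness gives the rest.

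\begin{itemize}
\item \emph{Upward closure.} Let $x\in S_i$ and let $v$ be an ancestor of $x$ in $F_{i-1}$ with $v>x$. Some earlier row $j$ contains both $x$ and $v$. Since rows $j<i$ both contain $x$, $\Gamma$-freeness forces $M(i,v)=1$, so $v\in S_i$.
\item \emph{Consequences.} Upward closure gives $\Anc_{F_{i-1}}(x_t)\subseteq S_i$. In particular $r_t\in S_i$, so $r_t\le x_k$, and $r_t$ is the next element of $S_i$ in the chain or lies before $x_{t+1}$.
\item \emph{Comparability.} Two elements of $S_i$ already in the same $F_{i-1}$-tree must be comparable. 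I would verify this with the same $\Gamma$ argument applied along the witnessing row chain, showing that $\Anc_{F_{i-1}}(x_t)$ is an initial chain of $S_i$ above $x_t$.
\end{itemize}

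Consequently the linking step yields a single path $x_1\to\cdots\to x_k$ through all of $S_i$, possibly passing through roots. Nothing outside $S_i$ is added above $x_1$, which gives (b).

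\textbf{Hard step.} I expect the comparability part of the invariant to be the main difficulty. Nonadjacent tree merges might create branching among elements of $S_i$, and ruling this out needs careful induction on $i$ using the tree/root description of $F_{i-1}$ together with $\Gamma$-freeness.
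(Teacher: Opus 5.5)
Your sufficiency direction is correct and matches the paper's argument. The necessity direction fails at the construction itself, not only at the comparability step you flagged. Linking only roots cannot work, because a correct update must sometimes change the parent of a \emph{non-root} node.

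Take the paper's running example, with rows $\{1,3,5\}$, $\{2,4\}$, $\{1,2,3,4,5\}$. Your rule correctly produces $F_2$ with chains $1\to3\to5$ and $2\to4$. For row $3$, however, it gives $r_1=5>x_2=2$, so (a) fails. The only valid $F_3$ is $1\to2\to3\to4\to5$, which reassigns the parents of the non-roots $1$ and $3$.

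A second $\Gamma$-free example breaks (b) instead. With rows $\{1,3\},\{2,3\},\{1,2,3\}$, your rule makes $1$ and $2$ both children of $3$ in $F_2$. Row $3$ then triggers no link: at $t=1$ the root $3$ is already above $x_2=2$, and at $t=2$ the root is $x_3$ itself. Hence $\Anc_{F_3}(1)=\{1,3\}\neq S_3$. This also shows your comparability claim is false: elements of $S_i$ in the same $F_{i-1}$-tree need not be comparable, and $\Anc_{F_{i-1}}(x_t)$ need not be an initial chain of $S_i$.

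Your witness invariant is also too strong. Rows $\{1,2\},\{2,3\}$ give $F_2=1\to2\to3$ under either your rule or the paper's. There $3$ is above $1$, but no row contains both. Only \emph{parent edges} have single witness rows, so your upward-closure step must instead push $M(i,\cdot)=1$ up the path one parent edge at a time, using $\Gamma$-freeness at each edge.

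The paper's construction avoids all of this. Set $A_i=\bigcup_{x\in S_i}\Anc_{F_{i-1}}(x)$, rearrange \emph{all} of $A_i$ into one increasing path rooted at $\max A_i$, and keep every other parent pointer. Since $A_i$ is upward closed, any old edge $x\to y$ with $x\in A_i$ has $y\in A_i$ and $x<y$, so all old ancestor relations survive. New edges lie inside $A_i$, and row $i$ witnesses them once the edge-by-edge argument shows $A_i\subseteq S_i$. Then $A_i=S_i$ and $\Anc_{F_i}(s_i)=S_i$, with no comparability lemma needed.
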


A brief proof sketch is as follows.
For each nonempty row $i$, we arrange the nodes of
$S_i$ and their ancestors in $F_{i-1}$ into a path
in increasing column order, leaving all other parent
pointers unchanged.
Since $M$ is $\Gamma$-free, these ancestors already
belong to $S_i$, and all previous ancestor relations
are preserved.
Conversely, if every nonempty row support is a complete
root path in an ancestor-monotone forest sequence,
the matrix is $\Gamma$-free.
The detailed proof is given in \Cref{sec:dynamic-forest}.

We illustrate the construction with the following $\Gamma$-free matrix:
\[
M =
\begin{pmatrix}
1 & 0 & 1 & 0 & 1 \\
0 & 1 & 0 & 1 & 0 \\
1 & 1 & 1 & 1 & 1 \\
0 & 1 & 1 & 1 & 1
\end{pmatrix}.
\]
The corresponding forest sequence is
\[
\begin{aligned}
F_0 &: (1),\ (2),\ (3),\ (4),\ (5),\\
F_1 &: (1 \to 3 \to 5),\ (2),\ (4),\\
F_2 &: (1 \to 3 \to 5),\ (2 \to 4),\\
F_3 = F_4 &: (1 \to 2 \to 3 \to 4 \to 5),
\end{aligned}
\]
where $x \to y$ means that $y$ is the parent of $x$.

The third row merges the two paths of $F_2$ into a single path
in increasing column order, while preserving their ancestor relations.
Since $S_4=\Anc_{F_3}(2)$, processing the fourth row does not
change the forest.
Thus, the forest sequence preserves all ancestor relations,
and each row support $S_i$ corresponds to a complete root path
in $F_i$, as stated in \Cref{thm:intro-characterization}.

Note that the support of the first row is no longer a complete
root path in $F_4$, showing that the final forest alone is
insufficient to represent all rows.

\subsection{Rank-based path decomposition}
\label{sec:intro-persistence}
To determine whether $M(i,x)=1$, we need to check
whether $x\in\Anc_{F_i}(s_i)$.
One might consider using persistent data
structures~\cite{DriscollSarnakSleatorTarjan89}
to retain the intermediate forests.
However, this requires shared access, which is
unavailable to a decoder using only two labels.
Directly storing parent histories is also too costly:
for example, when
$S_i=\{1\}\cup\{q-i+1,\ldots,q\}$ for $1\le i<q$,
node $1$ changes its parent $\Theta(q)$ times.
Instead, we use a rank-based path decomposition
to encode the forest sequence compactly.

For each node $x$ in $F_t$, we define its \emph{rank} as
$\lfloor\log |\Desc_{F_t}(x)|\rfloor$.\footnote{All logarithms
are base two throughout the paper.}
An edge is called \emph{solid} if its endpoints have the same rank,
and \emph{dashed} otherwise.
The solid edges form vertex-disjoint paths, called \emph{solid paths},
including singleton nodes.
For two forests $F\preceq F'$, a node is \emph{stable}
if its rank is the same in $F$ and $F'$.
We assign an identifier to each solid path, where distinct
solid paths in the same forest have distinct identifiers.
We will show that these identifiers can be assigned so that
a stable node retains its solid-path identifier across each update.

\begin{figure}[H]
\centering
\begin{tikzpicture}[
  x=0.93cm,y=0.93cm,
  vertex/.style={circle,draw=black!70,minimum size=5.2mm,inner sep=0pt,
                 font=\small},
  rankzero/.style={vertex,fill=white},
  rankone/.style={vertex,fill=blue!15},
  ranktwo/.style={vertex,fill=green!18},
  solid/.style={-{Latex[length=1.6mm]},very thick,draw=darkblue},
  dashedparent/.style={-{Latex[length=1.6mm]},semithick,dashed,draw=black!55}
]
  \node[font=\small\bfseries] at (1,1.15) {$F_2$};
  \node[rankzero] (f21) at (0,0.35) {$1$};
  \node[rankone,double,double distance=1pt] (f23) at (1,0.35) {$3$};
  \node[rankone] (f25) at (2,0.35) {$5$};
  \node[rankzero] (f22) at (0.5,-0.55) {$2$};
  \node[rankone] (f24) at (1.5,-0.55) {$4$};
  \draw[dashedparent] (f21) -- (f23);
  \draw[solid] (f23) -- node[above,font=\scriptsize,text=darkblue] {$\alpha$} (f25);
  \draw[dashedparent] (f22) -- (f24);

  \node[font=\Large] at (3.15,0) {$\Longrightarrow$};
  \node[font=\scriptsize,align=center] at (3.15,-0.62) {row $3$\\merges};

  \node[font=\small\bfseries] at (6.45,1.15) {$F_3$};
  \node[rankzero] (f31) at (4.35,0.35) {$1$};
  \node[rankone] (f32) at (5.4,0.35) {$2$};
  \node[rankone,double,double distance=1pt] (f33) at (6.45,0.35) {$3$};
  \node[ranktwo] (f34) at (7.5,0.35) {$4$};
  \node[ranktwo] (f35) at (8.55,0.35) {$5$};
  \draw[dashedparent] (f31) -- (f32);
  \draw[solid] (f32) -- node[above,font=\scriptsize,text=darkblue] {$\alpha$} (f33);
  \draw[dashedparent] (f33) -- (f34);
  \draw[solid] (f34) -- node[above,font=\scriptsize,text=darkblue] {$\beta$} (f35);

  \node[rankzero,minimum size=4mm] at (2.5,-1.25) {};
  \node[anchor=west,font=\scriptsize] at (2.78,-1.25) {rank $0$};
  \node[rankone,minimum size=4mm] at (4.25,-1.25) {};
  \node[anchor=west,font=\scriptsize] at (4.53,-1.25) {rank $1$};
  \node[ranktwo,minimum size=4mm] at (6.0,-1.25) {};
  \node[anchor=west,font=\scriptsize] at (6.28,-1.25) {rank $2$};
  \draw[solid] (7.42,-1.25) -- (8.02,-1.25);
  \node[anchor=west,font=\scriptsize] at (8.12,-1.25) {solid};
\end{tikzpicture}
\caption{Solid paths and their identifiers before and after
processing the third row of the example matrix in \Cref{sec:intro-structure}.}
\label{fig:running-merge}
\end{figure}

In \Cref{fig:running-merge}, nodes $3$ and $5$ form a solid path
of rank $1$ in $F_2$, with identifier $\alpha$.
After processing the third row, node $3$ remains at rank $1$,
and the new solid path consisting of nodes $2$ and $3$
inherits the identifier $\alpha$.
The solid path consisting of nodes $4$ and $5$ has rank $2$
and is assigned a new identifier $\beta$.

Although an update may change the solid paths, it does not change
whether two stable nodes belong to the same solid path.

\begin{theorem}[Rank stability]\label{thm:intro-stability}
Let $F\preceq F'$, and let $u,v$ be stable nodes
of the same rank $r$.
Then $u,v$ belong to the same solid path in $F$
if and only if they belong to the same solid path in $F'$.
\end{theorem}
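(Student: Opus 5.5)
The plan is to reduce the statement to comparability: for nodes of equal rank, lying on a common solid path is the same as being comparable. Comparability is easy to control under $F\preceq F'$. Throughout I use that $F\preceq F'$ preserves ancestor relations, so $\Desc_F(x)\subseteq\Desc_{F'}(x)$ for every node $x$.

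\textbf{Step 1 (structural lemma within one forest).} Fix a forest $G$ and two nodes $u,v$ of the same rank $r$ in $G$. I claim that $u,v$ lie on the same solid path of $G$ if and only if they are comparable in $G$.

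If they lie on the same solid path, they are comparable, because a solid path is a chain of parent edges.

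Conversely, suppose $u\in\Desc_G(v)$. Every node $w$ on the tree path from $u$ to $v$ satisfies
\[
\Desc_G(u)\subseteq\Desc_G(w)\subseteq\Desc_G(v).
\]
Hence $r\le\rankv(w)\le r$, since rank is monotone in subtree size. So every edge on this path joins two nodes of rank $r$ and is solid. Therefore $u$ and $v$ lie in the same solid path.

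\textbf{Step 2 (forward direction).} Suppose $u,v$ lie on the same solid path in $F$. By Step 1 they are comparable in $F$. Since $F\preceq F'$ preserves ancestor relations, they are comparable in $F'$. Both nodes are stable, so both still have rank $r$ in $F'$. Applying Step 1 in $F'$ shows they lie on the same solid path of $F'$.

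\textbf{Step 3 (reverse direction, the only real content).} Suppose $u,v$ lie on the same solid path in $F'$ but not in $F$. By Step 1, they are comparable in $F'$, say $u\in\Desc_{F'}(v)$ with $u\neq v$. Again by Step 1, they are incomparable in $F$, because both have rank $r$ in $F$.

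In a forest, incomparable nodes have disjoint descendant sets. So $\Desc_F(u)$ and $\Desc_F(v)$ are disjoint, and each has size at least $2^r$. This gives
\[
\Desc_{F'}(v)\supseteq\Desc_{F'}(u)\cup\Desc_{F}(v)\supseteq\Desc_F(u)\cup\Desc_F(v),
\]
so $|\Desc_{F'}(v)|\ge 2^{r+1}$. Then $\rankv_{F'}(v)\ge r+1$, which contradicts the stability of $v$.

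The main point to get right is this counting step, which uses the fact that descendant sets only grow under $\preceq$. Everything else is the monotonicity of rank along ancestor chains from Step 1.
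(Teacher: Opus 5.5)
Your proposal is correct and follows essentially the same route as the paper. Your Step~1 is the paper's \Cref{lem:solid-comparability}; your forward direction transfers comparability via ancestor monotonicity. Your reverse direction is the same doubling argument: the disjoint old descendant sets of $u$ and $v$ both lie in $\Desc_{F'}(v)$, forcing $|\Desc_{F'}(v)|\ge 2^{r+1}$ and contradicting the stability of $v$.
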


We briefly outline the proof. Suppose two stable nodes
$u$ and $v$ of rank $r$ belong to different solid paths
in $F$ but the same solid path in $F'$.
Then they were incomparable in $F$, so their descendant
sets were disjoint, each containing at least $2^r$ nodes.
Since they become comparable in $F'$, one of them has
at least $2^{r+1}$ descendants, contradicting its stability.
The other direction follows from the preservation of ancestor
relations.
The formal proof is given in \Cref{sec:rankpaths}.

By \Cref{thm:intro-stability}, we can assign identifiers to solid paths
so that a stable node retains the identifier of its solid path.
A single update may change $\Theta(q)$ parent pointers,
but each node changes its rank and solid-path identifier only
$O(\log(q+1))$ times over the entire forest history.

Although solid-path decompositions are classical in dynamic-tree data structures~\cite{SleatorTarjan83},
\Cref{thm:intro-stability} establishes that nodes retaining their
ranks can also retain their solid-path identifiers across
ancestor-monotone updates.

\subsection{Construction of the labeling scheme}
We now describe how to construct the row and column labels
using the rank-based path decomposition.
For each $t\in\{0,\ldots,p\}$, we call $t$ the \emph{timestamp}
of the forest $F_t$.
For each column $x$, we store a list of triples $(t,r,h)$,
called \emph{records}, where $t$ is the timestamp, $r$ is
the rank of $x$, and $h$ is the identifier of the solid path
containing $x$ in $F_t$.
The first record is stored at timestamp $0$, and a new record
is added whenever the rank of $x$ or the identifier of its
solid path changes.
By \Cref{thm:intro-stability}, the identifier can change only
when the rank increases.
Since the rank increases at most $O(\log(q+1))$ times,
each column stores $O(\log(q+1))$ records.

For each nonempty row $i$, its support $S_i$ forms a complete
root path in $F_i$.
Since the ranks are nondecreasing along a root path,
$S_i$ intersects at most one solid path of each rank.
Thus, the label of row $i$ stores the identifier of each
intersected solid path and the smallest node of $S_i$
on that path.
Since there are $O(\log(q+1))$ possible ranks,
each row stores $O(\log(q+1))$ such pairs.

To answer $M(i,x)$, we first find the most recent record
of column $x$ whose timestamp is at most $i$.
Let $r$ and $h$ be the rank and solid-path identifier
stored in this record, respectively.
We then check whether the row label stores the same identifier $h$
for rank $r$, and whether $x$ is at least the smallest node
stored for that rank.
This determines whether $M(i,x)=1$, since the intersection
of a root path with a solid path is a suffix of the solid path,
whose nodes are ordered by their column indices.

Each column label stores $O(\log(q+1))$ records, and each row label
stores $O(\log(q+1))$ pairs of identifiers and boundaries.
Since each record or pair requires $O(\log N)$ bits,
the total label size is $O(\log^2 N)$ bits.
The predecessor search on the column records can be
performed in constant time in the word RAM model with
multiplication, using an additional data structure
stored within the column label without increasing its
asymptotic size.
Together with the forest representation in
\Cref{thm:intro-characterization}, this establishes
the upper bound of \Cref{thm:main}.
The detailed construction and proof are given in \Cref{sec:labels}.

Our forest representation is related to Spinrad's
prefix-reachability construction~\cite{Spinrad95}.
In particular, each forest $F_i$ is the transitive
reduction of the corresponding reachability relation
(see \Cref{app:spinrad}).
However, our labeling scheme uses persistent solid-path
identifiers to encode the forest history, allowing
each matrix entry to be determined using only two labels.

\paragraph{Organization.}
After the model in \Cref{sec:model},
\Cref{sec:dynamic-forest,sec:rankpaths,sec:labels,sec:optimality}
develop the forest representation, prove rank stability, construct the labels,
and establish optimality.
\Cref{sec:closure,sec:construction,sec:applications,sec:prior}
cover closure, construction time, applications, and related work.
\Cref{app:pseudocode} describes the encoder algorithm and the correspondence
with Spinrad's construction.
\Cref{app:complete-example} gives an example of our
labeling scheme, including the forest construction,
label assignment, query decoding, and closure computation.

\section{Model and Notation}
\label{sec:model}

Let $M$ be a $p\times q$ binary matrix, where $p,q\ge 1$,
and let $N=p+q$.
We index the rows and columns by $[p]$ and $[q]$, respectively,
where $[k]=\{1,\ldots,k\}$.
Unless otherwise stated, the row and column orders are fixed.

\begin{definition}\label{def:gamma-free}
A binary matrix $M$ is $\Gamma$-free if for every
$h<i$ and $a<b$,
\[
M(h,a)=M(h,b)=M(i,a)=1 \implies M(i,b)=1.
\]
\end{definition}

We next formally define the labeling scheme for binary matrices.

\begin{definition}\label{def:relation-labeling}
A \emph{matrix-labeling scheme} assigns a binary label
$\rho_M(i)$ to each row $i$ and a binary label
$\kappa_M(x)$ to each column $x$ of a matrix $M$.
A fixed decoder $D$, independent of $M$, determines
any matrix entry using only the corresponding two labels,
such that
\[
D(\rho_M(i),\kappa_M(x))=M(i,x).
\]
The decoder receives the row label first and the column
label second, and is not given the matrix dimensions or
any other information depending on $M$.
The label size is the maximum length of a row or column label.
\end{definition}

We use the word RAM model with word size
$w=c\lceil\log N\rceil$ for a sufficiently large constant $c$,
supporting standard word operations, including multiplication,
in constant time.
Each label is stored as an array of $w$-bit words, and the decoder
can access any word of the two given labels in constant time.

For a rooted forest $F$, let $\operatorname{Anc}_F(x)$ and
$\operatorname{Desc}_F(x)$ denote the sets of ancestors and
descendants of $x$, respectively, both including $x$.
Two nodes are comparable if one is an ancestor of the other.
For forests $F$ and $F'$ on the same node set, write
$F \preceq F'$ if every ancestor relation in $F$
is preserved in $F'$.
For each row $i$, let $S_i = \{x \in [q] : M(i,x) = 1\}$
and $s_i = \min S_i$ when $S_i \neq \emptyset$.
A rooted forest is \emph{ordered} if every parent has a
larger index than its children.
A set $U$ is \emph{upward closed} in $F$ if
$x\in U$ implies $\Anc_F(x)\subseteq U$.
A \emph{monotone forest representation} of $M$
consists of ordered forests
$F_0\preceq F_1\preceq\cdots\preceq F_p$,
where $F_0$ is edgeless and
$S_i=\Anc_{F_i}(s_i)$ for every nonempty row $i$.

\section{Forest Characterization of $\Gamma$-Free Matrices}
\label{sec:dynamic-forest}

We construct a sequence of forests by processing the rows of
$M$ in order, starting with the edgeless forest $F_0$.
For each row $i$, let $A_i$ consist of the nodes corresponding
to $S_i$ and all their ancestors in $F_{i-1}$:
\[
A_i=\bigcup_{x\in S_i}\Anc_{F_{i-1}}(x).
\]

If $S_i=\varnothing$, set $F_i=F_{i-1}$.
Otherwise, write $A_i = \{a_1 < \cdots < a_k\}$.
We construct $F_i$ by arranging $a_1,\ldots,a_k$
into a path in this order, with $a_k$ as its root.
All nodes outside $A_i$ retain their parent pointers.

\begin{lemma}\label{lem:upward-well-defined}
For every $i$, $A_i$ is upward closed in $F_{i-1}$,
and the update produces an ordered forest $F_i$.
\end{lemma}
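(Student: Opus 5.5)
The plan is a direct check of the definition, by induction on $i$. The inductive hypothesis is that $F_{i-1}$ is an ordered forest.

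\textbf{Upward closure.} Let $x\in A_i$. Then $x\in\Anc_{F_{i-1}}(y)$ for some $y\in S_i$. Every ancestor of $x$ in $F_{i-1}$ is also an ancestor of $y$, so $\Anc_{F_{i-1}}(x)\subseteq \Anc_{F_{i-1}}(y)\subseteq A_i$. This holds for any forest and uses no $\Gamma$-freeness. Since $A_i$ is upward closed, no node of $A_i$ has a parent outside $A_i$. Hence $F_{i-1}$ restricted to $A_i$ is a union of complete root paths and subtrees, and every edge leaving $A_i$ in $F_{i-1}$ goes from a node outside $A_i$ to a parent inside $A_i$.

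\textbf{$F_i$ is a forest.} In $F_i$, each $a_j$ with $j<k$ gets parent $a_{j+1}$, and $a_k$ becomes a root. Every node $z\notin A_i$ keeps its old parent. That parent is either absent or lies outside $A_i$ or inside $A_i$, and all of these are valid nodes. Every node has at most one parent, so it remains to exclude cycles.

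\textbf{Acyclicity via orderedness.} I will show that every parent pointer in $F_i$ goes to a strictly larger index.
\begin{itemize}
\item Inside the new path, $a_{j+1}>a_j$ by construction.
\item For $z\notin A_i$, the parent $\pi(z)$ is unchanged and $\pi(z)>z$ because $F_{i-1}$ is ordered. This holds whether or not $\pi(z)\in A_i$.
\end{itemize}
Following parent pointers therefore strictly increases the index, so no cycle exists. Thus $F_i$ is an ordered forest, and $F_0$ (edgeless) starts the induction. The case $S_i=\varnothing$ is trivial: $A_i=\varnothing$ and $F_i=F_{i-1}$.

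\textbf{Main obstacle.} The only subtle point is that nodes outside $A_i$ may hang below nodes of $A_i$, so we must confirm that re-arranging $A_i$ does not break them. Their pointers are untouched and still point to larger indices, so orderedness handles this directly. One might also want the edges from $A_i$ to point inside $A_i$; upward closure guarantees this, so the new path plus the untouched pointers cover every node consistently. No $\Gamma$-freeness is needed for this lemma; it only matters later, for $S_i=A_i$ and for ancestor-monotonicity.
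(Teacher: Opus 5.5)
Your proof is correct and follows the paper's argument. Upward closure holds because every ancestor of an ancestor of some $s\in S_i$ is again an ancestor of $s$, and $F_i$ is ordered because the new path edges increase by construction while untouched pointers already increased in $F_{i-1}$; you also make explicit two points the paper leaves implicit, namely the induction on $i$ that makes $F_{i-1}$ ordered and the observation that strictly increasing parent indices rule out cycles.
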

\begin{proof}
Every node $x \in A_i$ is an ancestor of some
$s \in S_i$ in $F_{i-1}$.
Thus, all ancestors of $x$ also belong to $A_i$,
showing that $A_i$ is upward closed.
In $F_i$, every node has at most one parent,
and every parent has a larger index than its child.
Hence $F_i$ is an ordered forest.
\end{proof}

\begin{lemma}\label{lem:ancestor-permanence}
For every $i$, $F_{i-1}\preceq F_i$.
\end{lemma}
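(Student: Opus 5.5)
We must show that if $y\in\Anc_{F_{i-1}}(x)$, then $y\in\Anc_{F_i}(x)$. If $S_i=\varnothing$ there is nothing to prove, so assume $S_i\neq\varnothing$. The proof splits into cases according to whether $x$ and $y$ lie in $A_i$. By \Cref{lem:upward-well-defined}, $A_i$ is upward closed in $F_{i-1}$. Hence if $x\in A_i$, then $y\in A_i$; and if $y\notin A_i$, then the whole $F_{i-1}$-path from $x$ up to $y$ avoids $A_i$. This leaves three cases.

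\emph{Both $x,y\in A_i$.} Since $F_{i-1}$ is ordered and $y$ is an ancestor of $x$, we have $x\le y$. In $F_i$ the nodes of $A_i$ form the path $a_1\to\cdots\to a_k$ in increasing order. Therefore $y$ is an ancestor of $x$ in $F_i$.

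\emph{Neither $x$ nor $y$ lies in $A_i$.} Every node on the $F_{i-1}$-path from $x$ to $y$ lies outside $A_i$, so none of them is in $A_i$. Nodes outside $A_i$ keep their parent pointers, so this path is unchanged in $F_i$. Hence $y\in\Anc_{F_i}(x)$.

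\emph{$x\notin A_i$ and $y\in A_i$.} Follow the $F_{i-1}$-path from $x$ upward and let $z$ be the first node on it that lies in $A_i$. Such a $z$ exists because $y$ is on the path. By upward closure, every node from $z$ up to $y$ is in $A_i$, so $z\le y$ and both lie in $A_i$. The segment from $x$ to $z$ consists of nodes outside $A_i$, except for $z$ itself. Each node on it other than $z$ keeps its parent pointer, so $z\in\Anc_{F_i}(x)$. By the first case, $y\in\Anc_{F_i}(z)$, and transitivity gives $y\in\Anc_{F_i}(x)$.

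The only subtle step is the third case, since it is the one place where old and new parent pointers interact. The key facts are that only nodes of $A_i$ change their parents, and that the path segment below $z$ lies outside $A_i$. It is also worth confirming that $F_i$ is a forest, i.e.\ contains no cycles; this holds because it is ordered, by \Cref{lem:upward-well-defined}.
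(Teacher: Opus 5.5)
Your proof is correct and uses the same ingredients as the paper's: upward closure of $A_i$, unchanged parent pointers for nodes outside $A_i$, and the increasing order of the new path on $A_i$. The paper arranges these edge by edge instead of pair by pair: it shows each parent edge $x\to y$ of $F_{i-1}$ remains an ancestor relation in $F_i$, and transitivity then covers every ancestor pair, so your mixed third case is not needed there.
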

\begin{proof}
Consider any parent edge $x \to y$ in $F_{i-1}$.
If $x \notin A_i$, its parent pointer remains unchanged.
Otherwise, $y \in A_i$ by upward closure.
Since $x < y$, the new increasing path on $A_i$
preserves $y$ as an ancestor of $x$.
Thus, every parent edge in $F_{i-1}$ remains an
ancestor relation in $F_i$, implying
$F_{i-1} \preceq F_i$.
\end{proof}

We next show that $A_i\subseteq S_i$ when $M$ is $\Gamma$-free.
\begin{lemma}\label{lem:original-witness}
Suppose $M$ is $\Gamma$-free.
For every $i$, each parent edge $x\to y$ of $F_i$
has a witness row $h\le i$ such that
$M(h,x)=M(h,y)=1$.
Moreover, $A_i\subseteq S_i$.
\end{lemma}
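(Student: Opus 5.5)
We argue by induction on $i$, proving both claims together. For $i=0$ the forest $F_0$ is edgeless, so the witness claim holds vacuously. Now let $i\ge 1$ and assume every parent edge of $F_{i-1}$ has a witness row $h\le i-1$.

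First we show $A_i\subseteq S_i$. If $S_i=\varnothing$ there is nothing to prove. Otherwise, take $x\in S_i$ and an ancestor $z$ of $x$ in $F_{i-1}$. Since $F_{i-1}$ is ordered, the path from $x$ up to $z$ is $x=x_0\to x_1\to\cdots\to x_m=z$ with $x_0<x_1<\cdots<x_m$. We show by induction on $j$ that $x_j\in S_i$. Suppose $x_{j}\in S_i$. The edge $x_j\to x_{j+1}$ has a witness row $h\le i-1<i$ with $M(h,x_j)=M(h,x_{j+1})=1$. We also have $M(i,x_j)=1$ and $x_j<x_{j+1}$. \Cref{def:gamma-free}, applied with $a=x_j$ and $b=x_{j+1}$, then gives $M(i,x_{j+1})=1$. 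Hence $z\in S_i$, and so $A_i\subseteq S_i$. Since trivially $S_i\subseteq A_i$, in fact $A_i=S_i$.

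Next we verify the witness claim for $F_i$. Every parent edge of $F_i$ is of one of two kinds.
\begin{itemize}
\item It is an unchanged edge $x\to y$ with $x\notin A_i$. This edge was already present in $F_{i-1}$, so it has a witness $h\le i-1$ by the induction hypothesis.
\item It is a new path edge $a_j\to a_{j+1}$ inside $A_i$. Both endpoints lie in $A_i\subseteq S_i$, so row $h=i$ itself is a witness.
\end{itemize}
This completes the induction.

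The key step is making the $\Gamma$-free condition apply along an ancestor chain. It works because the witness rows of old edges lie strictly above row $i$, and because the ordering of the forest guarantees the column order $a<b$ required in \Cref{def:gamma-free}. This is why we strengthen the induction hypothesis with the witness statement, rather than trying to prove $A_i\subseteq S_i$ on its own.
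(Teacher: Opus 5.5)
Your proposal is correct and follows essentially the same route as the paper. Both use a joint induction on $i$: membership in $S_i$ is pushed up the increasing ancestor chain in $F_{i-1}$ using the earlier witness rows and the $\Gamma$-free condition, then new path edges are witnessed by row $i$ and preserved edges by the induction hypothesis.
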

\begin{proof}
We prove the claims by induction on $i$, starting with
the edgeless forest $F_0$.
For any $y\in A_i$, choose $s\in S_i$ such that
$y\in\Anc_{F_{i-1}}(s)$, and consider the path
$s=x_0<x_1<\cdots<x_t=y$ in $F_{i-1}$.
Since $s\in S_i$, we have $M(i,x_0)=1$.
For each parent edge $x_j\to x_{j+1}$, the induction
hypothesis gives a witness row $h<i$ with
$M(h,x_j)=M(h,x_{j+1})=1$.
By the $\Gamma$-free property, $M(i,x_j)=1$
implies $M(i,x_{j+1})=1$.
Thus $y\in S_i$, proving $A_i\subseteq S_i$.

Every newly installed parent edge $x\to y$ satisfies
$x,y\in A_i\subseteq S_i$, so
$M(i,x)=M(i,y)=1$. Every preserved parent edge has a witness row by
the induction hypothesis.
\end{proof}

\begin{corollary}\label{cor:direct-exactness}
If $M$ is $\Gamma$-free, then $A_i=S_i$ for every row $i$.
Moreover, $S_i=\Anc_{F_i}(s_i)$ for every nonempty row $i$.
\end{corollary}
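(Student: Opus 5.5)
The plan is to obtain both equalities directly from the definitions of $A_i$ and $F_i$, using \Cref{lem:original-witness} for one inclusion and the trivial containment for the other. For the first claim, the lemma already gives $A_i\subseteq S_i$. For the reverse inclusion, every $x\in S_i$ lies in $\Anc_{F_{i-1}}(x)$, because ancestor sets include the node itself. Hence $x\in A_i$ by the definition $A_i=\bigcup_{x\in S_i}\Anc_{F_{i-1}}(x)$, so $S_i\subseteq A_i$ and $A_i=S_i$. This holds for every row, including empty ones, where both sides are empty.

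For the second claim, fix a nonempty row $i$ and write $A_i=S_i=\{a_1<\cdots<a_k\}$, so that $a_1=\min S_i=s_i$. By construction, $F_i$ arranges $a_1,\ldots,a_k$ into a path with parent pointers $a_j\to a_{j+1}$ and makes $a_k$ a root. Following parent pointers from $s_i=a_1$ in $F_i$ therefore visits exactly $a_1,a_2,\ldots,a_k$ and then stops, since $a_k$ has no parent. The only care needed is to confirm that $a_k$ has no parent in $F_i$ and that no pointer of a node in $A_i$ was left unchanged; both follow directly from the update rule. Consequently $\Anc_{F_i}(s_i)=A_i=S_i$.

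I do not expect a real obstacle. The only subtle point is that the second claim relies on $A_i=S_i$ and not merely $A_i\subseteq S_i$: if $A_i$ were a proper subset of $S_i$, the root path from $s_i$ would miss part of $S_i$.
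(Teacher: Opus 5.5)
Your proposal is correct and follows essentially the same argument as the paper. You get $S_i\subseteq A_i$ from reflexivity of ancestor sets and $A_i\subseteq S_i$ from \Cref{lem:original-witness}; then $s_i=\min A_i$ is the bottom of the increasing root path installed on $A_i$, so $\Anc_{F_i}(s_i)=A_i=S_i$.
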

\begin{proof}
Since every node is its own ancestor, $S_i\subseteq A_i$.
The preceding lemma gives $A_i\subseteq S_i$, so $A_i=S_i$.
For every nonempty row $i$, the update arranges $A_i$
into an increasing path whose maximum is a root.
Since $s_i=\min S_i=\min A_i$, it is the first node
of this path. Hence $S_i=\Anc_{F_i}(s_i)$.
\end{proof}

The forest sequence preserves ancestor relations and represents
each nonempty row as a root path.
We now prove \Cref{thm:intro-characterization}.

\begin{proof}[Proof of \Cref{thm:intro-characterization}]
If $M$ is $\Gamma$-free, the construction above gives
$F_0\preceq\cdots\preceq F_p$ by \Cref{lem:ancestor-permanence},
and $S_i=\Anc_{F_i}(s_i)$ for every nonempty row $i$
by \Cref{cor:direct-exactness}.

Conversely, suppose $M$ admits an ancestor-monotone sequence
of ordered forests $F_0\preceq\cdots\preceq F_p$,
where $F_0$ consists of isolated nodes and
$S_i=\Anc_{F_i}(s_i)$ for every nonempty row $i$
(i.e., $M$ has a monotone forest representation).
Let $h<i$ and $a<b$ satisfy
\[
a,b\in S_h,\qquad a\in S_i.
\]
Since $S_h=\Anc_{F_h}(s_h)$ and $F_h$ is ordered,
$b$ is an ancestor of $a$ in $F_h$.
Ancestor monotonicity preserves this relation in $F_i$.
Since $S_i=\Anc_{F_i}(s_i)$ contains $a$,
it also contains $b$.
Thus, $M$ is $\Gamma$-free.
\end{proof}

\section{Rank Stability and Persistent Identifiers}
\label{sec:rankpaths}
Consider an ancestor-monotone sequence of rooted forests
$F_0\preceq\cdots\preceq F_p$ on $q$ nodes.
For each node $x$ and time $t$, we use the notation
\[
\sizev_t(x)=|\Desc_{F_t}(x)|,
\qquad
\rankv_t(x)=\lfloor\log\sizev_t(x)\rfloor,
\qquad
L=\lfloor\log q\rfloor.
\]
As before, a parent edge is \emph{solid} if its endpoints
have the same rank, and \emph{dashed} otherwise.

The rank definition ensures that a subtree containing two
disjoint rank-$r$ subtrees has rank at least $r+1$.
We first show that ranks are nondecreasing along the
forest sequence.

\begin{lemma}\label{lem:rank-monotonicity}
For every node $x$, $\sizev_t(x)$ and $\rankv_t(x)$
are nondecreasing with $t$.
Consequently, $\rankv_t(x)$ changes at most $L$ times.
\end{lemma}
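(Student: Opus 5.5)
The plan is to prove monotonicity of descendant sets directly from the definition of $\preceq$; the statements about rank then follow at once. Fix $t\ge 1$ and a node $x$. Take any $y\in\Desc_{F_{t-1}}(x)$. Then $x$ is an ancestor of $y$ in $F_{t-1}$. Since $F_{t-1}\preceq F_t$, every ancestor relation of $F_{t-1}$ is preserved in $F_t$, so $x$ is still an ancestor of $y$ in $F_t$. This gives $y\in\Desc_{F_t}(x)$. Hence $\Desc_{F_{t-1}}(x)\subseteq\Desc_{F_t}(x)$, and so $\sizev_{t-1}(x)\le\sizev_t(x)$. Chaining over $t$ shows that $\sizev_t(x)$ is nondecreasing. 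Because $\lfloor\log(\cdot)\rfloor$ is monotone, $\rankv_t(x)$ is nondecreasing as well.

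For the bound on the number of changes, observe that every descendant set contains $x$ itself and lies inside the $q$-node set. Therefore $1\le\sizev_t(x)\le q$, which gives $0\le\rankv_t(x)\le\lfloor\log q\rfloor=L$. The rank is a nondecreasing integer sequence confined to $\{0,\ldots,L\}$. Each change raises it by at least one, so it can change at most $L$ times.

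No step here is a real obstacle. The only point to watch is that $\preceq$ is defined through ancestor relations, not parent edges, and descendant containment follows from exactly that definition. Equality of the node sets across all the forests ensures the bound $\sizev_t(x)\le q$ is uniform in $t$.
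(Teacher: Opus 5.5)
Your proposal is correct and follows essentially the same route as the paper: ancestor preservation under $\preceq$ gives $\Desc_{F_{t-1}}(x)\subseteq\Desc_{F_t}(x)$, so size and rank are nondecreasing. Confinement of ranks to $\{0,\ldots,L\}$ then bounds the number of rank changes by $L$. Your explicit justification that $1\le\sizev_t(x)\le q$ spells out a step the paper leaves implicit.
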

\begin{proof}
If $z\in\Desc_{F_{t-1}}(x)$, then
$x\in\Anc_{F_{t-1}}(z)$.
By ancestor monotonicity,
$x\in\Anc_{F_t}(z)$, so
$z\in\Desc_{F_t}(x)$.
Therefore, $\sizev_t(x)$ and $\rankv_t(x)$ are nondecreasing.
Since ranks lie in $\{0,\ldots,L\}$,
each rank changes at most $L$ times.
\end{proof}

We next consider the structure of solid edges.
A child of a node is called a \emph{solid child} if
the edge connecting it to its parent is solid.

\begin{lemma}\label{lem:solid-paths}
Every node in $F_t$ has at most one solid child.
Consequently, the solid edges partition $F_t$ into
vertex-disjoint paths, including singleton nodes.
Every root path intersects at most one solid path of each
rank and at most $L+1$ solid paths in total.
\end{lemma}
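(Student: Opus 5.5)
The plan is to derive all three claims from one size observation: if a node has two distinct children $c_1,c_2$ of rank $r$, then the node has rank at least $r+1$. Fix $t$ and write $\rankv$, $\sizev$, $\Desc$ for the quantities in $F_t$.

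\textbf{At most one solid child.} First, ranks are nondecreasing along parent edges. If $y$ is the parent of $x$, then $\Desc(x)\subseteq\Desc(y)$, so $\sizev(x)\le\sizev(y)$ and $\rankv(x)\le\rankv(y)$. Now suppose a node $y$ of rank $r$ had two distinct solid children $c_1,c_2$, both of rank $r$. Distinct children have disjoint descendant sets, and each set has at least $2^r$ nodes. Both sets lie in $\Desc(y)$, so
\[
\sizev(y)\ge 1+2^r+2^r>2^{r+1},
\]
which gives $\rankv(y)\ge r+1$, a contradiction. Hence every node has at most one solid child.

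\textbf{Partition into solid paths.} Consider the subgraph formed by the solid edges. In it every node has at most one solid parent (its own parent edge, if that edge is solid) and at most one solid child. It is also acyclic, being a subgraph of a forest. Each connected component of this subgraph is therefore a path, possibly a single node. These components are vertex-disjoint and cover all nodes, which gives the claimed partition. Moreover, all nodes of a solid path share one rank, since every edge on it joins nodes of equal rank.

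\textbf{Root paths.} Let $P=\Anc(x)$, traversed from $x$ up to its root. Along $P$ the ranks are nondecreasing by the observation above. Every edge of $P$ whose endpoints have equal rank is solid. So the nodes of $P$ with a given rank $r$ form a contiguous segment of $P$ joined by solid edges, and that segment lies inside a single solid path. Two different solid paths of rank $r$ cannot both meet $P$: the rank-$r$ nodes of $P$ form one connected solid segment, so they all belong to one solid path. Since ranks lie in $\{0,\ldots,L\}$, the path $P$ meets at most $L+1$ solid paths.

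\textbf{Main obstacle.} The only delicate step is the contiguity argument in the last part. Monotonicity of ranks along $P$ is what prevents rank $r$ from reappearing after a higher rank, and so rules out two separate rank-$r$ segments on $P$. The counting inequality in the first part is routine once the disjointness of the children's descendant sets is noted.
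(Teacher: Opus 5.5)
Your proof is correct and follows essentially the same route as the paper. The doubling inequality $\sizev(y)\ge 1+2^r+2^r>2^{r+1}$ rules out two solid children, and monotonicity of ranks along a root path makes the rank-$r$ nodes one contiguous solid segment. Your explicit argument that every node has solid degree at most two in an acyclic subgraph just spells out the path-partition step that the paper states in one line.
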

\begin{proof}
Suppose a node $z$ of rank $r$ has two distinct solid
children $x$ and $y$.
Their descendant sets are disjoint, and each contains
at least $2^r$ nodes. Therefore,
\[
\sizev_t(z)\ge 1+\sizev_t(x)+\sizev_t(y)
\ge 1+2^r+2^r>2^{r+1},
\]
contradicting $\rankv_t(z)=r$.
Hence the solid edges form vertex-disjoint paths.

Along any root path, subtree sizes and thus ranks are
nondecreasing. Therefore, nodes of the same rank form
a contiguous segment belonging to one solid path.
Since ranks lie in $\{0,\ldots,L\}$, every root path
intersects at most $L+1$ solid paths.
\end{proof}

We next characterize solid paths in terms of
comparability and rank.

\begin{lemma}\label{lem:solid-comparability}
Two nodes in $F_t$ lie on the same solid path if and only if
they are comparable and have the same rank.
\end{lemma}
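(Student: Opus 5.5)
We prove the two directions separately, relying on the structure established in \Cref{lem:solid-paths}: solid edges form vertex-disjoint paths, and ranks are nondecreasing along root paths.

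\emph{Forward direction.} Suppose $u$ and $v$ lie on the same solid path $P$ in $F_t$. Every edge of $P$ is solid, so its two endpoints have equal rank. Walking along $P$ edge by edge therefore shows that all nodes of $P$ share a single rank; in particular $u$ and $v$ have the same rank. The path $P$ is a chain of parent edges, so one of $u,v$ is reached from the other by repeatedly taking parents. Hence they are comparable.

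\emph{Converse direction.} Suppose, without loss of generality, that $v\in\Anc_{F_t}(u)$ and $\rankv_t(u)=\rankv_t(v)=r$. Consider the root path from $u$ to $v$,
\[
u=x_0,x_1,\ldots,x_m=v.
\]
Descendant sets grow going up this path: $\Desc_{F_t}(x_j)\subseteq\Desc_{F_t}(x_{j+1})$. Consequently $\sizev_t$, and hence $\rankv_t$, is nondecreasing along the path. Since the endpoints both have rank $r$, every $x_j$ has rank exactly $r$. So each edge $x_j\to x_{j+1}$ joins two nodes of rank $r$ and is solid. The whole segment from $u$ to $v$ is then a path of solid edges. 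Because the solid paths are the connected components of the solid-edge subgraph (\Cref{lem:solid-paths}), $u$ and $v$ lie on the same solid path.

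No step here is a real obstacle. The only point requiring care is the converse, which needs all intermediate nodes to have rank exactly $r$. This follows from monotonicity of ranks along root paths squeezed between the two equal endpoint ranks. It is also what rules out the segment passing through a dashed edge.
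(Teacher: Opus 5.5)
Your proof is correct and follows the paper's argument: equal ranks and comparability along a solid path in one direction, and in the other, rank monotonicity along the root path forcing every intermediate node to have rank $r$, so every edge between $u$ and $v$ is solid. You spell out a few details the paper leaves implicit, such as the solid path being a chain of parent edges and solid paths being the connected components of the solid-edge subgraph.
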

\begin{proof}
Nodes on a solid path are comparable with equal ranks.
Conversely, if $u$ is a descendant of $v$ and both have
rank $r$, rank monotonicity along their path forces every
intermediate rank to equal $r$. All its edges are solid.
\end{proof}

\begin{figure}[H]
\centering
\begin{tikzpicture}[x=1cm,y=1cm]
\node[note] at (2.6,2.6) {\textbf{Before: disjoint descendant sets}};
\draw[draw=darkblue,fill=black!3] (1.2,1.95)--(.15,.25)--(2.25,.25)--cycle;
\draw[draw=darkblue,fill=black!3] (4,1.95)--(2.95,.25)--(5.05,.25)--cycle;
\node[vf,fill=white] at (1.2,1.95) {$u$};
\node[vf,fill=white] at (4,1.95) {$v$};
\node[note,font=\small] at (1.2,.7) {$U$};
\node[note,font=\small] at (4,.7) {$V$};
\node[note] at (2.6,-.2) {$|U|,|V|\ge2^r$,\quad $U\cap V=\varnothing$};
\node[font=\Large] at (6.05,1.15) {$\Longrightarrow$};
\node[note] at (9.65,2.6) {\textbf{After: suppose $v$ is above $u$}};
\draw[draw=darkblue,fill=black!3] (9.5,2)--(7.4,.05)--(11.6,.05)--cycle;
\node[vf,fill=white] at (9.5,2) {$v$};
\node[note] at (9.5,.65) {$U\cup V$};
\node[note] at (9.5,-.4) {$U\cup V\subseteq\Desc_{F'}(v)$\\$|\Desc_{F'}(v)|\ge2^{r+1}$: rank increases};
\end{tikzpicture}
\caption{The doubling argument when two previously incomparable
rank-$r$ nodes become comparable, where
$U=\Desc_F(u)$ and $V=\Desc_F(v)$.}
\label{fig:doubling-visual}
\end{figure}

A node is \emph{stable} across $F \preceq F'$ if
its rank is unchanged.
We now prove \Cref{thm:intro-stability} using
the characterization in \Cref{lem:solid-comparability}.

\begin{proof}[Proof of \Cref{thm:intro-stability}]
If stable nodes $u,v$ lie on the same solid path in $F$,
they remain comparable in $F'$ by ancestor monotonicity.
Since their ranks remain $r$, \Cref{lem:solid-comparability}
implies that they lie on the same solid path in $F'$.

Conversely, suppose they lie on the same solid path in
$F'$ but not in $F$.
By \Cref{lem:solid-comparability}, they were incomparable
in $F$, so their descendant sets were disjoint.
Assume that $v$ is an ancestor of $u$ in $F'$.
By ancestor monotonicity, both old descendant sets
are contained in $\Desc_{F'}(v)$, as illustrated in
\Cref{fig:doubling-visual}. Thus,
\[
|\Desc_{F'}(v)|
\ge |\Desc_F(v)|+|\Desc_F(u)|
\ge 2^r+2^r=2^{r+1},
\]
contradicting the stability of $v$.
\end{proof}

We assign an identifier to each solid path, and each node
carries the identifier of the path containing it.
We now show that these identifiers can be maintained so
that stable nodes retain theirs across forest updates.

\begin{theorem}[Persistent solid-path identifiers]\label{thm:stable-ids}
The solid paths in $F_0,\ldots,F_p$ can be assigned
positive integer identifiers such that distinct paths
in each forest have distinct identifiers, and every
stable node retains its identifier across a transition.
In total, $O(q\log(q+1))$ distinct identifiers are used.
\end{theorem}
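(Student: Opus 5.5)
We build the identifiers inductively along the sequence $F_0,\ldots,F_p$. In $F_0$ every node is a singleton solid path of rank $0$, and we give these $q$ paths the distinct identifiers $1,\ldots,q$.

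For the transition $F_{t-1}\preceq F_t$, consider a solid path $P'$ of $F_t$, and let $\sigma(P')$ be its set of stable nodes, meaning nodes whose rank is the same in $F_{t-1}$ and $F_t$. All nodes of $P'$ share one rank $r$ in $F_t$. Hence all of $\sigma(P')$ had rank $r$ in $F_{t-1}$. By \Cref{thm:intro-stability}, any two nodes of $\sigma(P')$ lay on the same solid path of $F_{t-1}$. So if $\sigma(P')\neq\varnothing$, there is a unique solid path $P$ of $F_{t-1}$ containing $\sigma(P')$, and $P'$ inherits the identifier of $P$. If $\sigma(P')=\varnothing$, then $P'$ receives a fresh identifier, one larger than any used so far. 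Under this rule every stable node keeps its identifier.

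The main step is distinctness: two different solid paths $P'_1\neq P'_2$ of $F_t$ must not inherit from the same old path $P$. Suppose both do. Choose stable nodes $u\in\sigma(P'_1)$ and $v\in\sigma(P'_2)$, both on $P$. They are stable, have rank $r$, and lie on the same solid path of $F_{t-1}$. The forward direction of \Cref{thm:intro-stability} (via ancestor monotonicity and \Cref{lem:solid-comparability}) then puts them on the same solid path of $F_t$, so $P'_1=P'_2$, a contradiction. Fresh identifiers are distinct from everything used before, in particular from any inherited identifier in $F_t$. Together these facts make the identifiers within $F_t$ pairwise distinct.

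For the count, note that every identifier is either one of the initial $q$ or a fresh one. A fresh identifier is created only for a path $P'$ all of whose nodes are unstable. Charge it to one such node $x\in P'$, which is a node whose rank increased at step $t$. Distinct fresh paths in $F_t$ are vertex-disjoint, so within one step each rank increase is charged at most once. By \Cref{lem:rank-monotonicity}, each node's rank increases at most $L$ times over the whole sequence. The total is therefore at most $q+qL=O(q\log(q+1))$ identifiers.

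The delicate point is the well-definedness of inheritance, that is, that the stable nodes of one new path all come from a single old path. This is exactly the converse direction of \Cref{thm:intro-stability}, the doubling argument, so I expect no further obstacle beyond invoking it carefully with matching ranks.
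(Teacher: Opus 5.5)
Your proof is correct and follows essentially the same route as the paper: inheritance is well defined by the converse (doubling) direction of \Cref{thm:intro-stability}, no two new paths inherit the same identifier by the forward direction, and each fresh identifier is charged to a rank increase, which \Cref{lem:rank-monotonicity} bounds by $L$ per node. You also make explicit two points the paper leaves implicit, namely that fresh paths within one step are vertex-disjoint (so each rank increase is charged at most once) and that fresh identifiers cannot collide with inherited ones.
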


\begin{proof}
Initially assign distinct identifiers to the solid paths
in $F_0$. Consider a transition from $F_{t-1}$ to $F_t$.
If a new solid path $P'$ contains a stable node,
then by \Cref{thm:intro-stability}, all its stable nodes
belonged to the same old solid path $P$.
Assign the identifier of $P$ to $P'$.
Moreover, all stable nodes of an old solid path remain
on the same new solid path, so no two new paths inherit
the same identifier.

If a new solid path contains no stable node, assign it
a globally fresh identifier. Charge this identifier
to any node on the path, whose rank must have increased.
By \Cref{lem:rank-monotonicity}, each node can be charged
at most $L$ times. Thus, at most $qL$ fresh identifiers
are created, in addition to the at most $q$ initial
identifiers.
\end{proof}

Let $\sid_t(x)$ denote the identifier of the solid path
containing $x$ in $F_t$. Assigning identifiers consecutively
ensures that each uses $O(\log(q+1))$ bits.
Define the state of $x$ at time $t$ by
\[
\sigma_t(x)=\bigl(\rankv_t(x),\sid_t(x)\bigr).
\]
For each node $x$, we define its \emph{change list} as
a list of triples $(t,\rankv_t(x),\sid_t(x))$,
containing a record at $t=0$ and at every $t\ge 1$
for which $\sigma_t(x)\ne\sigma_{t-1}(x)$.
The following corollary bounds the length of each
change list in terms of the node's final rank.

\begin{corollary}\label{cor:short-histories}
The change list of each node $x$ contains at most
$1+\rankv_p(x)\le L+1$ records, and the total number
of records is $O(q\log(q+1))$.
\end{corollary}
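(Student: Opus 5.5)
The plan is to show that every record after the initial one at $t=0$ corresponds to a strict increase of $\rankv_t(x)$. The bound $1+\rankv_p(x)$ then follows from rank monotonicity.

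First, fix a node $x$ and a time $t\ge 1$ with $\sigma_t(x)\ne\sigma_{t-1}(x)$. We claim that $\rankv_t(x)>\rankv_{t-1}(x)$. Suppose not. By \Cref{lem:rank-monotonicity} the rank never decreases, so the ranks at $t-1$ and $t$ are equal, and $x$ is stable across the transition $F_{t-1}\preceq F_t$. The identifier assignment of \Cref{thm:stable-ids} guarantees that a stable node retains its identifier, so $\sid_t(x)=\sid_{t-1}(x)$ and hence $\sigma_t(x)=\sigma_{t-1}(x)$, a contradiction.

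The one point to check is that "retains its identifier" in \Cref{thm:stable-ids} refers to the identifier of the path containing the node, not merely some path. This is exactly what the construction does. The new path $P'$ containing the stable node $x$ has a stable node, so it inherits the identifier of the old path $P$ containing $x$. By \Cref{thm:intro-stability}, all stable nodes of $P'$ lay on a single old path, so this $P$ is well defined.

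Second, count the records. Apart from the record at $t=0$, each record marks a strict increase of an integer-valued rank. The rank starts at $\rankv_0(x)\ge 0$ and ends at $\rankv_p(x)$. Hence there are at most $\rankv_p(x)-\rankv_0(x)\le\rankv_p(x)$ such increases, and at most $1+\rankv_p(x)$ records in total. Since $\sizev_p(x)\le q$, we have $\rankv_p(x)\le L$, which gives the bound $L+1$.

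Finally, summing over the $q$ nodes gives at most $q(L+1)=O(q\log(q+1))$ records. The only step needing care is the stability-implies-same-identifier step above; the rest is counting.
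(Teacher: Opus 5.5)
Your proof is correct and follows the paper's argument. Since stable nodes keep their identifiers by \Cref{thm:stable-ids}, every record after the initial one coincides with a strict rank increase, and rank monotonicity bounds the number of such increases by $\rankv_p(x)\le L$. Your extra check, that the inherited identifier belongs to the node's own old solid path, is an accurate reading of the construction in the proof of that theorem.
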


\begin{proof}
By \Cref{thm:stable-ids}, the state of a node changes
only when its rank increases.
Since ranks are initially zero and never decrease,
the state of $x$ changes at most $\rankv_p(x)$ times.
Including the initial record proves the first bound,
and summing over all $q$ nodes gives the second.
\end{proof}

\section{Row and Column Labels}
\label{sec:labels}
We now use the change lists from \Cref{cor:short-histories}
to construct row and column labels for a matrix admitting
a monotone forest representation.

\begin{theorem}\label{thm:intro-compiler}
Let $M$ be a $p\times q$ binary matrix admitting a
monotone forest representation, and let $N=p+q$.
Then $M$ admits row and column labels of
$O(\log^2(q+1)+\log N)$ bits such that any entry
$M(i,x)$ can be determined from the corresponding
two labels in $O(1)$ time on the word RAM.
\end{theorem}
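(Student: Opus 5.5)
The plan is to keep the decoding logic of \Cref{sec:overview} and remove the $\log N$ factor from every timestamp by \emph{compressing time}. Only timestamps at which some column changes state matter. By \Cref{cor:short-histories}, the set $E=\{t\ge 1:\sigma_t(x)\neq\sigma_{t-1}(x)\text{ for some }x\}$ has $|E|=O(q\log(q+1))$. Define the compressed time $c(t)=|\{e\in E: e\le t\}|$, which takes $O(\log(q+1))$ bits. Every column's state is constant between consecutive elements of $E$. So $\sigma_i(x)$ is determined by the records of $x$ with compressed timestamp at most $c(i)$. Each column record $(t,\rankv_t(x),\sid_t(x))$ is then stored as $(c(t),\rankv_t(x),\sid_t(x))$, which is $O(\log(q+1))$ bits. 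The identifiers fit in this width by \Cref{thm:stable-ids}.

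\textbf{Labels.} The column label of $x$ stores its change list in this compressed form. By \Cref{cor:short-histories} it has at most $L+1$ records, for $O(\log^2(q+1))$ bits in total. By \Cref{thm:stable-ids}, a state change occurs only when the rank increases, so the records can be indexed by rank: for each rank $r$ actually attained, store the compressed time $T_x[r]$ at which $x$ first reaches rank $r$, together with the identifier it then receives.

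The row label of a nonempty row $i$ stores three things:
\begin{itemize}
\item the compressed time $c(i)$;
\item for each rank $r\in\{0,\dots,L\}$, the identifier $h_r$ of the solid path of rank $r$ meeting $S_i=\Anc_{F_i}(s_i)$ (or a null value), together with the smallest node $b_r$ of $S_i$ on it; by \Cref{lem:solid-paths} there is at most one such path per rank;
\item a flag marking an empty row.
\end{itemize}
This is again $O(\log^2(q+1))$ bits.

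Because the decoder is not told $p$ or $q$, each label begins with a self-delimiting header recording the field width $\lceil\log(q+1)\rceil$ and the number of fields. This header is where the additive $O(\log N)$ term is spent. Correctness follows as in the overview. The decoder finds $r=\rankv_i(x)$ as the largest $r$ with $T_x[r]\le c(i)$, reads $\sid_i(x)$, and outputs $1$ iff $h_r=\sid_i(x)$ and $x\ge b_r$. This uses \Cref{lem:solid-comparability} and the fact that a root path meets a solid path in a suffix, whose nodes are ordered by index because the forest is ordered.

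\textbf{Main obstacle.} The main difficulty is the constant-time predecessor step. We must find the largest $r$ with $T_x[r]\le c(i)$ among up to $L+1$ sorted keys of $O(\log q)$ bits, whereas a single word holds only $\Theta(\log N/\log q)$ such keys. The first approach to try is a fusion-node style structure built inside the column label. Since the keys are few, one keeps only the $\le L$ distinguishing bit positions of the $T_x[\cdot]$ and packs the resulting sketches. The decoder then computes the sketch of $c(i)$ by a mask-and-multiply step, compares it against the packed sketches in parallel, and recovers the answer with one correction step, all using multiplication.

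If the sketches do not fit in $O(1)$ words, a fallback is a two-level decomposition. A top word holds every $\Theta(\log N/\log q)$-th key, found by parallel comparison. A block word then holds the keys of the chosen block. This needs only $O(1)$ word operations on $O(\log^2(q+1))$ bits of auxiliary data per column. I expect verifying that this fits in constant time and in the stated space to be the delicate part. The rest of the argument is bookkeeping on top of \Cref{thm:stable-ids} and \Cref{cor:short-histories}.
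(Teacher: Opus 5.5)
Everything except the predecessor step is sound and is essentially the paper's construction. This covers the row arrays of identifiers and boundaries, and one column record per attained rank (valid because identifiers change only when the rank increases, \Cref{thm:stable-ids}). It also covers the suffix-based membership test. Your compressed time $c(t)$ counts state-change events rather than forest changes, and is a valid, slightly more direct substitute for the paper's versions $\nu(t)$ of \Cref{lem:versions}: for every record time $t\in E\cup\{0\}$ one has $t\le i$ iff $c(t)\le c(i)$, and $c(t)\le qL$ by \Cref{cor:short-histories}. (Minor point: identifiers range up to about $q(L+1)$ and compressed times up to $qL$, so a field width of $\lceil\log(q+1)\rceil$ bits is too small; use a constant multiple of it.)

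The genuine gap is the step you flagged. Neither of your two structures closes it in the main regime $\log N=\Theta(\log q)$ (e.g.\ $p\le q$), where $w/b=\Theta(1)$.

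\emph{Single fusion node.} Over all $k=L+1=\Theta(\log q)$ keys, the sketches have up to $L$ bits each. So the packed sketches alone occupy $\Theta(\log^2 q)$ bits, i.e.\ $\Theta(\log q)$ words. A fusion node can only hold $w^{\Theta(1)}$ keys (Fredman--Willard use about $w^{1/5}$) if its sketch-and-compare step is to take $O(1)$ word operations.

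\emph{Your fallback.} It packs full $b$-bit keys, so a word holds only $w/b=O(1)$ of them. The top level would then contain $\Theta(\log^2 q/\log N)$ keys, i.e.\ $\Theta(\log^3 q/\log N)$ bits rather than one word. A tree with constant fan-out has depth $\Theta(\log\log q)$, no better than the binary-search decoder.

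\emph{The missing idea.} Use a genuine multi-level fusion tree on the $b$-bit keys, with fan-out $b^{\Theta(1)}$ so that each node's sketches fit in one $b$-bit word. Since $k\le L+1=O(b)$, its height is $O(\log_b k)=O(1)$. It occupies $O(k)$ $b$-bit words, i.e.\ $O(\log^2(q+1))$ bits inside the column label. Because $b\le w$, every $b$-bit operation, including multiplication, is simulated in $O(1)$ time. This is exactly how the paper argues: it applies \Cref{thm:fusion} with word size $b$ instead of $w$.
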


We prove the theorem in the remainder of this section.
Although each column's change list contains
$O(\log(q+1))$ records, each timestamp may require
$\log(p+1)$ bits.
To avoid this dependence on $p$, let $\nu(t)$ denote
the \emph{version} at time $t$, defined as the number
of forest changes up to that time:
\[
\nu(t)=|\{j\in[1,t]:F_j\ne F_{j-1}\}|.
\]

\begin{lemma}\label{lem:versions}
For every $0\le t\le p$, we have
$\nu(t)\le\binom q2$.
Thus, each version can be stored in
$O(\log(q+1))$ bits.
\end{lemma}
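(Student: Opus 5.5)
We need to show that the number of indices $j$ with $F_j\ne F_{j-1}$ is at most $\binom q2$. The plan is a potential argument based on the set of comparable pairs. Define
\[
\Phi(F)=\bigl|\{\{u,v\}: u\ne v,\ u,v \text{ comparable in } F\}\bigr|,
\]
which lies in $\{0,\ldots,\binom q2\}$. Since $F_0$ is edgeless, $\Phi(F_0)=0$. By ancestor monotonicity ($F_{j-1}\preceq F_j$), every comparable pair in $F_{j-1}$ is still comparable in $F_j$, so $\Phi$ is nondecreasing. It therefore suffices to show that $\Phi(F_j)>\Phi(F_{j-1})$ whenever $F_j\ne F_{j-1}$. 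Then $\nu(t)\le\Phi(F_t)\le\binom q2$, and a version fits in $O(\log(q+1))$ bits.

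The key step is showing that the ancestor relation strictly grows whenever the forest changes. I would prove the general fact that a rooted forest on a fixed node set is determined by its ancestor relation. Indeed, the parent of $x$ is the unique minimal proper ancestor of $x$: the proper ancestors of $x$ form a chain, and the parent is the bottom of that chain. So $F_j\ne F_{j-1}$ forces the ancestor relations to differ.

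Since $F_{j-1}\preceq F_j$ gives containment of the ancestor relations, they can differ only if $F_j$ has some ancestor pair $(x,y)$, with $y$ a proper ancestor of $x$, that is absent in $F_{j-1}$. I then need to check that $\{x,y\}$ was not comparable in the reverse direction in $F_{j-1}$. If $x$ had been a proper ancestor of $y$ in $F_{j-1}$, that relation would persist in $F_j$, making $x$ and $y$ mutual proper ancestors, which is impossible in a forest. Hence $\{x,y\}$ is a new comparable pair, and $\Phi$ strictly increases.

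I expect no real obstacle here. The only subtlety is the step showing that the forest is determined by its ancestor relation. One might alternatively worry that a change of parent pointers could leave $\Phi$ unchanged, but the argument above rules this out.
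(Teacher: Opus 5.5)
Your proposal is correct and follows the paper's proof. Both arguments use that a forest is determined by its ancestor relation (the parent is the lowest proper ancestor), so each strict change adds an ancestor relation that persists by monotonicity, which bounds $\nu(t)$ by the $\binom q2$ possible pairs. Your antisymmetry check, showing that the new relation is a genuinely new unordered pair, is a detail the paper leaves implicit; it also follows from the forests being ordered.
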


\begin{proof}
A forest is uniquely determined by its ancestor relations,
since the parent of each nonroot node is its lowest proper
ancestor. Thus, every strict change in the forest adds
at least one new ancestor relation.
By ancestor monotonicity, these relations never disappear.
Since there are at most $\binom q2$ such relations,
we have $\nu(t)\le\binom q2$.
\end{proof}

For each nonempty row $i$, the support
$S_i=\Anc_{F_i}(s_i)$ is a root path in $F_i$.
By \Cref{lem:solid-paths}, it intersects at most one
solid path of each rank. For each rank $r$ encountered
on $S_i$, let $H_i[r]$ be the identifier of that solid path
and $B_i[r]$ the smallest column index in its intersection
with $S_i$. The row label stores the version $\nu(i)$
and the arrays $H_i,B_i$, indexed by $0,\ldots,L$.
For ranks absent from $S_i$, set
$H_i[r]=0$ and $B_i[r]=q+1$.
For an empty row, use these default values at every rank.

For each column $x$, replace every timestamp $t$
in its change list by $\nu(t)$. The resulting list is
\[
\mathcal H(x)=((v_0,r_0,h_0),\ldots,(v_\ell,r_\ell,h_\ell)),
\]
where $v_0=0$, the versions are strictly increasing,
and $(r_j,h_j)$ is the state of $x$ from version $v_j$
until the next record.
Since an unchanged forest does not change any state,
this replacement preserves the change list.
By \Cref{cor:short-histories}, $\ell+1\le L+1$.

\begin{theorem}[Exact rank-path membership test]\label{thm:membership-test}
For every row $i$ and column $x$, let
$(r,h)=\sigma_i(x)$. Then
\[
M(i,x)=1\quad\Longleftrightarrow\quad
H_i[r]=h\ \text{ and }\ x\ge B_i[r].
\]
\end{theorem}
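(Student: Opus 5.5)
We prove the two directions separately, working entirely inside the single forest $F_i$ and using only the structural facts from \Cref{sec:rankpaths}. First dispose of the empty row: there $S_i=\varnothing$, so $M(i,x)=0$. The row arrays hold the defaults $H_i[r]=0$ at every rank, while every identifier $h=\sid_i(x)$ is a positive integer by \Cref{thm:stable-ids}. Hence $H_i[r]\neq h$ and the right-hand side fails as well.

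So assume row $i$ is nonempty and let $P=S_i=\Anc_{F_i}(s_i)$, a root path in $F_i$. The key auxiliary fact is: for each rank $r$ occurring on $P$, the set $P_r$ of rank-$r$ nodes of $P$ is exactly a suffix (upper segment) of the solid path $Q_r$ of $F_i$ that contains it. We argue this as follows.
\begin{itemize}
\item By \Cref{lem:solid-paths}, $P_r$ is a contiguous segment of one solid path $Q_r$.
\item It is upward closed within $Q_r$. If $y\in Q_r$ is an ancestor of some $z\in P_r$, then $y\in\Anc_{F_i}(s_i)=P$ because $P$ is upward closed, and $y$ has rank $r$ since it lies on $Q_r$. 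Hence $y\in P_r$.
\item Since $F_i$ is ordered, the nodes of $Q_r$ increase in column index toward the root. So $P_r=\{y\in Q_r: y\ge \min P_r\}=\{y\in Q_r:y\ge B_i[r]\}$.
\end{itemize}

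($\Rightarrow$) Suppose $M(i,x)=1$, so $x\in P$. Then $x$ has rank $r=\rankv_i(x)$, so $x\in P_r\subseteq Q_r$. This gives $h=\sid_i(x)=\sid(Q_r)=H_i[r]$, and $x\ge \min P_r=B_i[r]$.

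($\Leftarrow$) Suppose $H_i[r]=h$ and $x\ge B_i[r]$. Then $H_i[r]\ne 0$, so rank $r$ occurs on $P$ and $H_i[r]$ is the identifier of $Q_r$. Identifiers are distinct among solid paths of the same forest $F_i$ (\Cref{thm:stable-ids}), so $\sid_i(x)=H_i[r]$ forces $x\in Q_r$. Combining $x\ge B_i[r]$ with the suffix characterization gives $x\in P_r\subseteq S_i$, hence $M(i,x)=1$.

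The main subtlety is the suffix claim, specifically upward closure of $P_r$ within $Q_r$. It relies on $S_i$ being a full root path at time $i$, rather than at some later time, together with orderedness, which lets a column-index threshold stand in for position along the path. A second point to state explicitly is that the default $H_i[r]=0$ can never collide with a real identifier; positivity of identifiers handles this.
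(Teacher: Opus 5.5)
Your proof is correct and takes the same route as the paper's. You handle empty rows by positivity of identifiers. The forward direction uses that $S_i$ meets the rank-$r$ solid path $Q_r$ of $F_i$ in an upper segment whose column indices increase toward the root. The converse uses distinctness of identifiers within $F_i$ together with that suffix property. Where you differ from the paper is only in detail: you spell out the upward-closure argument behind the suffix claim and the $H_i[r]\neq 0$ step, which the paper's terser proof leaves implicit.
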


\begin{proof}
If row $i$ is empty, all its path identifiers are zero,
whereas $h$ is positive, so both sides are false.
Otherwise, suppose $M(i,x)=1$, or equivalently $x\in S_i$.
The row label stores the identifier $h$ of the solid path
containing $x$ at rank $r$. Since $B_i[r]$ is the smallest
node of $S_i$ on this path, $x\ge B_i[r]$.

Conversely, if $H_i[r]=h$, then $x$ and $B_i[r]$
lie on the same solid path in $F_i$.
Since node numbers increase toward the root,
$x\ge B_i[r]$ places $x$ on the suffix of this path
contained in $S_i$. Hence $M(i,x)=1$.
\end{proof}

\Cref{fig:decoder-core} illustrates the membership test.

\begin{figure}[ht]
\centering
\begin{tikzpicture}[x=1cm,y=1cm]
\node[note,anchor=west] at (0,2.05) {\textbf{Column label:} predecessor of $\nu(i)$};
\draw[-{Latex[length=2mm]},draw=darkblue] (0,.55)--(6,.55);
\foreach \x/\v in {.5/v_0,2.5/v_j,5/v_{j+1}} {
 \draw[darkblue] (\x,.45)--(\x,.65);
 \node[font=\small,below] at (\x,.4) {$\v$};}
\node[draw=darkblue,fill=softblue,inner sep=4pt,font=\small] at (2.5,1.12) {$(r,h)$};
\draw[-{Latex[length=1.6mm]},darkgreen,thick] (3.75,1.35)--(3.75,.62);
\node[font=\small,above] at (3.75,1.35) {$\nu(i)$};
\node[note,anchor=west] at (7,2.05) {\textbf{Row label:} inspect rank $r$};
\node[draw=darkblue,fill=softblue,inner sep=5pt] at (9.5,.7)
 {$\begin{array}{c|ccc}&\cdots&r&\cdots\\\hline H_i&\cdots&h&\cdots\\B_i&\cdots&z&\cdots\end{array}$};
\node[note] at (6,-.65) {Accept exactly when $H_i[r]=h$ and $x\ge B_i[r]=z$.};
\end{tikzpicture}
\caption{The membership test for $M(i,x)$: the column's active
record determines which entries of the row label are inspected.}
\label{fig:decoder-core}
\end{figure}

\subsection{Label Construction and Binary-Search Decoding}
\label{sec:elementary-decoder}

We store each field using $b=\Theta(\log(q+1))$ bits,
where $b$ is chosen as a sufficiently large power of two.
Since $q+1\le N$, the constant $c$ in
$w=c\lceil\log N\rceil$ from \Cref{sec:model}
can be chosen so that $b\le w$.

A row label for $i$ stores $q$, the version $\nu(i)$,
and the $L+1$ entries of each array $H_i$ and $B_i$.
A column label for $x$ stores $q$, the index $x$,
the number of records in $\mathcal H(x)$,
and three arrays containing their versions, ranks,
and identifiers.
These arrays are padded to $L+1$ entries.
Both labels also store the layout parameters $L$ and $b$,
which determine the array offsets. Since each column history contains an initial record
at version $0$, a predecessor exists for every
queried version.

\begin{lemma}\label{lem:label-size}
The fields of each row or column label require
$O(\log^2(q+1))$ bits in total.
\end{lemma}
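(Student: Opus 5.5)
The proof is a field-by-field count. Every field is stored in $b=\Theta(\log(q+1))$ bits, so it suffices to show two things. First, each label has $O(\log(q+1))$ fields. Second, every stored value fits in $b$ bits, i.e.\ is at most $2^{O(\log(q+1))}$.

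For the row label of $i$, the scalar fields are $q$, $L$, $b$ and the version $\nu(i)$. The first three are at most $q$, or $O(\log q)$, and \Cref{lem:versions} gives $\nu(i)\le\binom q2$. The arrays $H_i$ and $B_i$ each have $L+1=O(\log(q+1))$ entries. Each entry of $B_i$ lies in $[1,q+1]$. Each entry of $H_i$ is either $0$ or an identifier produced by \Cref{thm:stable-ids}. Since identifiers are assigned consecutively and at most $O(q\log(q+1))$ of them are used, each is $O(q\log(q+1))$ and so needs $O(\log(q+1))$ bits. The row label therefore holds $O(\log(q+1))$ fields of $b$ bits each, for $O(\log^2(q+1))$ bits in total.

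For the column label of $x$, the scalar fields are $q$, $x\le q$, $L$, $b$, and the record count $\ell+1\le L+1$. The latter bound follows from \Cref{cor:short-histories}. The three padded arrays each have $L+1$ entries:
\begin{itemize}
\item versions, bounded by $\binom q2$ via \Cref{lem:versions};
\item ranks, which lie in $\{0,\ldots,L\}$;
\item identifiers, bounded as for $H_i$ above.
\end{itemize}
Again each field uses $b=O(\log(q+1))$ bits, so the column label also has size $O(\log^2(q+1))$.

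The only point needing care is that $b$ is a single uniform width, rounded up to a power of two, that must accommodate the largest of these quantities. The bound $\max\{\binom q2,\,O(q\log(q+1)),\,q+1\}\le (q+1)^{O(1)}$ shows that some $b=\Theta(\log(q+1))$ suffices, and the power-of-two rounding at most doubles it. One might worry about the edge case $q=1$, where $L=0$. This is harmless, since every field still takes $O(1)$ bits and $\log^2(q+1)=1$. Multiplying the $O(\log(q+1))$ field count by the width $b$ gives the lemma.
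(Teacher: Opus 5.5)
Your proposal is correct and follows the paper's argument: a label has $O(L+1)$ fields, each of width $b=\Theta(\log(q+1))$, giving $O(\log^2(q+1))$ bits. You also check explicitly that every stored value fits in $b$ bits, using the version bound, consecutive identifiers, and the sentinel ranges; the paper leaves this implicit in its choice of $b$, and it stores $q$, $L$, $b$ in a separate header word rather than among the fields, which does not affect your count.
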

\begin{proof}
A label has $O(L+1)$ fields of $b$ bits,
requiring $O((L+1)b)=O(\log^2(q+1))$ bits.
\end{proof}

By the preceding lemma, the fields occupy
$O(\log^2(q+1))$ bits.
We store $q$, $L$, $b$, and
$a=2^{\lfloor\log(w/b)\rfloor}$
in four fixed-width fields of a dedicated $w$-bit
header word, each occupying $w/4$ bits.
The remaining $b$-bit fields are packed into groups
of $a$ fields per $w$-bit word, so no field
crosses a word boundary.
Since $ab>w/2$, this packing uses
$O(\log^2(q+1)+w)$ bits and supports
constant-time access to any field using shifts and masks.
Thus, each label occupies
$O(\log^2(q+1)+\log N)$ bits.
We now describe a decoder that uses binary search
to find $\sigma_i(x)$ and applies
\Cref{thm:membership-test}.

\begin{algorithmdef}[Binary-search decoder]\label{alg:decoder}
Given a row label for $i$ and a column label for $x$:
\begin{enumerate}[leftmargin=2em]
\item Read the column index $x$ and the row's version $v=\nu(i)$
from their respective labels.
\item Binary-search the first $\ell+1$ entries of column $x$'s
version array to find $j=\max\{k:v_k\le v\}$,
where $\ell+1$ is the stored number of records.
\item Read $(r,h)=(r_j,h_j)$ from record $(v_j,r_j,h_j)$.
\item Return $1$ iff $H_i[r]=h$ and $x\ge B_i[r]$.
\end{enumerate}
\end{algorithmdef}

The binary-search decoder gives the following theorem.

\begin{theorem}\label{thm:binary-search-labels}
Every $p\times q$ binary matrix admitting a monotone
forest representation has row and column labels of
$O(\log^2(q+1)+\log N)$ bits supporting entry queries
in $O(1+\log\log(q+1))$ time on the word RAM,
where $N=p+q$.
\end{theorem}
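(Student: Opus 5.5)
The plan is to combine the label layout just described with the correctness of \Cref{alg:decoder} and then bound its running time. Fix a matrix $M$ with a monotone forest representation $F_0\preceq\cdots\preceq F_p$. Compute the ranks, the persistent identifiers of \Cref{thm:stable-ids}, and the change lists. Then replace timestamps by versions $\nu(t)$ as in \Cref{lem:versions}. Build row labels storing $q,\nu(i),H_i,B_i$ and column labels storing $q,x,\ell+1$ and the padded version, rank and identifier arrays, all packed with the header word as described. Every field has $b=\Theta(\log(q+1))$ bits: versions are at most $\binom q2$, identifiers are $O(q\log(q+1))$, ranks are at most $L$, and boundaries are at most $q+1$. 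By \Cref{lem:label-size} together with the packing argument, each label then occupies $O(\log^2(q+1)+\log N)$ bits.

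For correctness, I will show that step 2 of the decoder returns exactly $\sigma_i(x)$. The versions $v_0=0<v_1<\cdots<v_\ell$ are strictly increasing, so $j=\max\{k:v_k\le \nu(i)\}$ is well defined because $v_0=0\le\nu(i)$. The key point is that the state of $x$ at time $i$ equals the state recorded at the last record whose version is at most $\nu(i)$. Let $t_j$ be the timestamp of record $j$. Since $\nu$ is nondecreasing, $\nu(t_j)\le\nu(i)$ and $\nu(t_{j+1})>\nu(i)$ imply $t_j\le i<t_{j+1}$. For the first inequality, note that if $t_j>i$ then $\nu(t_j)\ge\nu(i)$, with equality only when the forest is unchanged on $(i,t_j]$; but a record at $t_j\ge 1$ means the state changed there, which forces a forest change. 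By the definition of the change list, the state is constant on $[t_j,t_{j+1})$, so $(r_j,h_j)=\sigma_i(x)$. \Cref{thm:membership-test} then shows that step 4 outputs $M(i,x)$. Empty rows are handled by the default values $H_i[r]=0$, since identifiers are positive.

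For the running time, steps 1, 3 and 4 are constant-time field accesses using shifts and masks on known offsets, computed from $L$, $b$ and $a$ in the header. Step 2 is a binary search over at most $L+1=O(\log(q+1))$ entries, costing $O(1+\log\log(q+1))$ constant-time probes. The main delicate step is the version/timestamp correspondence argued above. Everything else is bookkeeping already established in the preceding lemmas.
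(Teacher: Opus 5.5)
Your proposal is correct and follows the paper's proof. The label size comes from \Cref{lem:label-size} together with the word packing. Correctness comes from locating $\sigma_i(x)$ by a predecessor search on the column's version array and then applying \Cref{thm:membership-test}. The time bound comes from binary search over at most $L+1=O(\log(q+1))$ records.

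Your check that the record selected at version $\nu(i)$ really gives $\sigma_i(x)$ is worth keeping. You use that a recorded state change at $t_j\ge 1$ forces $F_{t_j}\ne F_{t_j-1}$, hence $t_j\le i$. This spells out what the paper only asserts just before the theorem, when it says that replacing timestamps by versions preserves the change list.
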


\begin{proof}
The label-size bound follows from \Cref{lem:label-size} and
the packing above.
For a query $(i,x)$, we use binary search on the
change list of $x$ to find $\sigma_i(x)$.
The membership test of \Cref{thm:membership-test}
then correctly determines $M(i,x)$.
The list contains at most $L+1=O(\log(q+1))$
records, so binary search takes
$O(1+\log\log(q+1))$ time.
All other operations take constant time.
\end{proof}

\subsection{Constant-Time Decoding}
\label{sec:constant-time}

The binary-search decoder spends
$O(1+\log\log(q+1))$ time finding the state of a
column at the queried version.
We use the predecessor structure described in the
following theorem to replace binary search and
obtain constant-time decoding.

\begin{theorem}[Static fusion-tree predecessor~\cite{FredmanWillard93}]\label{thm:fusion}
On a word RAM with multiplication and $w$-bit words,
a static set of $k\ge 1$ distinct $w$-bit keys with
associated $w$-bit values can be stored in $O(k)$ words,
supporting predecessor queries in
$O(1+\log_w k)$ worst-case time.
\end{theorem}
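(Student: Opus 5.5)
This statement is the classical fusion-tree theorem of Fredman and Willard, which the paper cites rather than proves. The plan is therefore to recall the structure of their argument, rather than derive anything from the forest machinery above.

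The argument has three layers.

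\emph{Fusion nodes.} For $k' \le w^{1/5}$ keys, take the at most $k'-1$ distinguishing bit positions, meaning the positions where the binary tries of the keys branch. Build a multiplication constant $m$ that moves these bits into a compact contiguous field of $O(k'^4)$ bits. This yields an approximate sketch $\mathrm{sketch}(y)$ computable in $O(1)$ time with one multiplication and masks. Pack the sketches of all keys into one word. Comparing $\mathrm{sketch}(y)$ against all packed sketches in parallel (subtract, mask the sign bits, then count the set bits or take the most significant one by multiplication) gives the rank of $\mathrm{sketch}(y)$ among the key sketches in $O(1)$ time.

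\emph{The desketchifying step.} This is the main subtlety. The sketch rank need not equal the true rank of $y$. To repair it, compute the longest common prefix of $y$ with the neighbouring key $x_j$ at that sketch rank, which is the most significant set bit of $y \oplus x_j$ and takes constant time via the Fredman--Willard msb trick. Let $p$ be this prefix. Form the query $e = p\,0\,1\cdots1$ or $p\,1\,0\cdots0$, according to the bit of $y$ following $p$. A second sketch comparison on $e$ then yields the true predecessor rank. Correctness follows from the trie structure: every key sharing the prefix $p$ lies on one side of $y$.

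\emph{The tree.} Build a B-tree of branching factor $w^{1/5}$ with a fusion node at each internal node. Its height is $O(\log_{w^{1/5}} k) = O(1+\log_w k)$, so a query costs $O(1)$ per level. Each node uses $O(1)$ words plus child pointers, giving $O(k)$ words in total. Associated values are stored alongside the leaves.

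I would simply invoke \cite{FredmanWillard93} for the detailed bit constructions, in particular the existence of the spreading multiplier $m$ and the parallel comparison. In the application here, $k \le L+1 = O(\log N) \le w$, so a single level of the tree, or at most $O(1)$ levels, suffices, and queries take constant time.
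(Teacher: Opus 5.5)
Your proposal is correct and agrees with the paper, which does not prove this statement but simply cites Fredman and Willard; your outline (sketches of at most $w^{1/5}$ keys packed into one word, parallel sketch comparison, desketchifying, and a B-tree of height $O(1+\log_w k)$) is the standard argument behind that citation. One detail to tighten: in the desketchifying step, $p$ must be the longer of the common prefixes of $y$ with both sketch neighbours $x_i$ and $x_{i+1}$, since with a single neighbour $p$ need not be the longest common prefix of $y$ with the key set, and then the claim that all keys with prefix $p$ lie on one side of $y$ can fail (each node must therefore also store its keys explicitly, which still gives $O(k)$ words in total).
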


We now complete the proof of \Cref{thm:intro-compiler}.

\begin{proof}[Proof of \Cref{thm:intro-compiler}]
For each column $x$, we build a predecessor structure
on the version numbers in its change list $\mathcal H(x)$.
By \Cref{cor:short-histories}, the number of records
satisfies $k\le L+1=O(b)$.
The version numbers and their associated rank-identifier
pairs fit in $b$ bits, where $b=\Theta(\log(q+1))$
is chosen sufficiently large.

We apply \Cref{thm:fusion} with word size $b$.
The resulting predecessor structure occupies $O(k)$
$b$-bit words and supports predecessor queries in
\[
O(1+\log_b k)=O(1)
\]
time. We store the structure, including its auxiliary
information, entirely in the column label.
Its space usage is
$O(kb)=O(\log^2(q+1))$ bits.

Since $b\le w$, operations on $b$-bit words,
including multiplication, can be simulated in
constant time on our word RAM.
Packing these words as described in
\Cref{sec:elementary-decoder}
preserves the $O(kb+w)$-bit space bound.
Thus, the total label size remains
$O(\log^2(q+1)+\log N)$ bits.

Replacing binary search in Algorithm~\ref{alg:decoder}
with predecessor search retrieves $\sigma_i(x)$
in constant time.
The remaining steps take constant time and are
correct by \Cref{thm:membership-test}.
\end{proof}

\Cref{app:complete-example} gives a complete example
of the forest construction, label assignment, and decoding.

\section{Information-Theoretic Optimality}
\label{sec:optimality}

By \Cref{thm:intro-characterization,thm:intro-compiler},
every $\Gamma$-free matrix admits
$O(\log^2 N)$-bit labels with constant-time decoding.
We now prove the matching lower bound and obtain
an optimal centralized representation.

\begin{theorem}[Spinrad~\cite{Spinrad95}]
\label{thm:spinrad}
Let $\gamma(m)$ denote the number of $m\times m$
$\Gamma$-free binary matrices with fixed row and
column orders. Then
\[
\gamma(m)=2^{\Theta(m\log^2 m)}.
\]
\end{theorem}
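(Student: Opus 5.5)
The statement has two halves, and I would prove them separately: an upper bound $\gamma(m)\le 2^{O(m\log^2 m)}$ and a lower bound $\gamma(m)\ge 2^{\Omega(m\log^2 m)}$.

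\textbf{Upper bound.} This follows from the labeling scheme already built. By \Cref{thm:intro-characterization}, every $m\times m$ $\Gamma$-free matrix has a monotone forest representation. By \Cref{thm:intro-compiler}, it then has row and column labels of $O(\log^2(m+1)+\log(2m))=O(\log^2 m)$ bits, from which a fixed decoder recovers every entry. Concatenating the $2m$ labels, each padded to a common length $c\log^2 m$, gives an injective encoding of the matrix into $O(m\log^2 m)$ bits. Injectivity holds because the decoder does not depend on $M$. Hence $\gamma(m)\le 2^{O(m\log^2 m)}$.

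\textbf{Lower bound.} I would give a recursive construction. Let $m=2^k$, split the columns into halves $C_1,C_2$, and split the rows into three blocks. The first two blocks realize $\Gamma$-free matrices of size about $m/2$ recursively on $C_1$ and on $C_2$, with zeros elsewhere. The third block, of about $m/2$ rows, should contribute roughly $2^{\Omega(m\log m)}$ independent choices, giving the recurrence
\[
\log\gamma(m)\ge 2\log\gamma(m/2)+\Omega(m\log m).
\]
This solves to $\Omega(m\log^2 m)$. Rows and columns can be balanced by padding with zero rows and columns, which preserves $\Gamma$-freeness.

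\textbf{Main obstacle.} The difficulty is designing the crossing block so that it has $2^{\Omega(m\log m)}$ choices while staying $\Gamma$-free together with the recursive parts. That block is essentially the content of Spinrad's count. What I would try first is to let each crossing row be a ``root path'' of a fixed forest: a suffix of the $C_1$ part (in the order of some chain) followed by an arbitrary choice inside a structure on $C_2$.

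A more concrete option uses the forest view of \Cref{thm:intro-characterization}. Take $C_1$ as $m/2$ leaves and $C_2$ as a chain, which is ordered since $C_2$ has the larger indices. Each leaf can attach to any chain node, giving $(m/2)^{m/2}=2^{\Omega(m\log m)}$ forests. A row starting at that leaf realizes each attachment, and distinct attachments give distinct matrices. The recursion must be arranged so these choices compose with the recursive sub-blocks without creating a $\Gamma$. A safe way to do this is to place the recursive blocks in later rows on disjoint column sets whose forests are merged consistently, and then check $\Gamma$-freeness via the converse direction of \Cref{thm:intro-characterization}, that is, by exhibiting an ancestor-monotone forest sequence. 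Checking that this interleaving is consistent is where I expect the real work to be. If it fails, I would fall back on citing Spinrad's explicit count.
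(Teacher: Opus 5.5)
The paper does not prove this theorem; it only cites Spinrad. Your upper bound is correct and not circular: \Cref{thm:intro-characterization,thm:intro-compiler} do not use Spinrad's count, and recovering $M$ from $2m$ labels of $O(\log^2 m)$ bits with a fixed decoder is just \Cref{lem:label-counting} read backwards.

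\textbf{The lower bound has a genuine gap, and it is more basic than the interleaving check you flag: the row budget does not close.} Two recursive $m/2\times m/2$ blocks already use all $m$ rows, so your third block makes the matrix $3m/2\times m$. Iterating, the number of rows satisfies $R(m)=2R(m/2)+m/2=\Theta(m\log m)$. Padding to a square of side $n=\Theta(m\log m)$ then gives only $2^{\Omega(m\log^2 m)}=2^{\Omega(n\log n)}$. If you instead stay square by giving the recursive blocks $m/3$ rows, the recurrence becomes $T(m)=2T(m/3)+\Theta(m\log m)$, which solves to only $\Theta(m\log m)$.

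The underlying reason is the shape of your crossing rows. Each is a root path plus one attachment of $x\in C_1$ to a suffix of the $C_2$ chain. Given the earlier rows, it is therefore determined by a pair $(x,y)$ and has only $m^{O(1)}$ possible values. If every row gets its freedom at a single recursion level in this way, $m$ rows carry at most $O(m\log m)$ bits. As written, your construction proves only $\gamma(m)\ge 2^{\Omega(m\log m)}$.

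\textbf{What is missing.} To reach $m\log^2 m$ bits, a typical row must gain $\Theta(\log m)$ bits at each of $\Theta(\log m)$ levels. In forest terms, it must be the union of $\Theta(\log m)$ root paths of $F_{i-1}$, each chosen among $m^{\Omega(1)}$ options (equivalently, Spinrad's $\Theta(\log m)$ nonredundant ones per row). So the crossing entries must sit in the same rows as the recursive blocks, and that is exactly where $\Gamma$-freeness becomes restrictive:
\begin{itemize}
\item Let $F^A$ be the forest produced by the $C_1$ block. If that block precedes a row containing $x\in C_1$, the row must contain all of $\Anc_{F^A}(x)$.
\item Two such rows $h<i$ whose $C_1$ parts intersect must satisfy $S_h\cap C_2\subseteq S_i\cap C_2$.
\item If the $C_1$ block comes after the crossing rows instead, any of its rows containing an attached $x$ cannot be zero on $C_2$.
\end{itemize}
Hence arbitrary $\Gamma$-free sub-blocks counted by $\gamma(m/2)$ cannot simply be plugged in; you need a specially structured recursive family. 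That explicit family is the real content of Spinrad's lower bound. Without it, or the citation the paper relies on, your sketch does not establish $\gamma(m)\ge 2^{\Omega(m\log^2 m)}$.
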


\begin{lemma}\label{lem:label-counting}
Let $\calF_{p,q}$ be a family of binary matrices
on the fixed row and column sets $[p]$ and $[q]$,
where $p+q=N$.
If a fixed decoder represents every matrix in
$\calF_{p,q}$ using labels of at most $b$ bits,
then
\[
b\ge \frac{\log|\calF_{p,q}|}{N}-1.
\]
\end{lemma}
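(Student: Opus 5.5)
The plan is a standard counting (pigeonhole) argument. Fix the decoder $D$. Each matrix $M\in\calF_{p,q}$ is mapped to its tuple of labels
\[
\bigl(\rho_M(1),\ldots,\rho_M(p),\kappa_M(1),\ldots,\kappa_M(q)\bigr),
\]
that is, $N$ binary strings, each of length at most $b$. The first step is to show that this map is injective on $\calF_{p,q}$. Suppose $M\neq M'$ have identical label tuples. Then there is some entry with $M(i,x)\neq M'(i,x)$. On the other hand,
\[
M(i,x)=D(\rho_M(i),\kappa_M(x))=D(\rho_{M'}(i),\kappa_{M'}(x))=M'(i,x),
\]
which is a contradiction. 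Here we use that $D$ is fixed, independent of $M$, and receives no other information. The row and column index sets are fixed as $[p]$ and $[q]$, so positions in the tuple are well defined.

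The second step is to count the possible tuples. The number of binary strings of length at most $b$ is
\[
\sum_{j=0}^{b}2^j=2^{b+1}-1<2^{b+1}.
\]
Hence the number of tuples is less than $2^{(b+1)N}$. Injectivity then gives $|\calF_{p,q}|\le 2^{(b+1)N}$. Taking logarithms yields $\log|\calF_{p,q}|\le (b+1)N$, which rearranges to the stated bound $b\ge \log|\calF_{p,q}|/N-1$.

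There is essentially no obstacle; the only point requiring care is the second step. Labels may have variable length up to $b$, and this is exactly where the $-1$ comes from: counting strings of length at most $b$ gives $2^{b+1}$ rather than $2^b$. One might also worry whether the decoder can tell where a label ends. This is not an issue, because the decoder receives each label as a separate string. So we count strings, not concatenations, and no prefix-freeness is needed.
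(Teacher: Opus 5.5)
Your proposal is correct and follows the paper's own proof. Both arguments count fewer than $2^{b+1}$ strings of length at most $b$, hence fewer than $2^{N(b+1)}$ label assignments, use the fixedness of the decoder to conclude that each assignment determines at most one matrix, and then take logarithms; you simply spell out the injectivity step that the paper states in one line.
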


\begin{proof}
There are fewer than $2^{b+1}$ binary strings
of length at most $b$, so there are fewer than
$2^{N(b+1)}$ possible assignments of labels
to the $N$ objects.
Since the decoder is fixed, each assignment
represents at most one matrix.
Taking logarithms gives the bound.
\end{proof}

\begin{theorem}\label{thm:optimal}
For $N\ge 2$, the worst-case maximum label length
over all $\Gamma$-free matrices with $p+q=N$
is $\Theta(\log^2 N)$.
\end{theorem}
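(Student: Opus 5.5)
The upper and lower bounds are proved separately; the upper bound is essentially already in hand. For any $\Gamma$-free $p\times q$ matrix with $p+q=N$, \Cref{thm:intro-characterization} gives a monotone forest representation. Then \Cref{thm:intro-compiler} yields labels of $O(\log^2(q+1)+\log N)$ bits. Since $q+1\le N+1$ and $N\ge 2$, this is $O(\log^2 N)$. So the worst-case maximum label length is $O(\log^2 N)$ for every split $p+q=N$.

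For the lower bound, I will exhibit, for each $N\ge 2$, one split $p+q=N$ whose family is large, and apply \Cref{lem:label-counting}. Take $m=\lfloor N/2\rfloor\ge 1$, $p=m$, and $q=N-m$ (so $q\in\{m,m+1\}$). Let $\calF_{p,q}$ be the set of $p\times q$ $\Gamma$-free matrices. Every $m\times m$ $\Gamma$-free matrix extends to a $p\times q$ $\Gamma$-free matrix by appending an all-zero column when $q=m+1$. A zero column creates no $\Gamma$, since a $\Gamma$ needs $1$s in both of its columns. Hence $|\calF_{p,q}|\ge\gamma(m)$, and \Cref{thm:spinrad} gives $\log|\calF_{p,q}|\ge c\,m\log^2 m$ for some constant $c>0$ and all sufficiently large $m$.

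Then \Cref{lem:label-counting} gives
\[
b\ \ge\ \frac{c\,m\log^2 m}{N}-1\ \ge\ \frac{c}{3}\log^2\lfloor N/2\rfloor-1 ,
\]
using $m\ge N/3$ for $N\ge 2$. This is $\Omega(\log^2 N)$ for large $N$. For the finitely many small $N$, the bound $\Omega(\log^2 N)$ only asks for labels of at least one bit. Already $N=2$ ($p=q=1$) admits both matrices $(0)$ and $(1)$, so a fixed decoder must distinguish their label pairs. Hence some label is nonempty, and the constant can be adjusted.

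The only real care point is the lower bound's dependence on the choice of split. \Cref{thm:optimal} speaks of the worst case over all $\Gamma$-free matrices with $p+q=N$, so a single balanced split suffices. The step to watch is transferring Spinrad's square-matrix count to the slightly unbalanced shape, which the zero-column padding handles. I should also double-check that the $-1$ in \Cref{lem:label-counting} and the small-$N$ regime do not spoil the asymptotics.
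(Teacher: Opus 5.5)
Your proposal is correct and follows the paper's proof. The upper bound comes from \Cref{thm:intro-characterization,thm:intro-compiler}, and the lower bound comes from Spinrad's count via \Cref{lem:label-counting} on a balanced split, padding with an all-zero line when $N$ is odd (the paper appends a zero row, you append a zero column, which works equally well). Your treatment of small $N$ is more careful than the paper's; note only that the two-matrix argument you give for $N=2$ works for every $N\ge 2$, for example with the all-zero and all-one $1\times(N-1)$ matrices.
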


\begin{proof}
The upper bound follows from
\Cref{thm:intro-characterization,thm:intro-compiler}.
For $N=2m$, \Cref{thm:spinrad} gives
$2^{\Omega(N\log^2 N)}$ distinct $\Gamma$-free
matrices.
By \Cref{lem:label-counting}, some label
must have length $\Omega(\log^2 N)$.
For odd $N$, the bound follows by adding
an all-zero row.
\end{proof}
This completes the proof of \Cref{thm:main}.
For fixed dimensions $p,q$, let $s=\min\{p,q\}$.
Padding an $s\times s$ $\Gamma$-free matrix with
zero rows or columns preserves $\Gamma$-freeness.
Since the labels of the original $2s$ rows and
columns determine the $s\times s$ matrix,
the counting argument also gives an
$\Omega(\log^2 s)$ lower bound on the
worst-case maximum label length.

A \textit{centralized representation} of $M$ is a data structure
that supports entry queries $M(i,x)$ given the row
and column indices $i$ and $x$.
Such a representation is obtained from
\Cref{thm:intro-compiler} by storing all row and
column labels together.

\begin{corollary}[Optimal centralized representation]
\label{cor:centralized}
For $N=p+q$, every $p\times q$ $\Gamma$-free matrix
admits an $O(N\log^2 N)$-bit centralized representation
supporting entry queries in $O(1)$ time on the word RAM.
This space bound is asymptotically optimal
in the worst case as a function of $N$.
\end{corollary}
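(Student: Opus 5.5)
The upper bound comes from concatenating labels, and the lower bound from counting. For a $p\times q$ $\Gamma$-free matrix $M$, I would first invoke \Cref{thm:intro-characterization} to obtain a monotone forest representation, and then \Cref{thm:intro-compiler} to obtain row and column labels of $O(\log^2(q+1)+\log N)=O(\log^2 N)$ bits each. The centralized structure stores three things:
\begin{itemize}
\item a header word with $p$, $q$ and a fixed slot width $W$, where $W$ is the maximum label length rounded up to a multiple of $w$;
\item an array of $p$ row labels, each padded to $W$ bits;
\item an array of $q$ column labels, each padded to $W$ bits.
\end{itemize}
The total space is $O(N\cdot W+w)=O(N\log^2 N)$ bits. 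Padding to a word-aligned fixed width lets a query $(i,x)$ compute the offsets $(i-1)W$ and $pW+(x-1)W$ with one multiplication each. The decoder of \Cref{thm:intro-compiler} then runs on the two located labels. Each label is internally self-describing through its header word (storing $q$, $L$, $b$, $a$), so trailing padding does not affect decoding. Hence queries take $O(1)$ time.

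The one point to check is that the decoder can work directly on labels embedded in a larger array. It only needs word-level random access to each label, which holds because each label begins at a word-aligned offset. I do not expect an obstacle here.

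For optimality, I would argue that any centralized representation of $\Gamma$-free matrices of total size $N$ must distinguish all of them. Take $p=q=m=\lfloor N/2\rfloor$, padding with one zero row if $N$ is odd, since zero rows preserve $\Gamma$-freeness. By \Cref{thm:spinrad} there are $2^{\Omega(m\log^2 m)}=2^{\Omega(N\log^2 N)}$ such matrices with fixed orders. Distinct matrices must have distinct representations, because the query answers determine the matrix. Thus some matrix needs at least $\log\gamma(m)=\Omega(N\log^2 N)$ bits. This is a direct pigeonhole count, simpler than \Cref{lem:label-counting}, and it matches the upper bound as a function of $N$.
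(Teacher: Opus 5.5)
Your proposal is correct and follows the paper's proof essentially step for step. For the upper bound, you concatenate the $N$ labels from \Cref{thm:intro-characterization,thm:intro-compiler} in fixed-size blocks and run the label decoder on the two blocks a query selects; your word-aligned padding, header and explicit offsets just spell out the paper's ``fixed-size blocks.'' For the lower bound, you note, as the paper does, that a centralized representation must distinguish the $2^{\Omega(N\log^2 N)}$ matrices counted by \Cref{thm:spinrad}.
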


\begin{proof}
For the centralized representation, we simply store
the $N$ row and column labels from
\Cref{thm:intro-characterization,thm:intro-compiler} in fixed-size blocks.
For a query $(i,x)$, we read the corresponding
two labels and run the decoder.
The total space is $O(N\log^2 N)$ bits,
and each query takes constant time.

By \Cref{thm:spinrad}, there are
$2^{\Omega(N\log^2 N)}$ distinct $\Gamma$-free
matrices.
A centralized representation must distinguish
these matrices and therefore requires
$\Omega(N\log^2 N)$ bits in the worst case.
\end{proof}

\section{Computing the $\Gamma$-Free Closure}
\label{sec:closure}

By \Cref{lem:upward-well-defined,lem:ancestor-permanence},
the forest construction in \Cref{sec:dynamic-forest}
produces an ancestor-monotone sequence of ordered forests
even when the input matrix is not $\Gamma$-free.
For an arbitrary $p\times q$ binary matrix $M^0$, let
$S_i=\{x:M^0(i,x)=1\}$ and apply the same construction,
obtaining forests $F_0,\ldots,F_p$ and sets
\[
A_i=\bigcup_{x\in S_i}\Anc_{F_{i-1}}(x).
\]
Define the output matrix $R_F$ by
\[
R_F(i,x)=1 \iff x\in A_i.
\]
For two matrices $P$ and $Q$ of the same dimensions,
write $P\le Q$ if $P(i,x)\le Q(i,x)$ for every entry $(i,x)$.
Since $S_i\subseteq A_i$, we have $M^0\le R_F$,
but equality need not hold.

Let $\gcl(M^0)$ denote the $\Gamma$-free closure
of $M^0$, defined as its entrywise least $\Gamma$-free
supermatrix. Such a matrix exists because the all-one
matrix is $\Gamma$-free and $\Gamma$-free matrices
are closed under entrywise intersection.

To show that $R_F=\gcl(M^0)$, we first prove that
$R_F$ is contained in every $\Gamma$-free supermatrix
of $M^0$.

\begin{lemma}\label{lem:closure-witness}
Let $M^0$ be an arbitrary $p\times q$ binary matrix,
and let $Q\ge M^0$ be any $\Gamma$-free matrix.
For every $i$, each parent edge $x\to y$ of $F_i$
has a witness row $h\le i$ such that
$Q(h,x)=Q(h,y)=1$.
Moreover,
\[
A_i\subseteq\{x:Q(i,x)=1\}.
\]
\end{lemma}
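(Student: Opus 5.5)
The plan is to copy the induction of \Cref{lem:original-witness}, with $Q$ taking the place of $M$ in every witness condition. Induct on $i$, proving both claims together. The base case $F_0$ is edgeless, so the witness claim holds vacuously. For the step, assume every parent edge of $F_{i-1}$ has a witness row $h\le i-1$ with $Q(h,x)=Q(h,y)=1$.

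\emph{Containment $A_i\subseteq\{x:Q(i,x)=1\}$.}
\begin{enumerate}[leftmargin=2em]
\item Fix $y\in A_i$. Choose $s\in S_i$ with $y\in\Anc_{F_{i-1}}(s)$, and take the root path $s=x_0<x_1<\cdots<x_t=y$ in $F_{i-1}$. The path is increasing because $F_{i-1}$ is ordered, by \Cref{lem:upward-well-defined}.
\item Since $M^0(i,s)=1$ and $Q\ge M^0$, we have $Q(i,x_0)=1$.
\item For each edge $x_j\to x_{j+1}$, the induction hypothesis gives a row $h<i$ with $Q(h,x_j)=Q(h,x_{j+1})=1$, and $x_j<x_{j+1}$.
\item If $Q(i,x_j)=1$, then \Cref{def:gamma-free} applied to $Q$ with $h<i$ and $a=x_j<b=x_{j+1}$ gives $Q(i,x_{j+1})=1$.
\item Inducting along the path yields $Q(i,y)=1$.
\end{enumerate}

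\emph{Witness claim for $F_i$.}
\begin{enumerate}[leftmargin=2em]
\item If $S_i=\varnothing$, then $F_i=F_{i-1}$ and the hypothesis carries over.
\item Otherwise, each parent edge of $F_i$ is either preserved from $F_{i-1}$ or newly installed on the path through $A_i$.
\item A preserved edge keeps its old witness $h\le i-1\le i$.
\item A new edge $x\to y$ has $x,y\in A_i$. By the containment just proved, $Q(i,x)=Q(i,y)=1$, so row $h=i$ is a witness.
\end{enumerate}

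The only point needing care is the order of the two claims within the step. The containment for $A_i$ must be established using only the witnesses for $F_{i-1}$, and this is consistent because $A_i$ is defined through ancestors in $F_{i-1}$. Those witnesses come from rows strictly before $i$, which is what the strict inequality $h<i$ in the $\Gamma$-free condition requires. The witnesses for the new edges in $F_i$ then use row $i$ itself.

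Nothing in the argument needs $M^0$ to be $\Gamma$-free. All the $\Gamma$-freeness is supplied by $Q$, and the fact $Q\ge M^0$ is used only once, to start each chain at $s$.
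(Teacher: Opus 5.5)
Your proposal is correct and follows the paper's proof essentially step for step. It uses the same simultaneous induction on $i$, and the same chain argument along the increasing ancestor path in $F_{i-1}$, which combines earlier witness rows with the $\Gamma$-freeness of $Q$. It also handles the edges of $F_i$ the same way: row $i$ witnesses the newly installed edges, and preserved edges keep their old witnesses.
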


\begin{proof}
We extend the induction in \Cref{lem:original-witness}
to an arbitrary input matrix $M^0$ and a
$\Gamma$-free supermatrix $Q\ge M^0$.
We prove both claims by induction on $i$,
starting with the edgeless forest $F_0$.

For any $y\in A_i$, choose $s\in S_i$ such that
$y\in\Anc_{F_{i-1}}(s)$, and consider the path
$s=x_0<x_1<\cdots<x_t=y$ in $F_{i-1}$.
Since $s\in S_i$ and $Q\ge M^0$, we have
$Q(i,x_0)=1$.
For each parent edge $x_j\to x_{j+1}$,
the induction hypothesis gives a witness row
$h<i$ with $Q(h,x_j)=Q(h,x_{j+1})=1$.
By the $\Gamma$-free property of $Q$,
$Q(i,x_j)=1$ implies $Q(i,x_{j+1})=1$.
Thus $Q(i,y)=1$, proving the containment.

Every newly installed parent edge $x\to y$
satisfies $x,y\in A_i$, so
$Q(i,x)=Q(i,y)=1$.
Thus, row $i$ witnesses every newly installed edge,
while preserved parent edges retain their witnesses
by the induction hypothesis.
\end{proof}

By \Cref{lem:closure-witness}, every entry of $R_F$
is present in every $\Gamma$-free supermatrix of $M^0$.
We now show that $R_F$ itself is such a supermatrix.

\begin{theorem}[Canonical forest closure]
\label{thm:closure-forest}
For every $p\times q$ binary matrix $M^0$,
the forest construction produces
$R_F=\gcl(M^0)$ in $O(pq)$ time.
In particular, $M^0$ is $\Gamma$-free
if and only if $R_F=M^0$.
\end{theorem}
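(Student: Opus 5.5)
The proof has three parts: (i) $R_F$ is a $\Gamma$-free supermatrix of $M^0$; (ii) minimality, which is \Cref{lem:closure-witness}; (iii) the running time.

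\emph{Minimality.} By \Cref{lem:closure-witness}, for every $\Gamma$-free $Q\ge M^0$ and every row $i$ we have $A_i\subseteq\{x:Q(i,x)=1\}$. Hence $R_F\le Q$.

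\emph{$R_F$ is $\Gamma$-free.} Let $T_i=A_i$ denote the support of row $i$ in $R_F$. We show that $F_0\preceq\cdots\preceq F_p$ is a monotone forest representation of $R_F$, and then invoke the converse direction of \Cref{thm:intro-characterization}.
\begin{itemize}
\item Ancestor monotonicity and orderedness hold for arbitrary input, by \Cref{lem:upward-well-defined,lem:ancestor-permanence}.
\item If $S_i=\varnothing$, then $A_i=\varnothing$ and row $i$ of $R_F$ is empty, so there is nothing to check.
\item If $S_i\neq\varnothing$, then $A_i\ne\varnothing$. The update arranges $A_i=\{a_1<\cdots<a_k\}$ into a path ending at the root $a_k$. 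So $\Anc_{F_i}(a_1)=A_i$, and $a_1=\min T_i$. Thus $T_i=\Anc_{F_i}(\min T_i)$.
\end{itemize}
The converse direction of \Cref{thm:intro-characterization} then gives that $R_F$ is $\Gamma$-free. We already have $R_F\ge M^0$ because $S_i\subseteq A_i$. Combined with minimality, $R_F=\gcl(M^0)$.

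The ``in particular'' statement follows. If $M^0$ is $\Gamma$-free, then $\gcl(M^0)=M^0$. Conversely, if $R_F=M^0$, then $M^0$ equals a $\Gamma$-free matrix.

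\emph{Running time.} This is the step I expect to need the most care. Store the forest as parent pointers. For row $i$, compute $A_i$ by walking upward from each $x\in S_i$ and marking visited nodes in a boolean array of size $q$. Each walk stops at the first node that is already marked; this is valid because $A_i$ is upward closed, so every ancestor of a marked node is also marked. Each node is therefore visited $O(1)$ times per row, plus $O(1)$ work per element of $S_i$.

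Next, list the marked nodes in increasing order by scanning columns $1,\ldots,q$. Relink them into a path, setting the parent of $a_j$ to $a_{j+1}$ and making $a_k$ a root. Finally, write row $i$ of $R_F$ and clear the marks. All of this costs $O(q)$ per row, so the total is $O(pq)$ time, which is linear in the input size.
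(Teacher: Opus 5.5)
Your proposal is correct and follows the paper's proof essentially step for step. It gets $\Gamma$-freeness of $R_F$ from the monotone forest representation and the converse direction of \Cref{thm:intro-characterization}, minimality from \Cref{lem:closure-witness}, and the $O(pq)$ bound from the marked upward walks used in \Cref{sec:construction}. One small point: stopping a walk at a marked node is justified by the invariant that the currently marked set is upward closed in $F_{i-1}$ (each walk continues until it meets an already marked node or a root), not by the upward closure of $A_i$ itself.
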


\begin{proof}
By \Cref{lem:upward-well-defined,lem:ancestor-permanence},
the forests form an ancestor-monotone sequence.
For each nonempty $A_i$, the construction arranges
its nodes into an increasing root path in $F_i$.
Thus, $R_F$ admits a monotone forest representation
and is $\Gamma$-free by \Cref{thm:intro-characterization}.

Since $S_i\subseteq A_i$ for every $i$,
we have $M^0\le R_F$.
Moreover, \Cref{lem:closure-witness} gives
$R_F\le Q$ for every $\Gamma$-free
supermatrix $Q$ of $M^0$.
Hence $R_F$ is the entrywise least
$\Gamma$-free supermatrix of $M^0$, so
$R_F=\gcl(M^0)$.

The equivalence follows from $R_F=\gcl(M^0)$
and the $\Gamma$-freeness of $R_F$.
The time bounds follow
from the construction analysis in
\Cref{sec:construction}.
\end{proof}

An example of the closure computation is given
in \Cref{app:closure-example}. By \Cref{thm:intro-compiler}, the closure also gives
$O(\log^2(q+1)+\log N)$-bit row and column labels
with constant-time decoding.

\section{Construction Time}
\label{sec:construction}
We analyze the time required to construct the canonical forest and the labels of \Cref{thm:intro-compiler}. The forest construction also computes the $\Gamma$-free closure of an arbitrary binary matrix, as established in \Cref{thm:closure-forest}. 

\begin{theorem}\label{thm:construction-time}
Given a $p\times q$ binary matrix $M^0$ with fixed
row and column orders, its $\Gamma$-free closure
and $O(\log^2(s+1)+\log N)$-bit row and column
labels can be constructed in $O(pq)$ time,
where $s=\min\{p,q\}$ and $N=p+q$.
The labels support entry queries in $O(1)$ time
on the word RAM.
\end{theorem}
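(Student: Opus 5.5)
The plan is to simulate the forest construction of \Cref{sec:dynamic-forest} row by row, spending $O(q)$ time per row. Along the way we maintain subtree sizes, ranks, solid paths, persistent identifiers and change lists, and we emit the row labels. At the end we build the column labels. To get the dependence on $s=\min\{p,q\}$ instead of $q$, we first arrange that $q\le p$. The transpose of $\Gamma$ is $\Gamma$ itself, so $M^0$ is $\Gamma$-free if and only if $(M^0)^{\top}$ is. Closure also commutes with transposition, since $\gcl((M^0)^\top)=\gcl(M^0)^\top$ by leastness. So if $q>p$ we run everything on the transpose, which costs $O(pq)$, and swap the roles of row and column labels. We add one orientation bit to every label so that the decoder, which receives a (row, column) pair, can feed the two labels to the decoder of \Cref{thm:intro-compiler} in the right order. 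Applying \Cref{thm:intro-compiler} to the closure $R_F$, which has a monotone forest representation by \Cref{thm:closure-forest}, then gives labels of $O(\log^2(s+1)+\log N)$ bits with $O(1)$ queries.

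\textbf{Per-row forest update.} The forest is stored as a parent array. To compute $A_i$, we walk upward from each $x\in S_i$ and mark nodes, stopping as soon as we reach an already-marked node. Each node is visited once, so this takes $O(q)$. We then obtain $A_i$ in increasing order by scanning the mark array over $[q]$, relink the nodes into an increasing path, and clear the marks. Together with writing row $i$ of $R_F$, this is $O(q)$ per row and $O(pq)$ overall, which already proves the closure claim of \Cref{thm:closure-forest}.

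\textbf{Per-row rank and identifier maintenance.} The forest is ordered, so every child has a smaller index than its parent. Hence we can recompute all of $\sizev_i(\cdot)$ in $O(q)$ by scanning $x=1,\dots,q$ and adding $\sizev(x)$ into its parent's size. Ranks follow in $O(1)$ each, using the integer log of a word. For solid edges, we check for each $x$ whether $\rankv(x)=\rankv(\mathrm{parent}(x))$. By \Cref{lem:solid-paths}, the solid paths can then be enumerated in $O(q)$ by recording, for each node, its unique solid child. For each new path we look for a stable node, meaning one whose rank is unchanged, and copy that node's old identifier; if there is none, we issue a fresh one from a global counter. This is exactly the rule in the proof of \Cref{thm:stable-ids}, and it takes $O(q)$ per row. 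The version counter $\nu(i)$ is incremented whenever the update changed some parent pointer. For each $x$ whose state $\sigma$ changed, we append $(\nu(i),\rankv_i(x),\sid_i(x))$ to its change list. The row label $(\nu(i),H_i,B_i)$ is obtained by walking the root path $A_i$ from $a_1$ upward and recording, for each rank, the path identifier and the first node seen at that rank. This is $O(q)$, plus $O(L)$ to initialize the defaults.

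\textbf{Column labels, the main obstacle.} I expect the hardest step to be keeping the column-label construction, in particular building the fusion-tree predecessor structures of \Cref{thm:fusion}, within $O(pq)$. Each column has $k\le L+1=O(\log(q+1))$ keys of $b=\Theta(\log(q+1))$ bits. My plan is to use the fact that a static fusion node on $k$ keys can be built in time polynomial in $k$ (sketch extraction, key compression and the associated rank tables). This gives $q\cdot\mathrm{poly}(\log(q+1))=O(q^2)$ total, which is $O(pq)$ because $q\le p$ after the transposition step. If a polynomial bound is not available, the fallback is to note that $k=O(b)$: then a single fused node with $b$-bit multiplication suffices, and its construction is $O(k^2)$ word operations. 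Packing all fields into words, as in \Cref{sec:elementary-decoder}, is linear in the label length. Summing the per-row and per-column costs gives $O(pq)$.
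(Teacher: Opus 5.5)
Your proposal is correct and follows essentially the same route as the paper. The shared steps are transposing to $s$ columns with an orientation bit, computing $A_i$ by marked upward traversals in $O(s)$ per row, recomputing sizes, ranks, solid paths and inherited identifiers in $O(s)$ per row, and building each column's fusion-tree predecessor structure in $b^{O(1)}$ time, for an $O(s^2)=O(pq)$ total.

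One small correction: your fallback claim that a single fusion node with $b$-bit words can hold all $k=\Theta(b)$ keys is not accurate. A Fredman--Willard node holds only $b^{\epsilon}$ keys for some $\epsilon<1$, so a constant-height tree is needed. Your primary argument, polynomial-time construction per node (which the paper cites as $O(d^4)$), already suffices without the fallback.
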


\begin{proof}
Since $\Gamma$ is symmetric, taking the transpose
preserves $\Gamma$-freeness and commutes with
$\Gamma$-free closure. We may therefore orient
the input so that it has $r=\max\{p,q\}$ rows
and $s$ columns. We store one orientation bit in each label indicating whether
the input was transposed. If so, the decoder exchanges the
roles of the two input labels before applying the matrix decoder.

For each row $i$, let $S_i$ be its support.
To compute $A_i$, traverse the ancestor paths of nodes in $S_i$ in $F_{i-1}$, marking
visited nodes with the current row timestamp and stopping each traversal at the first
marked vertex. Each node is visited at most once during the row, so $A_i$ is found in
$O(s)$ time. 
Scanning the column order then installs $A_i$ as an increasing root path.

After each update, subtree sizes, ranks, and solid paths
can be recomputed in $O(s)$ time.
We assign identifiers as in \Cref{thm:stable-ids},
update column histories, and record the row snapshot.
Thus, the forest construction and label maintenance
take $O(rs)=O(pq)$ time.
Serializing the labels also takes $O(rs)$ time,
since each label contains $O(\log(s+1))$ fields.

To obtain constant-time decoding, we build a fusion
tree for the version
keys of each column history, as in \Cref{thm:fusion}.
We use $b=\Theta(\log(s+1))$ as the word size
for these structures, as in \Cref{sec:labels}.
By \Cref{cor:short-histories}, each history has
$k=O(b)$ keys, so the branching factor
$b^{\Theta(1)}$ gives a constant-height tree.
A fusion-tree node with $d$ keys can be constructed
deterministically in $O(d^4)$ time~\cite{FredmanWillard93}.
Hence, each predecessor structure can be constructed
in $b^{O(1)}$ time, including its auxiliary information.
The total construction time for all $s$ structures is
$O(s\log^{O(1)}(s+1))=O(s^2)=O(rs)$.
The label-size bound and constant-time decoding
follow from \Cref{thm:intro-compiler}.
\end{proof}

See \Cref{app:pseudocode} for the detailed construction algorithm.

\section{Applications}
\label{sec:applications}

We apply our labeling scheme to graph and hypergraph classes
whose associated matrices admit $\Gamma$-free orderings.
These orderings are needed to construct the labels,
but not to answer queries.
The corresponding centralized representations follow
by concatenating the labels, as in \Cref{cor:centralized}.

\subsection{Totally Balanced Matrices}

A \emph{cycle matrix} of order $k\ge3$ is a $k\times k$
matrix whose ones, after row and column permutations,
occur exactly at $(j,j)$ and $(j,j+1)$, with column
indices taken modulo $k$. A binary matrix is
\emph{totally balanced} if it contains no such submatrix.

\begin{theorem}[Anstee--Farber~\cite{AnsteeFarber84}]
\label{thm:AF}
A binary matrix is totally balanced if and only if
its rows and columns can be permuted independently
to make it $\Gamma$-free.
\end{theorem}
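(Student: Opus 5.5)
The plan is to treat the two directions separately. The direction from a $\Gamma$-free ordering to total balance is a short direct argument. The converse goes through doubly lexical orderings, in the style of Lubiw and Hoffman--Kolen--Sakarovitch.

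\emph{$\Gamma$-free $\Rightarrow$ totally balanced.*} I would first check that total balance is invariant under independent row and column permutations. This holds because the definition of a cycle submatrix already allows permutations. It therefore suffices to show that a $\Gamma$-free matrix $M$ contains no cycle submatrix. Deleting rows and columns while keeping their relative order preserves $\Gamma$-freeness by Definition~\ref{def:gamma-free}. So suppose some $k\times k$ submatrix $C$, $k\ge 3$, taken in the induced order, is a cycle matrix and is $\Gamma$-free.

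In $C$ every row and every column contains exactly two ones. Let $a$ be the smallest column of $C$, and let $h<i$ be the two rows with a one in column $a$. Write $S_h=\{a,b\}$ and $S_i=\{a,d\}$, where $b,d>a$. Applying Definition~\ref{def:gamma-free} to rows $h<i$ and columns $a<b$ gives $C(i,b)=1$, so $d=b$. Then rows $h$ and $i$ both have support $\{a,b\}$, so they close a cycle of length $2$ in the bipartite cycle graph of $C$. That graph is a single cycle on $2k\ge 6$ vertices, which is a contradiction.

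\emph{Totally balanced $\Rightarrow$ $\Gamma$-free ordering.} I would use a \emph{doubly lexical ordering}: a row and column ordering in which the rows, read as binary vectors, are lexicographically nondecreasing, and so are the columns. Reading is from the last coordinate backward, matching the convention that $\Gamma$ has its $0$ at the bottom right. I would first prove that every binary matrix admits such an ordering. The argument is that alternately sorting the rows and then the columns strictly increases a suitable potential in a finite poset, so the process terminates. Alternatively, one can cite Lubiw's result. I would then show that in a totally balanced matrix every doubly lexical ordering is $\Gamma$-free.

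\emph{Main obstacle.} The hard step is the last claim. I would argue by contradiction, taking a $\Gamma$ at rows $h<i$ and columns $a<b$ with $M(i,b)=0$, and choosing it extremal, for instance with $b$ maximal and then $i$ maximal.

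Since row $h$ precedes row $i$ lexically and they differ at $b$, some later column $b'>b$ must satisfy $M(i,b')=1$ and $M(h,b')=0$. Symmetrically, columns $a<b$ differ at row $i$, so some later row $i'>i$ satisfies $M(i',a)=0$ and $M(i',b)=1$. These witnesses give new $\Gamma$ patterns further to the lower right. Iterating the argument, extremality forces the chain of witnesses to close up, and the rows and columns visited form a cycle submatrix of order at least $3$. This contradicts total balance.

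The delicate part is guaranteeing that the extracted submatrix has ones \emph{only} on the cycle positions. I expect to handle this by choosing a shortest such alternating sequence: any extra one would yield a shorter configuration or a $\Gamma$ that violates extremality. If this bookkeeping becomes unwieldy, I would fall back on citing the standard proof of Anstee and Farber~\cite{AnsteeFarber84}.
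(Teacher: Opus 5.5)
Your first direction is correct. One slip: two rows of $C$ with the same support $\{a,b\}$ give a $4$-cycle $h,a,i,b$ in the bipartite graph, not a cycle of length $2$, and such a $4$-cycle cannot occur inside a single cycle of length $2k\ge 6$. The existence of doubly lexical orderings by alternate sorting is also fine, since both sorts maximize the string that reads the rows from bottom to top, each from right to left. The paper does not prove this theorem; it cites Anstee--Farber, and \Cref{cor:totally-balanced} relies on Lubiw's result that every doubly lexical ordering of a totally balanced matrix is $\Gamma$-free. That result is exactly the step you leave open, and your sketch of it fails at a concrete point. The witnesses $b'$ and $i'$ need not produce any new $\Gamma$, so an extremal choice of $\Gamma$ has nothing to contradict. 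For instance, take $Z=\left(\begin{smallmatrix}1&1&0&0\\1&0&1&0\\0&1&0&1\\0&0&1&1\end{smallmatrix}\right)$. It is doubly lexical in your convention (its rows, and its columns, read backward are $0011<0101<1010<1100$), and it is a cycle matrix of order $4$. Its only $\Gamma$ occupies rows $1,2$ and columns $1,2$. Your witnesses are forced to be $b'=i'=3$, with $Z(3,3)=0$, and no further $\Gamma$ exists. Moreover, with arbitrarily chosen witnesses you have no control over the entries off the intended cycle. That is precisely the bookkeeping you defer to an unspecified shortest-sequence argument.

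The missing idea is to choose every witness as the \emph{largest} differing index; then no extremality is needed. Start from any $\Gamma$ on rows $r_1<r_2$ and columns $c_1<c_2$. For $j\ge 2$, as long as $M(r_j,c_j)=0$, let $c_{j+1}$ be the largest column on which rows $r_{j-1},r_j$ differ, and let $r_{j+1}$ be the largest row on which columns $c_{j-1},c_j$ differ. We have $M(r_{j-1},c_j)=M(r_j,c_{j-1})=1$ and $M(r_j,c_j)=0$. Hence the lexical order yields $c_{j+1}>c_j$, $r_{j+1}>r_j$, $M(r_j,c_{j+1})=M(r_{j+1},c_j)=1$, and $M(r_{j-1},c_{j+1})=M(r_{j+1},c_{j-1})=0$. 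By maximality, rows $r_m$ and $r_{m+1}$ agree on every column after $c_{m+2}$. Therefore $M(r_j,c_l)=M(r_{j+1},c_l)=\cdots=M(r_{l-2},c_l)=0$ whenever $l\ge j+2$, and symmetrically $M(r_l,c_j)=0$. The indices strictly increase, so the process stops at the first $k\ge 3$ with $M(r_k,c_k)=1$. Using $M(r_j,c_j)=0$ for $1<j<k$, the ones of the $k\times k$ submatrix on $r_1,\dots,r_k$ and $c_1,\dots,c_k$ are exactly $(r_1,c_1)$, $(r_k,c_k)$, and $(r_j,c_{j+1})$, $(r_{j+1},c_j)$ for $1\le j<k$. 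These form a single cycle of length $2k$ in the bipartite graph. So this submatrix is a cycle matrix, contradicting total balance.
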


\begin{corollary}\label{cor:totally-balanced}
Every totally balanced $p\times q$ matrix admits
$O(\log^2 N)$-bit row and column labels supporting
entry queries in $O(1)$ time on the word RAM,
where $N=p+q$. The labels can be constructed
in $O(pq)$ time.
\end{corollary}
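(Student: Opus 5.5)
The plan is to reduce \Cref{cor:totally-balanced} to the labeling scheme for $\Gamma$-free matrices, with the permutations absorbed into the labels. Let $M$ be totally balanced.

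\textbf{Step 1: reordering.} By \Cref{thm:AF}, there are permutations $\pi$ of $[p]$ and $\tau$ of $[q]$ such that the matrix $M'(\pi(i),\tau(x))=M(i,x)$ is $\Gamma$-free.

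\textbf{Step 2: labels.} Apply \Cref{thm:construction-time} to $M'$. This gives labels $\rho_{M'}$ and $\kappa_{M'}$ of $O(\log^2(s+1)+\log N)=O(\log^2 N)$ bits with a fixed constant-time decoder $D$. Define the labels of $M$ by
\[
\rho_M(i)=\rho_{M'}(\pi(i)),\qquad \kappa_M(x)=\kappa_{M'}(\tau(x)).
\]
Then $D(\rho_M(i),\kappa_M(x))=M'(\pi(i),\tau(x))=M(i,x)$. So the decoder is unchanged and needs no knowledge of the permutations, because each label already describes its own row or column in the reordered matrix. Correctness and the label-size bound therefore carry over directly from \Cref{thm:intro-compiler}.

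\textbf{Step 3: construction time.} This is the main obstacle: to build the labels in $O(pq)$ time, the $\Gamma$-free ordering itself must be computed within that bound. \Cref{thm:AF} only asserts that the ordering exists. Once it is available, permuting the matrix costs $O(pq)$, and \Cref{thm:construction-time} adds another $O(pq)$.

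For the ordering, I would cite a known linear-time algorithm for totally balanced matrices. Doubly lexical orderings can be computed in $O(pq)$ time when the matrix is given as a dense $p\times q$ array, for instance by partition refinement as in Paige--Tarjan, Lubiw, or Spinrad's $O(pq)$-style bound for dense input. The cited fact I would rely on is that a doubly lexical ordering of a totally balanced matrix is $\Gamma$-free (Lubiw; Hoffman--Kolen--Sakarovitch). If this citation is not acceptable, the fallback is to assume the ordering is part of the input, or to state that time bound separately.
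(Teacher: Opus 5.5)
Your proposal is correct and matches the paper's proof: you reorder via \Cref{thm:AF}, label the reordered $\Gamma$-free matrix, and transfer each label back to its original row or column. You then get the $O(pq)$ bound from an $O(pq)$-time doubly lexical ordering, Lubiw's result that such an ordering of a totally balanced matrix is $\Gamma$-free, and \Cref{thm:construction-time}. One small correction: the partition-refinement algorithms of Paige--Tarjan and Lubiw carry logarithmic factors in the input size, so only Spinrad's dense-matrix algorithm gives the $O(pq)$ bound you need.
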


\begin{proof}
Permute the rows and columns as in \Cref{thm:AF}
and apply \Cref{thm:main} to the resulting
$\Gamma$-free matrix. Assign the resulting labels
to the corresponding rows and columns of the
original matrix.
A doubly lexical ordering of a $p\times q$ matrix
can be computed in $O(pq)$ time~\cite{Spinrad93},
and every such ordering of a totally balanced matrix
is $\Gamma$-free~\cite{Lubiw87}.
The construction time then follows from
\Cref{thm:construction-time}.
\end{proof}

\subsection{Chordal Bipartite Graphs}

For a bipartite graph $G=(U,V,E)$, its biadjacency matrix has
entry $(u,v)$ equal to one exactly when $uv\in E$.
The graph is \emph{chordal bipartite} if it has no induced
cycle of length at least six.

\begin{theorem}[Golumbic--Goss~\cite{GolumbicGoss78};
Anstee--Farber~\cite{AnsteeFarber84}]
\label{thm:CB-characterization}
A bipartite graph $G=(U,V,E)$ is chordal bipartite
if and only if the rows and columns of its biadjacency
matrix can be permuted independently to make it
$\Gamma$-free.
\end{theorem}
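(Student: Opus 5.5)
The statement is a cited characterization (Golumbic--Goss; Anstee--Farber), so the plan is to reduce it to \Cref{thm:AF} through one graph-theoretic equivalence. I would prove that a bipartite graph $G=(U,V,E)$ is chordal bipartite if and only if its biadjacency matrix $B$ is totally balanced. Applying \Cref{thm:AF} to $B$ then gives the claim immediately, since permuting rows and columns of $B$ independently is just relabeling $U$ and $V$.

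For the equivalence I would first identify cycle submatrices with induced cycles. Suppose $G$ has an induced cycle of length $2k\ge 6$. Write it as $u_1v_1u_2v_2\cdots u_kv_k u_1$, and index so that $u_j$ is adjacent to exactly $v_{j-1}$ and $v_j$ among the cycle vertices. Take the rows $u_1,\ldots,u_k$ and columns $v_1,\ldots,v_k$ of $B$. Because the cycle is induced, this $k\times k$ submatrix has ones exactly at positions $(j,j)$ and $(j,j-1)$. After a cyclic shift of the columns, this is a cycle matrix of order $k\ge 3$, so $B$ is not totally balanced.

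For the converse, suppose $B$ contains a cycle submatrix of order $k\ge3$. Its rows and columns give vertices $u_1,\ldots,u_k$ and $v_1,\ldots,v_k$ inducing exactly the edges $u_jv_j$ and $u_jv_{j+1}$, with indices modulo $k$. These vertices therefore span an induced cycle of length $2k\ge 6$ in $G$. Note that a submatrix selects arbitrary subsets of rows and columns, and the only edges among the selected vertices are the entries of the submatrix. So "induced subgraph" and "submatrix" correspond exactly, which makes both directions immediate.

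The only step needing care is this correspondence. I must check that the cycle matrix definition, "after row and column permutations", matches an arbitrary vertex ordering around the cycle. I must also check that order-$k$ cycle matrices correspond to cycles of length exactly $2k$, so that the bound $k\ge3$ translates to length at least six. Once this is settled, the theorem follows from \Cref{thm:AF}.
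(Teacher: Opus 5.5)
Your proposal is correct. The paper gives no proof of this statement; it only cites it. Your reduction is the standard argument behind that attribution: since $U$ and $V$ are independent sets, induced cycles of length $2k\ge 6$ in $G$ correspond exactly to cycle submatrices of order $k\ge 3$ in the biadjacency matrix, so $G$ is chordal bipartite if and only if its biadjacency matrix is totally balanced, and \Cref{thm:AF} then finishes the proof. This also matches how the paper treats chordal bipartite graphs, through totally balanced matrices, for example when \Cref{cor:chordal-bipartite} gets its construction time via \Cref{cor:totally-balanced}.
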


\begin{corollary}\label{cor:chordal-bipartite}
Every $n$-vertex chordal bipartite graph admits
$O(\log^2 n)$-bit adjacency labels supporting adjacency
queries in $O(1)$ time on the word RAM.
The labels can be constructed in $O(n^2)$ time.
This label-size bound is asymptotically optimal in the worst case.
\end{corollary}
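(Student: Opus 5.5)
The plan has two parts: an upper bound obtained by reduction to \Cref{cor:totally-balanced}, and a lower bound obtained by counting. For the upper bound, let $G=(U,V,E)$ be an $n$-vertex chordal bipartite graph, with $p=|U|$, $q=|V|$ and $p+q=n$. By \Cref{thm:CB-characterization}, its biadjacency matrix is totally balanced, so \Cref{cor:totally-balanced} assigns row labels to the vertices of $U$ and column labels to the vertices of $V$. Each label has $O(\log^2 n)$ bits and is constructed in $O(pq)\subseteq O(n^2)$ time. Edgeless sides cause no trouble: if $p=0$ or $q=0$ the graph has no edges and all labels can be trivial.

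There is one subtlety. An adjacency query receives two arbitrary vertices, in either order, and the decoder is not told which side each vertex lies on. I would therefore prepend to every label a side bit (row or column) and a flag marking degenerate cases. The decoder works as follows: if both bits are equal, it outputs $0$, since the vertices are on the same side and hence nonadjacent. Otherwise it orders the two labels as (row, column) and runs the matrix decoder of \Cref{thm:main}. This adds $O(1)$ bits and $O(1)$ time.

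For the lower bound, I count labeled chordal bipartite graphs. Fix an $m\times m$ $\Gamma$-free matrix. It is totally balanced by \Cref{thm:AF}, so by \Cref{thm:CB-characterization} it is the biadjacency matrix of a chordal bipartite graph on the fixed vertex sets $U=\{u_1,\dots,u_m\}$ and $V=\{v_1,\dots,v_m\}$, giving a graph on $n=2m$ labeled vertices. Distinct matrices give distinct labeled graphs, because the bipartition and the vertex names are fixed, so the edge sets differ. By \Cref{thm:spinrad}, this yields $2^{\Omega(n\log^2 n)}$ distinct labeled $n$-vertex chordal bipartite graphs. For odd $n$, I add an isolated vertex.

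Now I apply the counting argument of \Cref{lem:label-counting}, adapted to graphs. There are fewer than $2^{n(b+1)}$ label assignments, and a fixed decoder determines the graph from them, so $b\ge \Omega(n\log^2 n)/n-1=\Omega(\log^2 n)$.

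I expect the only real care to be needed in the upper bound's side-bit handling, ensuring the decoder stays independent of $G$. The optimality argument is a routine transfer of the matrix count to labeled graphs; the one point to check there is that the bipartition is fixed, so distinct matrices genuinely give distinct labeled graphs.
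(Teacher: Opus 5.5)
Your proposal is correct and follows the paper's proof essentially step for step. For the upper bound, you prepend a side bit and check for same-side vertices before running the matrix decoder, with construction time taken from the totally balanced case. For optimality, you fix the bipartition so that distinct $m\times m$ $\Gamma$-free matrices give distinct labeled chordal bipartite graphs, then combine Spinrad's count with the label-counting argument.
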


\begin{proof}
Choose the vertex orders as in \Cref{thm:CB-characterization}
and apply \Cref{thm:main} to the biadjacency matrix.
For each vertex, we store its corresponding row or
column label and one bit indicating its side of
the bipartition.
If the queried vertices belong to the same side of
the bipartition, we return zero; otherwise, we use
the matrix decoder.
By \Cref{cor:totally-balanced} and \Cref{thm:construction-time},
the labels can be constructed in $O(n^2)$ time.

For the lower bound, fix a bipartition into two sets
of $m$ labeled vertices.
Every $m\times m$ $\Gamma$-free matrix defines a
chordal bipartite graph on these vertices, and
different matrices define different graphs.
By \Cref{thm:spinrad} and \Cref{lem:label-counting},
the worst-case label length is
$\Omega(\log^2 n)$.
\end{proof}

\subsection{Strongly Chordal Graphs and Digraphs}

A graph is \emph{chordal} if it has no induced cycle of length at least four.
A \emph{chord} joins two nonconsecutive cycle vertices. A graph is
\emph{strongly chordal} if it is chordal and every even cycle of length at least
six has a chord joining vertices at odd distance along that cycle~\cite{Farber83}.
For a simple graph $G$, let $M^+(G)$ be the closed-neighborhood matrix: its
$(u,v)$ entry is one exactly when $u=v$ or $uv\in E(G)$.

\begin{theorem}[Farber~\cite{Farber83}]
\label{thm:SC-characterization}
A simple graph $G$ is strongly chordal if and only if some vertex order,
used for both rows and columns, makes $M^+(G)$ $\Gamma$-free.
\end{theorem}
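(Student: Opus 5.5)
The plan is to route the statement through total balancedness and doubly lexical orderings, both of which the paper already uses. The chain is: $G$ strongly chordal $\iff$ $M^+(G)$ totally balanced $\iff$ some single vertex order makes $M^+(G)$ $\Gamma$-free. The first equivalence is purely graph-theoretic. The second is a symmetric version of \Cref{thm:AF}: we need the \emph{same} permutation on rows and columns, which \Cref{thm:AF} does not give directly.

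\textbf{Step 1 ($\Gamma$-free order $\Rightarrow$ strongly chordal).} Suppose $\pi$ makes $M^+(G)$ $\Gamma$-free. Then $M^+(G)$ is totally balanced: a $\Gamma$-free matrix contains no cycle submatrix, as in the easy direction of \Cref{thm:AF}. So it suffices to show that a graph violating strong chordality yields a cycle submatrix in $M^+(G)$.
\begin{itemize}
\item If $v_1,\dots,v_k$ is an induced cycle with $k\ge4$, take rows $v_1,\dots,v_k$ and columns $v_2,\dots,v_k,v_1$. The row of $v_j$ meets this column set exactly in $v_j$ and $v_{j+1}$, because the cycle is induced and the diagonal entries are $1$. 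This is a $k\times k$ cycle matrix.
\item If $v_1,\dots,v_{2m}$ is an even cycle with $m\ge3$ and no odd chord, take rows $v_1,v_3,\dots,v_{2m-1}$ and columns $v_2,v_4,\dots,v_{2m}$. The absence of odd chords gives an $m\times m$ cycle submatrix of the biadjacency pattern. The diagonal of $M^+$ plays no role here, since rows and columns are disjoint vertex sets.
\end{itemize}

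\textbf{Step 2 (strongly chordal $\Rightarrow$ totally balanced).} Conversely, suppose $M^+(G)$ contains a $k\times k$ cycle submatrix with rows $u_1,\dots,u_k$ and columns $w_1,\dots,w_k$, where row $u_j$ meets exactly $w_j$ and $w_{j+1}$. Read it as a closed walk $u_1 w_2 u_2 w_3\cdots$ in $G$, with repeated vertices allowed because of the diagonal. Choose a minimal such submatrix and extract either an induced cycle of length $\ge4$ or an even cycle of length $\ge6$ with no odd chord. The zero entries of the cycle matrix are exactly what forbid odd chords. This case analysis is Farber's original argument, and I would follow it.

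\textbf{Step 3 (totally balanced symmetric $\Rightarrow$ symmetric $\Gamma$-free order).} This is the step I expect to be the main obstacle. The plan is to use doubly lexical orderings. By Lubiw~\cite{Lubiw87}, every symmetric matrix has a doubly lexical ordering with the same permutation on rows and columns. One can refine any doubly lexical ordering toward symmetry, or directly use Lubiw's symmetric version. By the fact already quoted in the proof of \Cref{cor:totally-balanced}, every doubly lexical ordering of a totally balanced matrix is $\Gamma$-free~\cite{Lubiw87}. Applying this to $M^+(G)$ with the symmetric ordering gives a single vertex order, used for both rows and columns, that makes $M^+(G)$ $\Gamma$-free.

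If Lubiw's symmetric doubly lexical result is not available as a black box, the fallback is Farber's direct route. First, show that every strongly chordal graph has a simple vertex $v$: one whose closed neighborhoods $\{N[u]:u\in N[v]\}$ form a chain under inclusion. Then place $v$ first, recurse on $G-v$ (strong chordality is hereditary), and check that this strong elimination ordering is precisely the $\Gamma$-free condition of \Cref{def:gamma-free} for $M^+(G)$. The existence of a simple vertex is the genuinely hard part of that route.
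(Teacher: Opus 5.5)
The paper does not prove this statement; it imports it from Farber, so your proposal has to stand on its own. The overall structure is sound: $G$ strongly chordal iff $M^+(G)$ totally balanced, and then a symmetric doubly lexical ordering of a totally balanced matrix is $\Gamma$-free. This is essentially how Lubiw recovers Farber's theorem. Step~3 is acceptable as a citation of Lubiw's theorem that every symmetric matrix has a symmetric doubly lexical ordering. Drop the phrase ``refine toward symmetry'', which is not an argument.

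However, the first bullet of Step~1 is false as written. Your rows and columns are the same vertex set $\{v_1,\dots,v_k\}$, so row $v_j$ meets $v_{j-1}$, $v_j$ and $v_{j+1}$. Every row of that submatrix therefore has three ones, and it is not a cycle matrix. Correct choices do exist:
\begin{itemize}
\item for odd $k=2m+1\ge5$, take rows $v_1,v_3,\dots,v_{2m+1}$ and columns $v_1,v_2,v_4,\dots,v_{2m}$;
\item for $k=4$, take rows $v_1,v_2,v_3$ and columns $v_1,v_3,v_4$;
\item even $k\ge6$ is already covered by your second bullet, since an induced cycle has no chords.
\end{itemize}
Simpler still, you get chordality without total balancedness. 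If $h<a<b$ and $v_a,v_b\in N(v_h)$, apply \Cref{def:gamma-free} to rows $h<a$ and columns $a<b$, using $M^+(v_a,v_a)=1$; this gives $v_av_b\in E(G)$. So the order is a perfect elimination ordering, and the earliest vertex of any cycle of length at least four sees a chord.

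Step~2 is only deferred to ``Farber's original argument'', which leans on the very source of the theorem you are proving. It closes quickly, though, and needs no minimality. Since row $u_j$ has ones only at $w_j,w_{j+1}$ and $M^+$ has ones on its diagonal, the only possible coincidences are $u_j=w_j$ or $u_j=w_{j+1}$. Consider the closed walk $u_1w_2u_2w_3\cdots u_kw_1$.
\begin{itemize}
\item If there are no coincidences, the walk is a cycle of length $2k\ge6$. Its odd-distance pairs are exactly the pairs $\{u_j,w_l\}$, and the zeros of the cycle matrix exclude every such chord.
\item If $z=u_l=w_l$ (the case $z=u_l=w_{l+1}$ is symmetric), column $w_l$ and row $u_l$ show that $z$ is adjacent to exactly two other walk vertices, $u_{l-1}$ and $w_{l+1}$. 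These are distinct and nonadjacent, because $M^+(u_{l-1},w_{l+1})=0$ for $k\ge3$. The rest of the walk joins them while avoiding $z$. A shortest such path together with $z$ is then an induced cycle of length at least four, contradicting chordality.
\end{itemize}
With these repairs, your proposal is a complete proof modulo Lubiw's symmetric doubly lexical ordering theorem, which is where the real depth of Farber's characterization lies.
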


\begin{corollary}\label{cor:strongly-chordal}
Every $n$-vertex strongly chordal graph admits
$O(\log^2 n)$-bit adjacency labels supporting
adjacency queries in $O(1)$ time on the word RAM.
The labels can be constructed in $O(n^2)$ time.
This label-size bound is asymptotically optimal
in the worst case.
\end{corollary}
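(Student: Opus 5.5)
The plan mirrors \Cref{cor:chordal-bipartite}, with the closed-neighborhood matrix in place of the biadjacency matrix. For the upper bound, let $G$ be strongly chordal on $n$ vertices. By \Cref{thm:SC-characterization} there is a vertex order $\pi$ under which $M^+(G)$, with rows and columns both indexed by $\pi$, is $\Gamma$-free. Apply \Cref{thm:main} (equivalently \Cref{thm:intro-compiler}) to this $n\times n$ matrix, with $N=2n$. Each vertex $v$ then receives a row label $\rho(v)$ and a column label $\kappa(v)$, each of $O(\log^2 n)$ bits. The label of $v$ is the concatenation of $\rho(v)$ and $\kappa(v)$; a fixed-width header field records the offset of $\kappa(v)$, so the decoder can split the label in $O(1)$ time.

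To decide whether $u,v$ are adjacent, the decoder runs the matrix decoder on $\rho(u)$ and $\kappa(v)$, which returns $M^+(u,v)$. This already gives the answer for $u\neq v$. The two labels alone cannot tell whether $u=v$, so we also store the column index $x=\pi(v)$, which $\kappa(v)$ already contains. The decoder compares it with the index stored for $u$ and returns $0$ on equality, since the graph is simple; alternatively, adjacency can be defined to include the diagonal. The query time is $O(1)$.

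For construction time, first compute a $\Gamma$-free order. A strong elimination ordering, or equivalently a doubly lexical ordering of $M^+(G)$, can be found in $O(n^2)$ time~\cite{Spinrad93}, and for strongly chordal graphs such an ordering yields a $\Gamma$-free $M^+$~\cite{Lubiw87}. \Cref{thm:construction-time} then builds the labels in $O(n^2)$ time. The one point to check is that a doubly lexical ordering here uses the same order on rows and columns. For a symmetric matrix this can be arranged (Lubiw's symmetric doubly lexical ordering), and I expect to cite it rather than prove it.

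The lower bound is the main obstacle, because the construction must produce many distinct labeled strongly chordal graphs. I would use the split-graph construction. Take an $m\times m$ $\Gamma$-free matrix $Q$, and build a graph on a clique $K=\{k_1,\dots,k_m\}$ and an independent set $I=\{i_1,\dots,i_m\}$, with $i_a k_b$ an edge iff $Q(a,b)=1$. Split graphs are chordal, and I expect strong chordality to follow from the $\Gamma$-freeness of $Q$: order $I$ first and then $K$, and check that $M^+$ is $\Gamma$-free. The blocks are $I_{\text{identity}}$, $Q$, $Q^T$ and the all-ones matrix; if needed, I would reverse the orders or use the characterization that a split graph is strongly chordal iff its $I$-versus-$K$ biadjacency matrix is totally balanced. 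Distinct matrices $Q$ give distinct labeled graphs on $n=2m$ vertices, so \Cref{thm:spinrad} yields $2^{\Omega(n\log^2 n)}$ such graphs. An adjacency-labeling analogue of \Cref{lem:label-counting} (fewer than $2^{n(b+1)}$ label assignments) then forces labels of $\Omega(\log^2 n)$ bits. If the $\Gamma$-freeness check for the split graph fails in some block, the fallback is the totally balanced criterion together with \Cref{thm:AF}.
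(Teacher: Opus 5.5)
Your proposal is correct. The upper bound and the construction-time argument are essentially the paper's: order the vertices by \Cref{thm:SC-characterization}, apply \Cref{thm:main} to $M^+(G)$, give each vertex its row label, its column label and its index, return $0$ when the indices coincide, and use \cite{Spinrad93} together with \Cref{thm:construction-time} for the $O(n^2)$ bound.

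The lower bound is where you differ. The paper cites Spinrad~\cite{Spinrad95} directly for the existence of $2^{\Omega(n\log^2 n)}$ labeled strongly chordal graphs and then applies \Cref{lem:label-counting}. You instead derive that count from \Cref{thm:spinrad} through the split-graph embedding, which is exactly the argument the paper gives later for \Cref{cor:split-boundary}.

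The $\Gamma$-freeness check you left as expected does go through in your orientation, so no fallback is needed. Take $\bigl(\begin{smallmatrix} I_m & Q\\ Q^\top & J_m\end{smallmatrix}\bigr)$ with rows $h<i$ and columns $a<b$.
\begin{itemize}
\item If both rows are independent-set rows, $a$ cannot be an independent-set column, since that would force $a=h=i$. So $a<b$ are both clique columns and the pattern lies inside $Q$.
\item If $i$ is a clique row and $b$ is a clique column, then the lower-right entry is already $1$.
\item Otherwise $i$ is a clique row and $a,b$ are both independent-set columns. If $h$ is an independent-set row this is impossible, because that row has a single $1$ there. If $h$ is a clique row, the pattern lies inside $Q^\top$, which is $\Gamma$-free since $\Gamma^\top=\Gamma$.
\end{itemize}

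Your route is self-contained given \Cref{thm:spinrad} and Farber's characterization. It also proves more: the bound is already tight for strongly chordal split graphs. The paper's route is shorter but depends on an external count.
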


\begin{proof}
Order the vertices as in \Cref{thm:SC-characterization}
and apply \Cref{thm:main} to $M^+(G)$.
For each vertex, we store its row and column labels
and its index in the ordering.
To answer an adjacency query $(u,v)$, we first
compare their indices.
If they are equal, return zero; otherwise,
decode $M^+(G)(u,v)$ using the row label of $u$
and the column label of $v$.

The vertex ordering can be computed in
$O(n^2)$ time~\cite{Spinrad93}, and
\Cref{thm:construction-time} constructs the
labels in $O(n^2)$ time.

There are $2^{\Omega(n\log^2 n)}$ labeled
strongly chordal graphs~\cite{Spinrad95}.
By the counting argument of
\Cref{lem:label-counting}, some vertex label
must therefore use $\Omega(\log^2 n)$ bits.
\end{proof}

\paragraph{Strongly chordal split graphs.}
A graph is \emph{split} if its vertices can be partitioned
into a clique and an independent set.

\begin{corollary}\label{cor:split-boundary}
Every $n$-vertex strongly chordal split graph admits
$O(\log^2 n)$-bit adjacency labels supporting queries
in $O(1)$ time on the word RAM.
The labels can be constructed in $O(n^2)$ time.
This label-size bound is asymptotically optimal
in the worst case.
\end{corollary}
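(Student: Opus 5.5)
The plan is to treat the upper bound and the lower bound separately.

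\textbf{Upper bound.} Let $G$ be a strongly chordal split graph. It is strongly chordal, so \Cref{cor:strongly-chordal} already gives $O(\log^2 n)$-bit labels with $O(1)$-time queries. It also gives construction time $O(n^2)$: compute the ordering of \Cref{thm:SC-characterization} in $O(n^2)$ time~\cite{Spinrad93}, then apply \Cref{thm:construction-time} to $M^+(G)$. Nothing beyond that corollary is needed here.

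\textbf{Lower bound.} I would show that strongly chordal split graphs on $n=2m$ labeled vertices number $2^{\Omega(n\log^2 n)}$, and then invoke \Cref{lem:label-counting}. Fix a clique $K=\{k_1,\dots,k_m\}$ and an independent set $I=\{u_1,\dots,u_m\}$. For any $m\times m$ $\Gamma$-free matrix $B$, join $u_i$ to $k_x$ exactly when $B(i,x)=1$. Distinct matrices give distinct labeled graphs, so \Cref{thm:spinrad} supplies $2^{\Omega(m\log^2 m)}$ such graphs. The remaining task is to prove that each of them is strongly chordal.

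\textbf{Main obstacle.} The hard step is showing that $G$ is strongly chordal whenever $B$ is $\Gamma$-free. I would exhibit a vertex order $\pi$ that makes $M^+(G)$ $\Gamma$-free, and then conclude by \Cref{thm:SC-characterization}. My first choice is to list $u_1,\dots,u_m$ first and then $k_1,\dots,k_m$. With this order, $M^+(G)$ has the block form
\[
M^+(G)=\begin{pmatrix} I_m & B\\ B^{\top} & J_m\end{pmatrix}.
\]
Take two rows $h<i$ and two columns $a<b$ with ones at $(h,a)$, $(h,b)$ and $(i,a)$; I would check that $(i,b)$ is also one, case by case on the blocks containing the rows and columns.
\begin{itemize}
\item If $b$ is a clique column and $i$ is a clique row, the entry lies in $J_m$ and is one.
\item If both rows and both columns lie in the $B$ block, $\Gamma$-freeness of $B$ settles it.
\item If both columns come from $I$, the identity block $I_m$ has a single one per row, so row $h$ cannot contain both $a$ and $b$.
\item If $a$ comes from $I$ and $b$ from $K$, with both rows from $I$, then $(h,a)$ and $(i,a)$ both lie in $I_m$, which forces $h=a=i$, a contradiction.
\item If $h$ is an $I$-row and $i$ is a $K$-row with $i$ below $h$, then $(i,b)$ with $b\in K$ lies in $J_m$ and is one.
\end{itemize}

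\textbf{Fallback.} If the chosen order fails some case, for instance one involving the $B^{\top}$ block, I would instead reverse the order within $K$ or within $I$, using the fact that $\Gamma$-freeness is preserved by transposition. The other option is to invoke the standard fact that a split graph is strongly chordal if and only if its clique–independent-set biadjacency matrix is totally balanced; $B$ is totally balanced by \Cref{thm:AF}.

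\textbf{Conclusion.} Once strong chordality is established, \Cref{lem:label-counting} gives $\Omega(\log^2 n)$ for even $n$. For odd $n$, add one isolated independent vertex.
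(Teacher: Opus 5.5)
Your plan follows the paper's proof. The upper bound is inherited from the strongly chordal corollary. For the lower bound, you realize an $m\times m$ $\Gamma$-free matrix as the cross edges of a split graph, list the independent set before the clique, and check that $\bigl(\begin{smallmatrix} I_m & B\\ B^{\top} & J_m\end{smallmatrix}\bigr)$ is $\Gamma$-free. Your order is the right one, but the case check has a hole. Your third case says that if both columns $a<b$ lie in $I$, then row $h$ contains at most one of them because of $I_m$. That holds only when $h$ is an $I$-row. Suppose instead that both rows $h<i$ are clique rows and both columns lie in $I$. Then the four entries sit in the lower-left block $B^{\top}$, which can have many ones per row. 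None of your other cases applies, since each assumes $b\in K$ or $h\in I$. This is exactly the configuration you flagged as uncertain in your fallback, and as written the proof does not establish $\Gamma$-freeness there.

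The missing observation is that $B^{\top}$ is $\Gamma$-free because $\Gamma$ equals its own transpose. Ones of $B^{\top}$ at $(h,a),(h,b),(i,a)$ with a zero at $(i,b)$ are ones of $B$ at $(a,h),(a,i),(b,h)$ with a zero at $(b,i)$. That is a copy of $\Gamma$ in $B$ on rows $a<b$ and columns $h<i$. With this remark your original order works and the argument is complete. The paper uses precisely this remark; its base matrix is indexed clique-by-independent, so its transposed block sits in the upper right instead.

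Drop the proposed repair of reversing the order within $K$ or within $I$. Reversing only the columns, or only the rows, of a $\Gamma$-free matrix can create $\Gamma$: for example, $\bigl(\begin{smallmatrix}1&1\\0&1\end{smallmatrix}\bigr)$ turns into $\Gamma$ when its columns are reversed. Your other fallback is true: a split graph whose clique--independent-set matrix is totally balanced is strongly chordal. But its proof is this same block verification after permuting $B$ into $\Gamma$-free form, so it does not let you skip the missing case.
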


\begin{proof}
The upper bound and construction time follow from
\Cref{cor:strongly-chordal}.
For the lower bound, let $A$ be an $m\times m$
$\Gamma$-free matrix. Construct a split graph with
an independent set $Y$ and a clique $X$, each of
size $m$, where the edges between $X$ and $Y$
are specified by $A$. Ordering $Y$ before $X$,
its closed-neighborhood matrix is
\[
\begin{pmatrix}
I_m & A^\top\\
A & J_m
\end{pmatrix},
\]
where $I_m$ and $J_m$ are the identity and all-one
matrices, respectively.

Since $\Gamma$ is symmetric, $A^\top$ is also $\Gamma$-free.
We show that this matrix is $\Gamma$-free. Consider a candidate
forbidden pattern on rows $h<i$ and columns $a<b$.
If both rows belong to $Y$, the first selected column
must belong to $X$, so the pattern would lie in $A^\top$.
If both rows belong to $X$, either $b\in X$, making
the lower-right entry one, or both columns belong
to $Y$, where the pattern would lie in $A$.
If $h\in Y$ and $i\in X$, either $b\in X$, again
making the lower-right entry one, or both columns
belong to $Y$, where the upper row has at most one
nonzero entry. Thus, the constructed graph is
strongly chordal by \Cref{thm:SC-characterization}.
Since $A$ can be recovered from its cross edges,
\Cref{thm:spinrad} and \Cref{lem:label-counting} give
the matching $\Omega(\log^2 n)$ lower bound.
\end{proof}

\paragraph{Strongly chordal digraphs.}
Hell, Hern\'andez-Cruz, Huang, and Lin~
\cite{HellHernandezHuangLin19} define a digraph
(loops allowed) to be \emph{strongly chordal}
if some vertex order, used for both rows and
columns, makes its adjacency matrix $\Gamma$-free.

\begin{corollary}\label{cor:digraph}
Every $n$-vertex strongly chordal digraph admits
$O(\log^2 n)$-bit vertex labels supporting
arc queries in $O(1)$ time on the word RAM.
Given a $\Gamma$-free vertex ordering,
the labels can be constructed in $O(n^2)$ time.
This label-size bound is asymptotically optimal
in the worst case, even for loopless acyclic
two-layer digraphs.
\end{corollary}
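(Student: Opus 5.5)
The plan has three parts: an upper bound from the matrix scheme, a construction-time bound, and a lower bound from counting.

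\textbf{Upper bound.} Let $D$ be a strongly chordal digraph on $n$ vertices, possibly with loops. By definition there is a vertex order, used for both rows and columns, that makes its $n\times n$ adjacency matrix $M$ $\Gamma$-free. Apply \Cref{thm:main} to $M$ with $N=2n$; this gives row labels $\rho(u)$ and column labels $\kappa(u)$ of $O(\log^2 n)$ bits. The label of vertex $u$ is the concatenation of $\rho(u)$ and $\kappa(u)$, preceded by a fixed-width header recording the offset of $\kappa(u)$. To answer an arc query $(u,v)$, the decoder extracts $\rho(u)$ from the first label and $\kappa(v)$ from the second and runs the matrix decoder, which takes $O(1)$ time. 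Loops need no special handling, because diagonal entries are ordinary matrix entries; this is why vertex indices are not needed here, unlike in \Cref{cor:strongly-chordal}.

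\textbf{Construction time.} Given the $\Gamma$-free ordering, \Cref{thm:construction-time} with $p=q=n$ builds all labels in $O(n^2)$ time. The input is already $\Gamma$-free, so its closure equals itself by \Cref{thm:closure-forest}.

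\textbf{Lower bound.} Use loopless acyclic two-layer digraphs. Fix disjoint sets $Y$ and $X$ of $m$ labeled vertices each. For each $m\times m$ $\Gamma$-free matrix $A$, define the digraph with arcs $y\to x$ exactly when $A(y,x)=1$, for $y\in Y$ and $x\in X$. This digraph is loopless, and it is acyclic because every arc goes from $Y$ to $X$. Order $Y$ before $X$; the adjacency matrix is then
\[
\begin{pmatrix} 0 & A\\ 0 & 0\end{pmatrix}.
\]
This matrix needs a $\Gamma$-freeness check, which is the only step requiring care. Take rows $h<i$ and columns $a<b$ forming a $\Gamma$ pattern. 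Every one of the three required $1$ entries must lie in the top-right block, so rows $h,i\in Y$ and columns $a,b\in X$. The forced entry $(i,b)$ then also lies in that block, and the pattern would be a copy of $\Gamma$ inside $A$, which is impossible. Hence each such digraph is strongly chordal, and distinct matrices $A$ give distinct digraphs. By \Cref{thm:spinrad} there are $2^{\Omega(m\log^2 m)}$ of them on $n=2m$ vertices. The counting argument of \Cref{lem:label-counting}, applied to $n$ labeled objects with one label each, forces some label to have $\Omega(\log^2 n)$ bits. For odd $n$, add an isolated vertex.
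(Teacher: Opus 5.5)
Your proposal is correct and takes essentially the same route as the paper. Both store the row and column labels from \Cref{thm:main} at each vertex, use \Cref{thm:construction-time} for the $O(n^2)$ construction, and get the lower bound from the block matrix $\bigl(\begin{smallmatrix}0&A\\0&0\end{smallmatrix}\bigr)$ via \Cref{thm:spinrad} and the counting argument of \Cref{lem:label-counting}; your extra details (the offset header, loops as ordinary diagonal entries, and an isolated vertex for odd $n$) are sound refinements that the paper leaves implicit.
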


\begin{proof}
For the upper bound, apply \Cref{thm:main}
to the adjacency matrix in a $\Gamma$-free order.
Each vertex stores both its row and column labels,
and a query $u\to v$ uses the row label of $u$
and the column label of $v$.
The construction time follows from
\Cref{thm:construction-time}.

For the lower bound, let $A$ be an $m\times m$
$\Gamma$-free matrix and consider
\[
D_A=\begin{pmatrix}
0&A\\
0&0
\end{pmatrix}.
\]
The matrix $D_A$ is $\Gamma$-free, since any
forbidden pattern would have to lie entirely
in $A$. It is the adjacency matrix of a
loopless acyclic two-layer digraph, and $A$
can be recovered from $D_A$.
By \Cref{thm:spinrad} and \Cref{lem:label-counting},
the matching $\Omega(\log^2 n)$ lower bound follows.
\end{proof}

\subsection{\texorpdfstring{$\beta$}{beta}-Acyclic Hypergraphs}

A finite hypergraph $\mathcal H=(V,\mathcal E)$ consists of
vertices $V$ and a family $\mathcal E$ of subsets of $V$.
Its incidence matrix has one row per hyperedge and
one column per vertex, with entry $(E,v)$ equal to one
exactly when $v\in E$.
A hypergraph is $\beta$-acyclic if and only if its incidence
matrix is totally balanced~\cite{Fagin83}.
A hypergraph is \emph{simple} if it has no repeated hyperedges.

\begin{corollary}\label{cor:beta}
Every $\beta$-acyclic hypergraph admits
$O(\log^2 N)$-bit incidence labels supporting
incidence queries in $O(1)$ time on the word RAM,
where $N=|V|+|\mathcal E|$.
The labels can be constructed in
$O(|V||\mathcal E|)$ time.
This label-size bound is asymptotically optimal
in the worst case, even for simple hypergraphs
with nonempty hyperedges.
\end{corollary}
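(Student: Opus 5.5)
The upper bound comes from \Cref{cor:totally-balanced}; the lower bound needs a separate counting step, which I expect to be the only real obstacle.

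\textbf{Upper bound.} I would reduce to \Cref{cor:totally-balanced}. By the result of Fagin~\cite{Fagin83}, the incidence matrix of a $\beta$-acyclic hypergraph, with one row per hyperedge and one column per vertex, is totally balanced. Applying \Cref{cor:totally-balanced} to this $|\mathcal E|\times|V|$ matrix gives row labels for hyperedges and column labels for vertices. These labels have $O(\log^2 N)$ bits with $N=|V|+|\mathcal E|$, and answer incidence queries in $O(1)$ time. The labels are built in $O(|V||\mathcal E|)$ time, which already includes the doubly lexical ordering step. If the hypergraph has repeated hyperedges, they simply receive identical row labels, so nothing changes.

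\textbf{Lower bound, construction.} Here I must produce many simple hypergraphs with nonempty hyperedges. I would take $m\times m$ $\Gamma$-free matrices $A$, which number $2^{\Theta(m\log^2 m)}$ by \Cref{thm:spinrad}. To enforce distinct, nonempty rows, I would append an $m\times m$ identity block on new columns, giving $[A\mid I_m]$. Each row then contains its own private $1$, so rows are nonempty and pairwise distinct. I must check that $[A\mid I_m]$ is still $\Gamma$-free. Take a pattern on rows $h<i$ and columns $a<b$ with $M(h,a)=M(h,b)=M(i,a)=1$.
\begin{itemize}
\item If $a$ is an identity column, then only one row has a $1$ there, so $M(h,a)=M(i,a)=1$ is impossible.
\item If $a$ lies in $A$ and $b$ is an identity column, then $M(h,b)=1$ forces $b$ to be row $h$'s private column. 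But identity columns come after all columns of $A$, so $a<b$ holds. The bottom-right entry is then $M(i,b)=0$, and this is a genuine $\Gamma$.
\end{itemize}
So this fails; I would instead place the identity block first, as $[I_m\mid A]$.
\begin{itemize}
\item If $a$ is an identity column, the same argument as before rules it out.
\item If $a$ lies in $A$, then $b>a$ also lies in $A$, and the pattern would lie in $A$, which is impossible.
\end{itemize}
Hence $[I_m\mid A]$ is $\Gamma$-free and therefore totally balanced. Its hypergraph has $2m$ vertices and $m$ hyperedges, all nonempty and distinct. It is $\beta$-acyclic by \Cref{thm:AF} and Fagin's result.

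\textbf{Lower bound, counting.} Distinct matrices $A$ give distinct incidence matrices on the fixed row and column sets, so $A$ is recoverable. Then \Cref{lem:label-counting} with $N=3m$ gives labels of length $\Omega(m\log^2 m/N)=\Omega(\log^2 N)$. For a general $N$ not divisible by $3$, I would pad with at most two isolated vertices, i.e.\ zero columns, which preserves $\Gamma$-freeness.
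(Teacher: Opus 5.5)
Your proposal is correct and follows the paper's proof: the upper bound applies \Cref{cor:totally-balanced} to the incidence matrix, and the lower bound uses the $\Gamma$-free matrix $(I_m\ A)$, identity block first, with \Cref{thm:spinrad} and \Cref{lem:label-counting} at $N=3m$. Your padding with isolated vertices for $N$ not divisible by $3$ is a small detail the paper leaves implicit.
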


\begin{proof}
The upper bound and construction time follow by applying
\Cref{cor:totally-balanced} to the incidence matrix.

For the lower bound, consider an $m\times m$
$\Gamma$-free matrix $A$ and construct
$B_A=(I_m\ A)$.
The matrix $B_A$ is $\Gamma$-free, since any
forbidden pattern would have to lie entirely in $A$.
Moreover, its rows are nonempty and pairwise distinct,
and $A$ can be recovered from $B_A$.
Thus, by \Cref{thm:spinrad}, there are
$2^{\Omega(m\log^2 m)}$ distinct simple
$\beta$-acyclic hypergraphs with $m$ hyperedges
and $2m$ vertices.
Since $N=3m$, \Cref{lem:label-counting} gives
the matching $\Omega(\log^2 N)$ lower bound.
\end{proof}

\section{Related Work}
\label{sec:prior}

\paragraph{$\Gamma$-free matrices and matrix orderings.}
Forbidden matrix patterns are used to characterize
strongly chordal graphs~\cite{Farber83},
totally balanced matrices~\cite{AnsteeFarber84},
chordal bipartite graphs~\cite{GolumbicGoss78},
and $\beta$-acyclic hypergraphs~\cite{Fagin83}.
Related work on matrix orderings includes
Lubiw's study of doubly lexical
orderings~\cite{Lubiw87} and the work of
Hell, Hern\'andez-Cruz, Huang, and
Lin~\cite{HellHernandezHuangLin19,HellHernandezHuangLin21}
on strong chordality in graphs and digraphs
through $\Gamma$-free orderings.

As discussed in \Cref{sec:introduction},
Spinrad~\cite{Spinrad95} established the counting bound
for $\Gamma$-free matrices and gave an asymptotically
optimal global representation. However, his representation
does not provide efficient access to individual matrix entries.
Dahlhaus~\cite{Dahlhaus97} studied compact representations
of chordal and strongly chordal graphs specified by
ordered graphs with $n$ vertices and $m$ edges, using
$O(n+m)$ words, and proposed an auxiliary data structure
supporting adjacency queries in polylogarithmic time. Otachi, Okamoto, and Yamazaki~\cite{OtachiOkamotoYamazaki07}
showed that every $P_6$-free chordal bipartite graph
is a unit grid intersection graph and asked whether
all chordal bipartite graphs are grid intersection graphs.
Chandran, Francis, and Mathew~\cite{ChandranFrancisMathew11}
answered this negatively by showing that chordal
bipartite graphs have unbounded boxicity.

\paragraph{Adjacency labeling and counting bounds.}
Kannan, Naor, and Rudich~\cite{KannanNaorRudich92}
studied implicit graph representations and their
connection to induced-universal graphs; see also
the survey of Gavoille and Peleg~\cite{GavoillePeleg03}.
An induced-universal graph for a class contains every
graph in the class as an induced subgraph.
Such a graph with $U$ vertices gives adjacency labels
of $\lceil\log U\rceil$ bits, although this does
not necessarily imply efficient decoding.

For general $n$-vertex graphs,
Alon~\cite{Alon17} constructed asymptotically optimal
induced-universal graphs with
$(1+o(1))2^{(n-1)/2}$ vertices, while
Alstrup, Kaplan, Thorup, and
Zwick~\cite{AlstrupKaplanThorupZwick19}
obtained $n/2+O(1)$-bit adjacency labels.
For restricted graph classes, stronger bounds are known.
Alstrup, Dahlgaard, and
Knudsen~\cite{AlstrupDahlgaardKnudsen17}
obtained $\log n+O(1)$-bit adjacency labels for forests,
equivalently giving induced-universal graphs
with $O(n)$ vertices.
Alon and Nenadov~\cite{AlonNenadov19}
constructed induced-universal graphs with
$O(n^{\Delta/2})$ vertices for graphs of maximum
degree at most any fixed $\Delta\ge 2$.

The number of labeled graphs in a hereditary class
as a function of $n$, called its \emph{speed},
has been studied by Scheinerman and Zito~\cite{ScheinermanZito94}
and Balogh, Bollob\'as, and Weinreich~\cite{BaloghBollobasWeinreich00}.
Atminas et al.~\cite{AtminasCollinsLozinZamaraev15}
investigated implicit representations of graph classes
with factorial speed.
For general hereditary classes, Bonamy, Esperet,
Groenland, and Scott~\cite{BonamyEsperetGroenlandScott21}
showed that every hereditary graph class $\mathcal{C}$
admits labels of $\mu_{\mathcal{C}}(n)+o(n)$ bits,
where $\mu_{\mathcal{C}}(n)=n^{-1}\log|\mathcal{C}_n|$
and $\mathcal{C}_n$ denotes the set of its
$n$-vertex graphs up to isomorphism.
For strongly chordal and chordal bipartite graphs,
$\mu_{\mathcal{C}}(n)=\Theta(\log^2 n)$.
However, the additional $o(n)$ term prevents
this bound from establishing $O(\log^2 n)$-bit labels.

As discussed in \Cref{sec:introduction}, counting bounds alone
do not always give matching labeling schemes.
Hatami and Hatami~\cite{HatamiHatami22}
constructed hereditary classes with
$2^{O(n\log n)}$ labeled graphs that require
$n^{\Omega(1)}$-bit labels.
Bonnet, Duron, Sylvester, Zamaraev, and
Zhukovskii~\cite{BonnetDuronSylvesterZamaraevZhukovskii24}
established a logarithmic separation between counting
bounds and label sizes for certain monotone graph classes.
For hereditary classes with at most $n!c^n$
labeled $n$-vertex graphs, Bonnet, Duron, Sylvester,
and Zamaraev~\cite{BonnetDuronSylvesterZamaraev25}
obtained polylogarithmic adjacency labels.

\paragraph{Structural approaches to adjacency labeling.}
Gavoille and Labourel~\cite{GavoilleLabourel07}
gave $(1+o(1))\log n$-bit labels for graphs of
bounded treewidth.
For planar graphs, Bonamy, Gavoille, and
Pilipczuk~\cite{BonamyGavoillePilipczuk22}
obtained $(4/3+o(1))\log n$-bit labels using
graph product structure, subsequently improved
to $(1+o(1))\log n$ bits~\cite{DujmovicEtAl21}.
The latter bound has recently been extended to
all fixed proper minor-closed graph
classes~\cite{DujmovicEtAl26}.
Strongly chordal graphs contain all cliques,
while chordal bipartite graphs contain all
complete bipartite graphs.
Thus, neither class has bounded treewidth or
belongs to a fixed proper minor-closed class.

\paragraph{Ancestry labeling, dynamic trees, and persistence.}
Ancestry labeling for static rooted trees has been
studied extensively~\cite{AbiteboulAlstrupKaplanMiloRauhe06}.
Fraigniaud and Korman~\cite{FraigniaudKorman16}
obtained $\log n+O(\log\log n)$-bit ancestry labels
with constant-time decoding.
Korman, Peleg, and Rodeh~\cite{KormanPelegRodeh04}
studied labeling in dynamic trees, while
Cohen, Kaplan, and Milo~\cite{CohenKaplanMilo10}
considered online ancestry labeling with immutable labels.

Solid-path decompositions are classical in dynamic-tree
data structures~\cite{SleatorTarjan83}, while persistence
allows access to earlier versions of shared data
structures~\cite{DriscollSarnakSleatorTarjan89}.

\section{Conclusion and Further Directions}
\label{sec:conclusion}

We have shown that $\Gamma$-free matrices admit
asymptotically optimal row and column labels supporting
constant-time entry queries. The main ingredient is
a monotone forest representation whose historical
paths can be encoded using persistent path identities.

\paragraph{Succinct representations and navigation.}
An open question is whether strongly chordal and chordal
bipartite graphs admit succinct representations supporting
efficient queries beyond adjacency. For either graph class,
let $H(n)$ denote the logarithm of the number of labeled
$n$-vertex graphs. Can adjacency be supported in
$H(n)+o(H(n))$ bits with constant query time?
Another question is whether an $O(n\log^2 n)$-bit
representation can support neighbor reporting in
$O(1+\deg(v))$ time, constant-time distance queries,
or shortest-path reporting in $O(1+\ell)$ time,
where $\ell$ is the number of edges in the reported path.

\paragraph{Multiple historical paths.}
Our forest characterization suggests a parameter
for binary matrices based on the number of
historical root paths needed to represent each row.
For fixed row and column orders, let $k$ be the
minimum integer for which an ancestor-monotone
sequence of ordered forests represents every row $i$
as a union of at most $k$ complete root paths in $F_i$.
By \Cref{thm:intro-characterization}, $k\le 1$
exactly for $\Gamma$-free matrices.
Minimizing $k$ over suitable row and column orders
also gives parameters for the corresponding graph
classes. It would be interesting to characterize
matrices and graphs with small $k$, determine
the complexity of computing these parameters,
and investigate labeling bounds in terms of $k$.

More generally, it would be interesting to identify
other evolving structures in which local identities
change only a small number of times, allowing
compact labels for historical queries.

\section*{AI Methodology}
We used ChatGPT Plus (OpenAI, GPT-6) to assist with
improving the presentation of the manuscript, preparing illustrative figures
and examples, and checking proofs for potential errors.
We also used Aristotle (Harmonic) for additional proof
checking and Claude (Anthropic) for feedback on
manuscript drafts.
All research ideas, main techniques, and mathematical
proofs were developed by the authors, who also wrote
the initial manuscript.
The authors independently verified the mathematical
results and references and take full responsibility
for the content of this paper.

\bibliographystyle{alpha}
\bibliography{ref}

\newpage
\appendix

\appendix

\section{Supplementary Construction Details}
\label{app:pseudocode}

\subsection{Encoder Algorithm}

The following algorithm constructs the labels of a
$\Gamma$-free matrix with fixed row and column orders,
using the canonical forest updates of
\Cref{sec:dynamic-forest} and the persistent
identifiers of \Cref{thm:stable-ids}.

\begin{algorithmdef}[Canonical $\Gamma$-free encoder]
\label{alg:encoder}
Input: a $\Gamma$-free matrix
$M\in\bits^{p\times q}$ with fixed row and column orders.

\begin{enumerate}[leftmargin=2em]
\item Initialize $F_0$ as the edgeless forest and assign
distinct positive identifiers to its singleton solid paths.
Set $\nu(0)=0$ and, for each column $x$, initialize
$H(x)=((0,0,\mathrm{id}_0(x)))$.

\item For $i=1,\ldots,p$, do the following:
  \begin{enumerate}[label=(\alph*),leftmargin=2em]
  \item Let $S_i=\{x:M(i,x)=1\}$ and compute
  \[
  A_i=\bigcup_{x\in S_i}\Anc_{F_{i-1}}(x).
  \]

  \item If $A_i\ne\varnothing$, replace it by the
  increasing root path on $A_i$, preserving every
  parent edge whose child lies outside $A_i$.
  Otherwise, set $F_i=F_{i-1}$.

  \item Set
  \[
  \nu(i)=\nu(i-1)+\mathbf 1[F_i\ne F_{i-1}].
  \]
  Compute the subtree sizes, ranks, and solid paths
  of $F_i$.

  \item Assign identifiers to the solid paths of $F_i$.
  A path containing a stable node inherits the
  identifier of its old solid path, as guaranteed by
  \Cref{thm:stable-ids}. Assign a fresh identifier
  to every remaining path.

  \item For each column $x$, append
  \[
  (\nu(i),\rankv_i(x),\sid_i(x))
  \]
  to $\mathcal H(x)$ if
  $\sigma_i(x)\ne\sigma_{i-1}(x)$.

  \item Initialize $H_i[r]=0$ and $B_i[r]=q+1$
  for every $r\in\{0,\ldots,L\}$.
  If $S_i\ne\varnothing$, traverse
  $\Anc_{F_i}(s_i)$ from $s_i=\min S_i$
  toward the root. At the first node $x$ of
  each encountered rank $r$, set
  \[
  H_i[r]=\sid_i(x),\qquad B_i[r]=x.
  \]
  Store $\nu(i)$ with the row snapshot.
  \end{enumerate}

\item Serialize the row snapshots and column histories
using the label format of \Cref{sec:labels}.
For constant-time decoding, construct the label-local
predecessor structure of \Cref{thm:fusion} on the
version numbers of each column history.
\end{enumerate}
\end{algorithmdef}

By \Cref{cor:direct-exactness}, each row support is the
root path used in Step~2(f).
The state records and row snapshots therefore satisfy
\Cref{thm:membership-test}.
The construction time is analyzed in
\Cref{thm:construction-time}.

\subsection{Relationship to Spinrad's Construction}
\label{app:spinrad}

Spinrad~\cite{Spinrad95} considers a directed graph
on the columns induced by their co-occurrence in row
prefixes. For each $i\in\{0,\ldots,p\}$, let $G_i$
be the digraph on $[q]$ containing an arc $x\to y$
if $x<y$ and some row $h\le i$ contains both
$x$ and $y$.

\begin{proposition}\label{prop:spinrad-correspondence}
For every $\Gamma$-free matrix $M$ and every prefix
$i$, the reflexive ancestor relation of $F_i$
equals the reflexive-transitive closure of $G_i$.
Equivalently, $F_i$ is the transitive reduction
of the reachability relation of $G_i$.
\end{proposition}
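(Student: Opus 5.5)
We prove the two inclusions separately, writing $\le_i$ for the reflexive ancestor relation of $F_i$ (so $x\le_i y$ means $y\in\Anc_{F_i}(x)$) and $\to_i^*$ for the reflexive-transitive closure of $G_i$. Both relations relate only pairs with $x\le y$ in column order: $F_i$ is ordered, and arcs of $G_i$ go upward. We argue by induction on $i$. The base case $i=0$ is trivial, since $F_0$ is edgeless and $G_0$ has no arcs, so both relations are equality.

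\emph{Ancestor relation is contained in reachability.} It suffices to show that every parent edge $x\to y$ of $F_i$ is an arc of $G_i$, because $\le_i$ is the reflexive-transitive closure of the parent edges. This is exactly \Cref{lem:original-witness}. It supplies a witness row $h\le i$ with $M(h,x)=M(h,y)=1$, and $x<y$ holds because $F_i$ is ordered. Hence $x\to y\in G_i$.

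\emph{Reachability is contained in the ancestor relation.} It suffices to show that every arc $x\to y$ of $G_i$ satisfies $x\le_i y$, since $\le_i$ is reflexive and transitive. If the witnessing row is $h<i$, the arc lies in $G_{i-1}$. The induction hypothesis gives $x\le_{i-1}y$, and \Cref{lem:ancestor-permanence} carries this to $x\le_i y$. If the witnessing row is $h=i$, then $x,y\in S_i$. By \Cref{cor:direct-exactness}, $S_i=A_i=\Anc_{F_i}(s_i)$, and this set is a single increasing root path. Since $x<y$ both lie on it, $y$ is an ancestor of $x$ in $F_i$. Combining the two inclusions gives equality of $\le_i$ and $\to_i^*$.

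\emph{The equivalent formulation.} In a forest, the parent edges form the transitive reduction of the ancestor partial order, because the parent of a node is its unique lowest proper ancestor (as used in \Cref{lem:versions}). The reachability relation of $G_i$ is a partial order: $G_i$ is acyclic, since all its arcs increase the column index. Its transitive reduction is therefore unique, and by the equality just proved it coincides with the edge set of $F_i$.

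The only step needing care is the case $h=i$ of the second inclusion. It requires that the whole of $S_i$, and not merely $A_i$, forms one chain in $F_i$, and this relies on $\Gamma$-freeness through \Cref{cor:direct-exactness}. Once that corollary is invoked, the remaining steps are bookkeeping.
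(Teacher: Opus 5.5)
Your proof is correct and takes essentially the paper's route: \Cref{lem:original-witness} shows every ancestor relation is a reachability relation, and \Cref{cor:direct-exactness} together with \Cref{lem:ancestor-permanence} gives the converse. The only cosmetic difference is that the paper does not induct on $i$; it applies \Cref{cor:direct-exactness} directly at the witnessing row $h$ and then carries the ancestor relation from $F_h$ to $F_i$ by ancestor monotonicity.
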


\begin{proof}
By \Cref{lem:original-witness}, every parent edge
$x\to y$ of $F_i$ has a witness row $h\le i$
containing both $x$ and $y$. Hence every ancestor
relation in $F_i$ is a reachability relation in $G_i$.

Conversely, suppose $x\to y$ is an arc of $G_i$,
witnessed by a row $h\le i$.
By \Cref{cor:direct-exactness}, the support of
row $h$ is a complete increasing root path in
$F_h$. Since $x<y$, the node $y$ is an ancestor
of $x$ in $F_h$. By
\Cref{lem:ancestor-permanence}, this relation
also holds in $F_i$.

Thus, the two reflexive-transitive relations
coincide. Since a forest is the transitive
reduction of its ancestor relation, the
proposition follows.
\end{proof}

\section{An Example of the Labeling Scheme}
\label{app:complete-example}

We illustrate the forest construction, the labels, and the
entry decoder on one matrix, then compute the $\Gamma$-free
closure of a modified matrix. The displayed records use
the version numbers of \Cref{lem:versions}; we omit
fixed-format padding and the predecessor structures of
\Cref{sec:constant-time}.

\subsection{The matrix and its forest history}

Consider a matrix with $p=9$ rows and $q=8$ columns,
with row supports $S_i$ given by
\[
\begin{array}{c|cccccccc|l}
 &1&2&3&4&5&6&7&8& S_i\\ \hline
1&1&0&1&0&0&0&1&0&\{1,3,7\}\\
2&0&1&0&1&0&0&0&1&\{2,4,8\}\\
3&0&0&0&0&1&1&0&0&\{5,6\}\\
4&1&1&1&1&0&0&1&1&\{1,2,3,4,7,8\}\\
5&0&0&1&1&0&0&1&1&\{3,4,7,8\}\\
6&1&1&1&1&1&1&1&1&\{1,2,3,4,5,6,7,8\}\\
7&0&0&0&1&1&1&1&1&\{4,5,6,7,8\}\\
8&0&0&0&0&0&0&0&0&\varnothing\\
9&0&1&1&1&1&1&1&1&\{2,3,4,5,6,7,8\}
\end{array}
\tag{B.1}\label{eq:complete-matrix}
\]
In the following table, $x\to y$ means that $y$ is
the parent of $x$; singleton components are omitted.
The canonical updates give
\[
\begin{array}{c|l|l}
t&A_t=S_t\ (t\ge1)&\text{nontrivial components of }F_t\\ \hline
0&\text{--}&\text{none}\\
1&\{1,3,7\}&1\to3\to7\\
2&\{2,4,8\}&1\to3\to7,\quad2\to4\to8\\
3&\{5,6\}&1\to3\to7,\quad2\to4\to8,\quad5\to6\\
4&\{1,2,3,4,7,8\}&1\to2\to3\to4\to7\to8,\quad5\to6\\
5&\{3,4,7,8\}&F_5=F_4\\
6&\{1,2,3,4,5,6,7,8\}&1\to2\to3\to4\to5\to6\to7\to8\\
7&\{4,5,6,7,8\}&F_7=F_6\\
8&\varnothing&F_8=F_7\\
9&\{2,3,4,5,6,7,8\}&F_9=F_8.
\end{array}
\tag{B.2}\label{eq:complete-forests}
\]
Rows~$1$, $2$, and $3$ create three components.
Row~$4$ merges the first two, and row~$6$ merges the
remaining component into the same path. The supports
of rows~$5$, $7$, and $9$ are already complete root
paths, so these rows do not change the forest; row~$8$
is empty. For every $t=1,\ldots,9$, the pre-update
ancestor union is $A_t=S_t$. The distinct forests
are shown in \Cref{fig:history-visual}. By
\Cref{thm:intro-characterization}, the matrix is
$\Gamma$-free.

\begin{figure}[!htb]
\centering
\begin{tikzpicture}[
x=1.22cm,y=0.80cm,
ve/.append style={shorten >=0.5pt,shorten <=0.5pt},
vd/.append style={dash pattern=on 1.5pt off 1pt}]
\node[note,anchor=west] at (-0.3,0.8) {\textbf{$F_0$}};
\node[vf,fill=white] (h0x1) at (0.0,0.0) {$1$};
\node[vf,fill=white] (h0x2) at (0.75,0.0) {$2$};
\node[vf,fill=white] (h0x3) at (1.5,0.0) {$3$};
\node[vf,fill=white] (h0x4) at (2.25,0.0) {$4$};
\node[vf,fill=white] (h0x5) at (3.0,0.0) {$5$};
\node[vf,fill=white] (h0x6) at (3.75,0.0) {$6$};
\node[vf,fill=white] (h0x7) at (4.5,0.0) {$7$};
\node[vf,fill=white] (h0x8) at (5.25,0.0) {$8$};
\node[note,anchor=west] at (6.7,0.8) {\textbf{$F_1$}};
\node[vf,fill=white] (h1x1) at (7.0,0.0) {$1$};
\node[vf,fill=blue!15] (h1x3) at (7.75,0.0) {$3$};
\node[vf,fill=blue!15] (h1x7) at (8.5,0.0) {$7$};
\draw[vd] (h1x1)--(h1x3);
\draw[ve] (h1x3)--(h1x7);
\node[vf,fill=white] (h1x2) at (7.0,-0.85) {$2$};
\node[vf,fill=white] (h1x4) at (7.75,-0.85) {$4$};
\node[vf,fill=white] (h1x5) at (8.5,-0.85) {$5$};
\node[vf,fill=white] (h1x6) at (9.25,-0.85) {$6$};
\node[vf,fill=white] (h1x8) at (10.0,-0.85) {$8$};

\node[note,anchor=west] at (-0.3,-2.7) {\textbf{$F_2$}};
\node[vf,fill=white] (h2x1) at (0.0,-3.5) {$1$};
\node[vf,fill=blue!15] (h2x3) at (0.75,-3.5) {$3$};
\node[vf,fill=blue!15] (h2x7) at (1.5,-3.5) {$7$};
\draw[vd] (h2x1)--(h2x3);
\draw[ve] (h2x3)--(h2x7);
\node[vf,fill=white] (h2x2) at (0.0,-4.35) {$2$};
\node[vf,fill=blue!15] (h2x4) at (0.75,-4.35) {$4$};
\node[vf,fill=blue!15] (h2x8) at (1.5,-4.35) {$8$};
\draw[vd] (h2x2)--(h2x4);
\draw[ve] (h2x4)--(h2x8);
\node[vf,fill=white] (h2x5) at (0.0,-5.2) {$5$};
\node[vf,fill=white] (h2x6) at (0.75,-5.2) {$6$};
\node[note,anchor=west] at (6.7,-2.7) {\textbf{$F_3$}};
\node[vf,fill=white] (t3x1) at (7.0,-3.5) {$1$};
\node[vf,fill=blue!15] (t3x3) at (7.75,-3.5) {$3$};
\node[vf,fill=blue!15] (t3x7) at (8.5,-3.5) {$7$};
\draw[vd] (t3x1)--(t3x3);
\draw[ve] (t3x3)--(t3x7);
\node[vf,fill=white] (t3x2) at (7.0,-4.35) {$2$};
\node[vf,fill=blue!15] (t3x4) at (7.75,-4.35) {$4$};
\node[vf,fill=blue!15] (t3x8) at (8.5,-4.35) {$8$};
\draw[vd] (t3x2)--(t3x4);
\draw[ve] (t3x4)--(t3x8);
\node[vf,fill=white] (t3x5) at (7.0,-5.2) {$5$};
\node[vf,fill=blue!15] (t3x6) at (7.75,-5.2) {$6$};
\draw[vd] (t3x5)--(t3x6);

\node[note,anchor=west] at (-0.3,-6.2) {\textbf{$F_4$}};
\node[vf,fill=white] (t4x1) at (0.0,-7.0) {$1$};
\node[vf,fill=blue!15] (t4x2) at (0.75,-7.0) {$2$};
\node[vf,fill=blue!15] (t4x3) at (1.5,-7.0) {$3$};
\node[vf,fill=green!18] (t4x4) at (2.25,-7.0) {$4$};
\node[vf,fill=green!18] (t4x7) at (3.0,-7.0) {$7$};
\node[vf,fill=green!18] (t4x8) at (3.75,-7.0) {$8$};
\draw[vd] (t4x1)--(t4x2);
\draw[ve] (t4x2)--(t4x3);
\draw[vd] (t4x3)--(t4x4);
\draw[ve] (t4x4)--(t4x7);
\draw[ve] (t4x7)--(t4x8);
\node[vf,fill=white] (t4x5) at (0.0,-7.85) {$5$};
\node[vf,fill=blue!15] (t4x6) at (0.75,-7.85) {$6$};
\draw[vd] (t4x5)--(t4x6);

\node[note,anchor=west] at (6.7,-6.2) {\textbf{$F_6$}};
\node[vf,fill=white] (t6x1) at (7.0,-7.0) {$1$};
\node[vf,fill=blue!15] (t6x2) at (7.75,-7.0) {$2$};
\node[vf,fill=blue!15] (t6x3) at (8.5,-7.0) {$3$};
\node[vf,fill=green!18] (t6x4) at (9.25,-7.0) {$4$};
\node[vf,fill=green!18] (t6x5) at (10.0,-7.0) {$5$};
\node[vf,fill=green!18] (t6x6) at (10.75,-7.0) {$6$};
\node[vf,fill=green!18] (t6x7) at (11.5,-7.0) {$7$};
\node[vf,fill=orange!25] (t6x8) at (12.25,-7.0) {$8$};
\draw[vd] (t6x1)--(t6x2);
\draw[ve] (t6x2)--(t6x3);
\draw[vd] (t6x3)--(t6x4);
\draw[ve] (t6x4)--(t6x5);
\draw[ve] (t6x5)--(t6x6);
\draw[ve] (t6x6)--(t6x7);
\draw[vd] (t6x7)--(t6x8);
\node[note] at (6,-10)
{$F_5=F_4$;\qquad $F_7=F_8=F_9=F_6$.\\
Rank colours: white $0$, blue $1$, green $2$, orange $3$.};
\end{tikzpicture}
\caption{The forest history of \eqref{eq:complete-matrix}.
All eight vertices are shown in every panel.
Solid arrows join equal ranks; dashed arrows cross ranks.
Rows $1,2,3$ create components, rows $4,6$ merge them,
and the remaining rows leave the forest unchanged.
The row supports and snapshots are given in
\eqref{eq:complete-forests} and \eqref{eq:complete-snapshots}.}
\label{fig:history-visual}
\end{figure}

\subsection{Ranks and path identifiers}

Here $L=\lfloor\log 8\rfloor=3$. Initially, the eight
singleton paths have identifiers $1,\ldots,8$.
At later times, identifiers follow the inheritance
rule of \Cref{thm:stable-ids}. The ranks and
identifiers, listed in column order, are
\[
\begin{array}{c|c|c}
t&(\rankv_t(1),\ldots,\rankv_t(8))
 &(\sid_t(1),\ldots,\sid_t(8))\\ \hline
0&(0,0,0,0,0,0,0,0)&(1,2,3,4,5,6,7,8)\\
1&(0,0,1,0,0,0,1,0)&(1,2,9,4,5,6,9,8)\\
2&(0,0,1,1,0,0,1,1)&(1,2,9,10,5,6,9,10)\\
3&(0,0,1,1,0,1,1,1)&(1,2,9,10,5,11,9,10)\\
4&(0,1,1,2,0,1,2,2)&(1,9,9,12,5,11,12,12)\\
5&(0,1,1,2,0,1,2,2)&(1,9,9,12,5,11,12,12)\\
6&(0,1,1,2,2,2,2,3)&(1,9,9,12,12,12,12,13)\\
7,8,9&(0,1,1,2,2,2,2,3)&(1,9,9,12,12,12,12,13).
\end{array}
\tag{B.3}\label{eq:complete-states}
\]
Identifiers $9$, $10$, and $11$ first appear at times
$1$, $2$, and $3$, respectively, and identifiers $12$
and $13$ at times $4$ and $6$. An identifier changes
only when the node's rank increases. The merges in
\Cref{fig:merge-detail} also show that nodes retaining
their ranks keep their identifiers when their solid
paths change.

The forest changes only at times $1,2,3,4,6$,
so the stored version numbers are
\[
\begin{array}{c|rrrrrrrrrr}
t&0&1&2&3&4&5&6&7&8&9\\\hline
\nu(t)&0&1&2&3&4&4&5&5&5&5
\end{array}
\tag{B.4}\label{eq:example-versions}
\]

The column histories, with unchanged states omitted,
are
\[
\begin{array}{c|l@{\qquad}c|l}
x&\mathcal H(x)&x&\mathcal H(x)\\ \hline
1&(0,0,1)&5&(0,0,5),(5,2,12)\\
2&(0,0,2),(4,1,9)&6&(0,0,6),(3,1,11),(5,2,12)\\
3&(0,0,3),(1,1,9)&7&(0,0,7),(1,1,9),(4,2,12)\\
4&(0,0,4),(2,1,10),(4,2,12)&8&(0,0,8),(2,1,10),(4,2,12),(5,3,13).
\end{array}
\tag{B.5}\label{eq:complete-histories}
\]
Each triple records a version, rank, and identifier.
Column~$8$, for instance, changes at row times
$2,4,6$, which correspond to stored versions
$2,4,5$. Its four records attain the $L+1$ bound
of \Cref{cor:short-histories}.

\begin{figure}[!htb]
\centering
\begin{tikzpicture}[x=1cm,y=1cm]
\node[note,anchor=west] at (-.4,1) {\textbf{Time 3}};
\node[vf,fill=white] (t3x1) at (0.0,0.0) {$1$};
\node[font=\scriptsize,align=center] at (0.0,-0.6)
{$s=1,r=0$\\$\mathrm{id}=1$};
\node[vf,fill=blue!15] (t3x3) at (1.7,0.0) {$3$};
\node[font=\scriptsize,align=center] at (1.7,-0.6)
{$s=2,r=1$\\$\mathrm{id}=9$};
\node[vf,fill=blue!15] (t3x7) at (3.4,0.0) {$7$};
\node[font=\scriptsize,align=center] at (3.4,-0.6)
{$s=3,r=1$\\$\mathrm{id}=9$};
\draw[vd] (t3x1)--(t3x3);
\draw[ve] (t3x3)--(t3x7);
\node[vf,fill=white] (t3x2) at (0.0,-1.7) {$2$};
\node[font=\scriptsize,align=center] at (0.0,-2.3)
{$s=1,r=0$\\$\mathrm{id}=2$};
\node[vf,fill=blue!15] (t3x4) at (1.7,-1.7) {$4$};
\node[font=\scriptsize,align=center] at (1.7,-2.3)
{$s=2,r=1$\\$\mathrm{id}=10$};
\node[vf,fill=blue!15] (t3x8) at (3.4,-1.7) {$8$};
\node[font=\scriptsize,align=center] at (3.4,-2.3)
{$s=3,r=1$\\$\mathrm{id}=10$};
\draw[vd] (t3x2)--(t3x4);
\draw[ve] (t3x4)--(t3x8);
\node[vf,fill=white] (t3x5) at (0.0,-3.4) {$5$};
\node[font=\scriptsize,align=center] at (0.0,-4.0)
{$s=1,r=0$\\$\mathrm{id}=5$};
\node[vf,fill=blue!15] (t3x6) at (1.7,-3.4) {$6$};
\node[font=\scriptsize,align=center] at (1.7,-4.0)
{$s=2,r=1$\\$\mathrm{id}=11$};
\draw[vd] (t3x5)--(t3x6);

\node[note,anchor=west] at (-.4,-5) {\textbf{Time 4}};
\node[vf,fill=white] (t4x1) at (0.0,-6) {$1$};
\node[font=\scriptsize,align=center] at (0.0,-6.6)
{$s=1,r=0$\\$\mathrm{id}=1$};
\node[vf,fill=blue!15] (t4x2) at (1.7,-6) {$2$};
\node[font=\scriptsize,align=center] at (1.7,-6.6)
{$s=2,r=1$\\$\mathrm{id}=9$};
\node[vf,fill=blue!15] (t4x3) at (3.4,-6) {$3$};
\node[font=\scriptsize,align=center] at (3.4,-6.6)
{$s=3,r=1$\\$\mathrm{id}=9$};
\node[vf,fill=green!18] (t4x4) at (5.1,-6) {$4$};
\node[font=\scriptsize,align=center] at (5.1,-6.6)
{$s=4,r=2$\\$\mathrm{id}=12$};
\node[vf,fill=green!18] (t4x7) at (6.8,-6) {$7$};
\node[font=\scriptsize,align=center] at (6.8,-6.6)
{$s=5,r=2$\\$\mathrm{id}=12$};
\node[vf,fill=green!18] (t4x8) at (8.5,-6) {$8$};
\node[font=\scriptsize,align=center] at (8.5,-6.6)
{$s=6,r=2$\\$\mathrm{id}=12$};
\draw[vd] (t4x1)--(t4x2);
\draw[ve] (t4x2)--(t4x3);
\draw[vd] (t4x3)--(t4x4);
\draw[ve] (t4x4)--(t4x7);
\draw[ve] (t4x7)--(t4x8);
\node[vf,fill=white] (t4x5) at (0.0,-7.7) {$5$};
\node[font=\scriptsize,align=center] at (0.0,-8.3)
{$s=1,r=0$\\$\mathrm{id}=5$};
\node[vf,fill=blue!15] (t4x6) at (1.7,-7.7) {$6$};
\node[font=\scriptsize,align=center] at (1.7,-8.3)
{$s=2,r=1$\\$\mathrm{id}=11$};
\draw[vd] (t4x5)--(t4x6);

\node[note,anchor=west] at (-.4,-9.5) {\textbf{Time 6}};
\node[vf,fill=white] (t6x1) at (0.0,-10.5) {$1$};
\node[font=\scriptsize,align=center] at (0.0,-11.1)
{$s=1,r=0$\\$\mathrm{id}=1$};
\node[vf,fill=blue!15] (t6x2) at (1.7,-10.5) {$2$};
\node[font=\scriptsize,align=center] at (1.7,-11.1)
{$s=2,r=1$\\$\mathrm{id}=9$};
\node[vf,fill=blue!15] (t6x3) at (3.4,-10.5) {$3$};
\node[font=\scriptsize,align=center] at (3.4,-11.1)
{$s=3,r=1$\\$\mathrm{id}=9$};
\node[vf,fill=green!18] (t6x4) at (5.1,-10.5) {$4$};
\node[font=\scriptsize,align=center] at (5.1,-11.1)
{$s=4,r=2$\\$\mathrm{id}=12$};
\node[vf,fill=green!18] (t6x5) at (6.8,-10.5) {$5$};
\node[font=\scriptsize,align=center] at (6.8,-11.1)
{$s=5,r=2$\\$\mathrm{id}=12$};
\node[vf,fill=green!18] (t6x6) at (8.5,-10.5) {$6$};
\node[font=\scriptsize,align=center] at (8.5,-11.1)
{$s=6,r=2$\\$\mathrm{id}=12$};
\node[vf,fill=green!18] (t6x7) at (10.2,-10.5) {$7$};
\node[font=\scriptsize,align=center] at (10.2,-11.1)
{$s=7,r=2$\\$\mathrm{id}=12$};
\node[vf,fill=orange!25] (t6x8) at (11.9,-10.5) {$8$};
\node[font=\scriptsize,align=center] at (11.9,-11.1)
{$s=8,r=3$\\$\mathrm{id}=13$};
\draw[vd] (t6x1)--(t6x2);
\draw[ve] (t6x2)--(t6x3);
\draw[vd] (t6x3)--(t6x4);
\draw[ve] (t6x4)--(t6x5);
\draw[ve] (t6x5)--(t6x6);
\draw[ve] (t6x6)--(t6x7);
\draw[vd] (t6x7)--(t6x8);
\end{tikzpicture}
\caption{The two merges $F_3\to F_4$ and
$F_5=F_4\to F_6$. Each vertex carries
its subtree size $s$, rank $r=\lfloor\log s\rfloor$,
and identifier. Vertex $3$ carries identifier $9$
into $2\to3$; vertices $4,7$ carry identifier $12$
into $4\to5\to6\to7$. Vertex $8$ reaches
rank $3$ and receives identifier $13$.}
\label{fig:merge-detail}
\end{figure}

\FloatBarrier
\subsection{Row snapshots}

For absent ranks, set $H_i[r]=0$ and $B_i[r]=q+1=9$.
The row paths give the following arrays, indexed
by ranks $0,1,2,3$:
\[
\begin{array}{c|c|c}
i&(H_i[0],H_i[1],H_i[2],H_i[3])
 &(B_i[0],B_i[1],B_i[2],B_i[3])\\ \hline
1&(1,9,0,0)&(1,3,9,9)\\
2&(2,10,0,0)&(2,4,9,9)\\
3&(5,11,0,0)&(5,6,9,9)\\
4&(1,9,12,0)&(1,2,4,9)\\
5&(0,9,12,0)&(9,3,4,9)\\
6&(1,9,12,13)&(1,2,4,8)\\
7&(0,0,12,13)&(9,9,4,8)\\
8&(0,0,0,0)&(9,9,9,9)\\
9&(0,9,12,13)&(9,2,4,8).
\end{array}
\tag{B.6}\label{eq:complete-snapshots}
\]
For example, row~$5$ has path $P_5=3\to4\to7\to8$.
Its rank-$1$ part begins at node~$3$ of the
identifier-$9$ path $2\to3$, while its rank-$2$
part is the entire identifier-$12$ path $4\to7\to8$.
Thus $H_5[1]=9$, $B_5[1]=3$, $H_5[2]=12$,
and $B_5[2]=4$; the other ranks use sentinels.
The boundary matters here: column~$2$ has the
same identifier~$9$ but is not in row~$5$.

\subsection{Updates and stored records}
\label{sec:example-ledger}

At time~$0$, initialize $\mathcal H(x)=((0,0,x))$
for every $x\in[8]$. For each row~$i$, the table
lists the forest update, the column-history appends,
and the new row snapshot. The appended records use
version $\nu(i)$ from \eqref{eq:example-versions},
not the row index $i$.
For example, $x\in\{3,7\}:(1,1,9)$ means that
$(1,1,9)$ is appended to both column histories.
Unlisted histories remain unchanged, and absent
ranks use $(H_i[r],B_i[r])=(0,9)$.
Each row stores $\nu(i)$ and $H_i,B_i$; each column
stores its index and history, as described in
\Cref{sec:labels}.

\begin{table}[!htbp]
\centering
\small
\renewcommand{\arraystretch}{1.10}
\begin{tabularx}{\linewidth}
{@{}p{.035\linewidth}X p{.36\linewidth}@{}}
\toprule
$i$ & Forest update and column-history appends
& Row snapshot\\
\midrule
1 & Create $1\to3\to7$; other vertices stay isolated.\newline
Append $x\in\{3,7\}:(1,1,9)$.
& $H_1=(1,9,0,0)$\newline
$B_1=(1,3,9,9)$\\[3pt]
2 & Create $2\to4\to8$ beside the first path.\newline
Append $x\in\{4,8\}:(2,1,10)$.
& $H_2=(2,10,0,0)$\newline
$B_2=(2,4,9,9)$\\[3pt]
3 & Create $5\to6$ as the third component.\newline
Append $x=6:(3,1,11)$.
& $H_3=(5,11,0,0)$\newline
$B_3=(5,6,9,9)$\\[3pt]
4 & Merge the first two components into
$1\to2\to3\to4\to7\to8$; retain $5\to6$.\newline
Append $x=2:(4,1,9)$ and\newline
$x\in\{4,7,8\}:(4,2,12)$.
& $H_4=(1,9,12,0)$\newline
$B_4=(1,2,4,9)$\\[3pt]
\midrule
5 & Forest unchanged; select $3\to4\to7\to8$.\newline
No column-history append.
& $H_5=(0,9,12,0)$\newline
$B_5=(9,3,4,9)$\\[3pt]
6 & Merge into the full chain
$1\to2\to3\to4\to5\to6\to7\to8$.\newline
Append $x\in\{5,6\}:(5,2,12)$ and\newline
$x=8:(5,3,13)$.
& $H_6=(1,9,12,13)$\newline
$B_6=(1,2,4,8)$\\[3pt]
7 & Forest unchanged; select $4\to5\to6\to7\to8$.\newline
No column-history append.
& $H_7=(0,0,12,13)$\newline
$B_7=(9,9,4,8)$\\[3pt]
8 & Empty support; forest unchanged.\newline
No column-history append.
& $H_8=(0,0,0,0)$\newline
$B_8=(9,9,9,9)$\\[3pt]
9 & Forest unchanged; select
$2\to3\to4\to5\to6\to7\to8$.\newline
No column-history append.
& $H_9=(0,9,12,13)$\newline
$B_9=(9,2,4,8)$\\
\bottomrule
\end{tabularx}
\caption{Forest updates, column-history appends, and row
snapshots for all nine rows. Each listed record is
appended to the corresponding column history.}
\label{tab:ledger}
\label{tab:ledger-early}
\label{tab:ledger-late}
\end{table}

At the first merge, at time~$4$, column~$3$ keeps
rank~$1$ and identifier~$9$, although its solid path
changes from $3\to7$ to $2\to3$. It therefore needs
no new history record. Only columns~$2$, $4$, $7$,
and $8$ append records, as listed in
\Cref{tab:ledger}. The new row snapshot uses
$B_4[1]=2$.

At time~$5$, no column state changes, but row~$5$
begins at node~$3$ rather than~$2$ on the solid path
with identifier~$9$. Its boundary $B_5[1]=3$ excludes
column~$2$, even though the identifiers match.

At time~$6$, nodes~$4$ and $7$ retain rank~$2$ and
identifier~$12$ as nodes~$5$ and $6$ join their solid
path. Columns~$4$ and $7$ require no new records;
columns~$5$ and $6$ append their new rank-$2$ states.
Column~$5$ moves directly from rank~$0$ to rank~$2$,
without a rank-$1$ record. Column~$8$ moves to
rank~$3$ and appends its fourth record.

The histories in \eqref{eq:complete-histories}
contain eight initial records and twelve appends.

\clearpage
\subsection{Entry queries}

The following queries use the decoder of
\Cref{sec:labels}.
\Cref{fig:labels-visual} shows the row-$5$
label fields and three example queries.
For a query $(i,x)$, the triple $(v;r,h)$
denotes the predecessor record selected at
version $\nu(i)$ from \eqref{eq:example-versions}.

\begin{figure}[!htb]
\centering
\begin{tikzpicture}[x=1cm,y=1cm]
\node[note] at (6,6.7)
{\textbf{Time 5: select a suffix of each visited solid path}};
\node[vf,fill=blue!15] (l2) at (1,5.7) {$2$};
\node[vf,fill=blue!15,very thick] (l3) at (3.5,5.7) {$3$};
\node[vf,fill=green!18,very thick] (l4) at (6,5.7) {$4$};
\node[vf,fill=green!18,very thick] (l7) at (8.5,5.7) {$7$};
\node[vf,fill=green!18,very thick] (l8) at (11,5.7) {$8$};
\draw[ve](l2)--node[above,font=\small]{id 9}(l3);
\draw[vd](l3)--(l4);
\draw[ve](l4)--node[above,font=\small]{id 12}(l7);
\draw[ve](l7)--(l8);
\node[note,font=\scriptsize] at (1,4.9) {excluded};
\node[note,font=\scriptsize] at (3.5,4.9) {$B_5[1]=3$};
\node[note,font=\scriptsize] at (6,4.9) {$B_5[2]=4$};
\node[boxnote,text width=12cm] at (6,3)
{\textbf{Freeze row 5 at version $\nu(5)=4$.}
Its path is $3\to4\to7\to8$.\\
Rank 1: identifier 9, entry boundary 3.\\
Rank 2: identifier 12, entry boundary 4.};
\node[boxnote,text width=12cm] at (6,1)
{$H_5=(0,9,12,0)$\qquad $B_5=(9,3,4,9)$\\
The unused ranks have sentinels; this row is nonempty.};
\node[boxnote,text width=12cm] at (6,-1.6)
{\textbf{Read only the two queried labels.}\\[4pt]
$\begin{array}{c|c|c|c|c}
\text{query}&\text{record }(v;r,h)&H_5[r]=h?
&x\ge B_5[r]?&\text{answer}\\\hline
(5,2)&(4;1,9)&\text{yes}&2<3&0\\
(5,3)&(1;1,9)&\text{yes}&3\ge3&1\\
(5,7)&(4;2,12)&\text{yes}&7\ge4&1
\end{array}$};
\node[boxnote,text width=12cm] at (6,-4.5)
{\textbf{Column 8: the maximum-length history.}\\
$\mathcal H(8)=((0,0,8),(2,1,10),(4,2,12),(5,3,13))$.\\
At row time $i=2$, predecessor selects $(2,1,10)$,
not the final record.\\
Each stored change coincides with a rank increase.};
\end{tikzpicture}
\caption{From a historical path to label fields
and answers. The version selects the active column
record, the identifier selects the solid path, and
the boundary selects its relevant suffix.
See \eqref{eq:example-versions},
\eqref{eq:complete-histories}, and
\eqref{eq:complete-snapshots}.}
\label{fig:labels-visual}
\end{figure}

\begin{enumerate}[leftmargin=2.6em]
\item $(2,8)$ selects $(2;1,10)$. Here $H_2[1]=10$
and $8\ge B_2[1]=4$, giving $M(2,8)=1$.

\item $(2,3)$ selects $(1;1,9)$, but $H_2[1]=10\ne9$.
Thus $M(2,3)=0$.

\item $(5,2)$ selects $(4;1,9)$. The identifiers
agree, but $2<B_5[1]=3$, so $M(5,2)=0$.

\item $(5,3)$ selects $(1;1,9)$. Both
$H_5[1]=9$ and $3\ge B_5[1]=3$ hold,
so $M(5,3)=1$.

\item $(5,7)$ selects $(4;2,12)$, with
$H_5[2]=12$ and $7\ge B_5[2]=4$.
Thus $M(5,7)=1$.

\item $(5,5)$ selects $(0;0,5)$ and gives
$M(5,5)=0$, since $H_5[0]=0\ne5$.
Using the later record $(5;2,12)$ would
incorrectly give $1$: $H_5[2]=12$ and
$5\ge B_5[2]=4$. The predecessor search
must therefore exclude later versions.

\item $(7,1)$ selects $(0;0,1)$. Since
$H_7[0]=0$, we have $M(7,1)=0$.

\item All queries $(8,x)$ give $0$, because
$H_8$ contains only zero identifiers.

\item $(9,8)$ selects $(5;3,13)$. We have
$H_9[3]=13$ and $8\ge B_9[3]=8$,
so $M(9,8)=1$.
\end{enumerate}

Evaluating all $9\cdot8=72$ entries from
\Cref{eq:complete-histories,eq:complete-snapshots}
gives, row by row,
\[
10100010,\ 01010001,\ 00001100,\
11110011,\ 00110011,\ 11111111,\
00011111,\ 00000000,\ 01111111,
\]
in agreement with \eqref{eq:complete-matrix}.

\subsection{Closure after deleting an entry}
\label{app:closure-example}

Delete entry $(5,8)$ from \eqref{eq:complete-matrix}
and call the resulting matrix $M^-$. Rows~$4$, $5$
and columns~$3<8$ now contain the forbidden pattern
$\bigl(\begin{smallmatrix}1&1\\1&0\end{smallmatrix}\bigr)$.
Thus every $\Gamma$-free supermatrix of $M^-$
must restore $(5,8)$.

Before row~$5$ is processed, the forest contains
$1\to2\to3\to4\to7\to8$. Hence
\[
\bigcup_{x\in\{3,4,7\}}
\Anc_{F_4}(x)=\{3,4,7,8\}.
\]
The forest update adds only $(5,8)$, recovering
the $\Gamma$-free matrix \eqref{eq:complete-matrix}.
Because every $\Gamma$-free supermatrix of $M^-$
contains this entry, the recovered matrix is the
$\Gamma$-free closure of $M^-$, as in
\Cref{thm:closure-forest}.

\begin{figure}[!htb]
\centering
\begin{tikzpicture}[x=1cm,y=1cm]
\node[boxnote] at (2.5,1)
{$\begin{array}{c|cc}
&3&8\\\hline
4&1&1\\
5&1&\color{red}{0}
\end{array}$};
\node[note] at (8.5,1)
{Delete $(5,8)$ from the example.\\
The $\Gamma$ implication forces it to be one.};
\node[boxnote,text width=12cm] at (6,-1.3)
{\textbf{The same implication in the old forest:}\\
$3\to4\to7\to8$ lies in $F_4$.\\
Row 5 after deletion selects $\{3,4,7\}$;
its ancestor union is $\{3,4,7,8\}$.};
\node[note] at (6,-3)
{The restored matrix is $\Gamma$-free,
and every completion contains $(5,8)$.};
\end{tikzpicture}
\caption{The deleted entry $(5,8)$ is forced
both by the forbidden-pattern condition
and by the canonical forest update.}
\label{fig:closure-visual}
\end{figure}

\end{document}